\let\AND\relax
\setlist[itemize]{leftmargin=8mm}
\newtheorem{theorem}{Theorem}
\newtheorem{proposition}[theorem]{Proposition}
\newtheorem{lemma}[theorem]{Lemma}
\newtheorem{definition}[theorem]{Definition}
\def \setA{\mathcal{A}}
\def \setB{\mathcal{B}}
\def \setC{\mathcal{C}}
\def \setL{\mathcal{L}}
\def \setP{\mathcal{P}}
\def \setG{\mathcal{G}}
\def \setS{\mathcal{S}}
\def \setV{\mathcal{V}}
\def \setX{\mathcal{X}}
\def \setY{\mathcal{Y}}
\def \setE{\mathcal{E}}
\def \ve{{\mathbf e}}
\def \vw{{\mathbf w}}
\def \vx{{\mathbf x}}
\def \vA{{\mathbf A}}
\DeclareMathOperator*{\argmax}{argmax} 
\DeclareMathOperator*{\argmin}{argmin} 
\title{Fairness-Aware Dense Subgraph Discovery}
\author{\name Emmanouil Kariotakis \email emmanouil.kariotakis@kuleuven.be \\
      \addr ESAT-STADIUS \\
      KU Leuven \\
      \AND
      \name Nicholas D. Sidiropoulos \email nikos@virginia.edu \\
      \addr Department of Electrical Engineering \\
      University of Virginia \\
      \AND
      \name Aritra Konar \email aritra.konar@kuleuven.be \\
      \addr ESAT-STADIUS \\
      KU Leuven }
\begin{document}

\maketitle
\begin{abstract}
Dense subgraph discovery (DSD) is a key graph mining primitive with myriad applications including finding densely connected communities which are diverse in their vertex composition. In such a context, it is desirable to extract a dense subgraph that provides fair representation of the diverse subgroups that constitute the vertex set while incurring a small loss in terms of subgraph density. Existing methods for promoting fairness in DSD have important limitations - the associated formulations are NP--hard in the worst case and they do not provide flexible notions of fairness, making it non-trivial to analyze the inherent trade-off between density and fairness. In this paper, we introduce two tractable formulations for fair DSD, each offering a different notion of fairness. Our methods provide a structured and flexible approach to incorporate fairness, accommodating varying fairness levels. We introduce the fairness-induced {\em relative} loss in subgraph density as a {\em price of fairness} measure to quantify the associated trade-off. We are the first to study such a notion in the context of detecting fair dense subgraphs. Extensive experiments on real-world datasets demonstrate that our methods not only match but frequently outperform existing solutions, sometimes incurring even less than half the subgraph density loss compared to prior art, while achieving the target fairness levels. Importantly, they excel in scenarios that previous methods fail to adequately handle, i.e., those with extreme subgroup imbalances, highlighting their effectiveness in extracting fair and dense solutions.
\end{abstract}

\begin{figure}[ht!]
\begin{center}
\subfigure{\includegraphics[width=.27\linewidth]{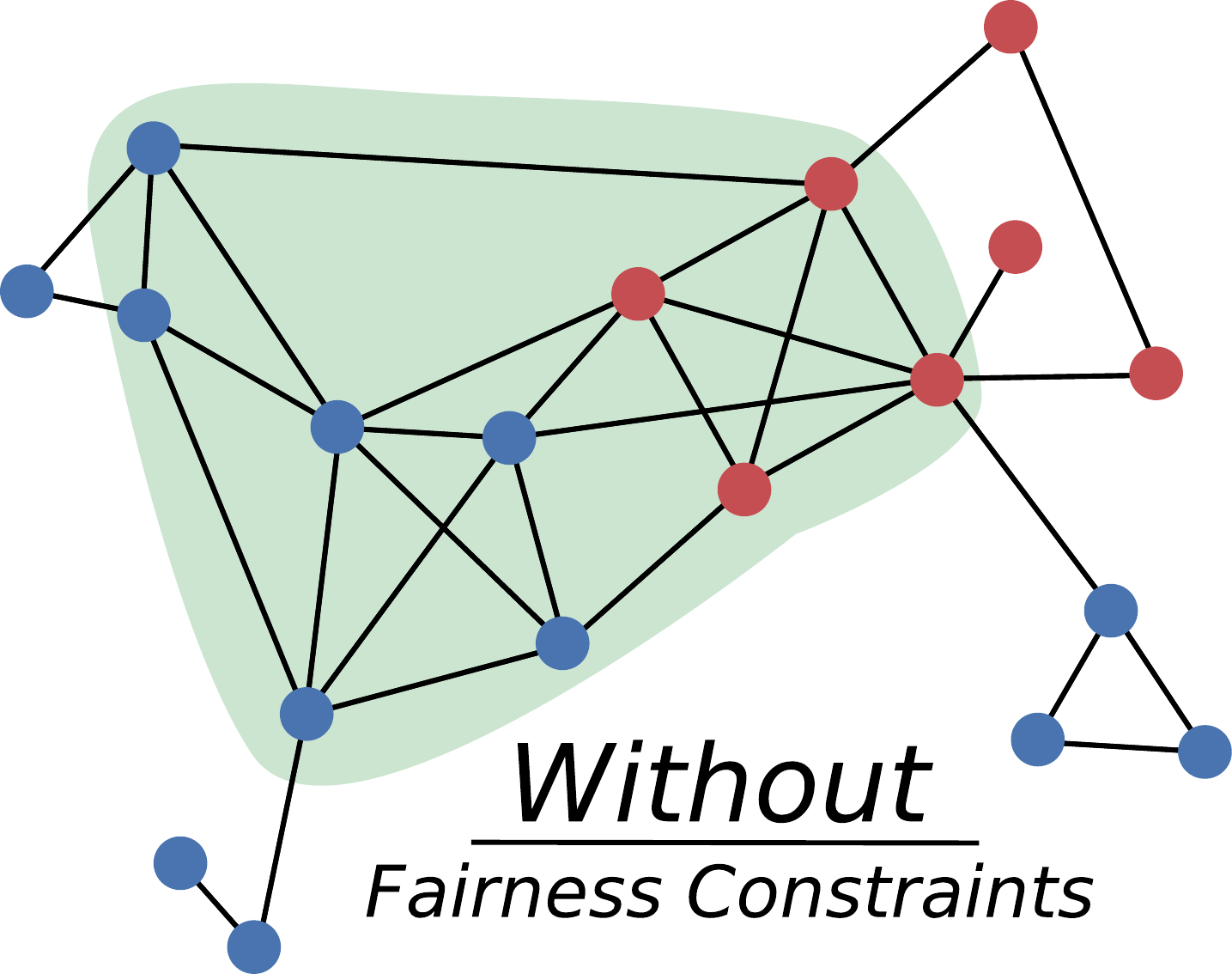}}
\quad
\subfigure{\includegraphics[width=.27\linewidth]{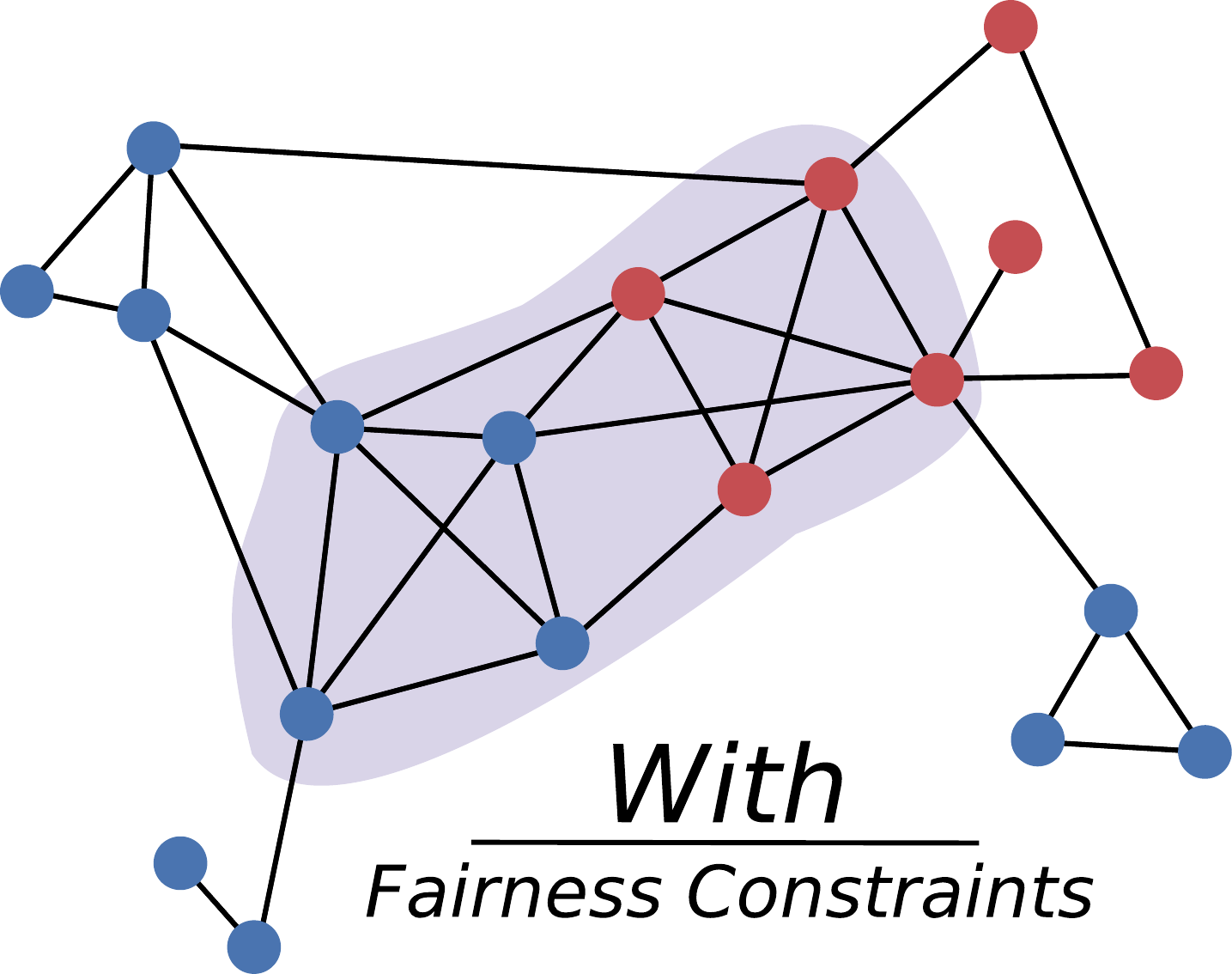}}
\caption{A toy example of Fair DSG with $2$ vertex groups; protected (red) and unprotected (blue). Left: Densest subgraph  \textit{without} fairness constraints; density $=2$. Right: Densest subgraph \textit{with} fairness constraints (equal number of red and blue vertices); density $=1.875$.}
\label{fig:story}
\end{center}
\end{figure}

\section{Introduction}
\label{sec:introduction}


Ensuring that data-driven algorithms do not disproportionately impact different population subgroups is a key challenge in human-centered applications of artificial intelligence. The field of \textit{algorithmic fairness} \citep{dwork2012fairness, feldman2015certifying,kleinberg2018algorithmic, holstein2019improving} focuses on designing
fairness-enhancing 
mechanisms to mitigate algorithmic bias. Fairness considerations have been formulated as optimization problems in both supervised and unsupervised learning 
(see the surveys \citet{friedler2019comparative, chouldechova2020snapshot} and references therein). 
In contrast, it remains a nascent area of research in graph mining, where fairness considerations are also important \citep{dong2023fairness}. To date, only few graph mining problems have been analyzed under the lens of algorithmic fairness.

Dense subgraph discovery (DSD) is one key graph mining primitive that aims to extract subgraphs with high internal connectivity from a given graph (see the survey \citet{lanciano2024survey} and references therein). DSD finds widespread applications ranging from mining social media trends \citep{angel2012dense}, detecting patterns in gene annotation graphs \citep{khuller2009finding}, and spotting fraud in e-commerce and financial networks \citep{hooi2016fraudar,li2020flowscope,chen2022antibenford}. A popular formulation used for extracting dense subgraphs maximizes the average induced degree \citep{goldberg1984finding}. This is because the problem can be solved exactly using maximum-flows, or approximated efficiently at scale using greedy vertex-peeling algorithms \citep{charikar2000greedy,boob2020flowless,chekuri2022densest}, or convex optimization algorithms \citep{harb2023convergence,nguyen2024multiplicative,harb2022faster}, with guarantees on the sub-optimality of the solution. 

In this paper, we study the problem of extracting the densest subgraph from a given graph that meets a target fairness criterion.
Such considerations can arise in DSD when the vertices of the graph (e.g., corresponding to a web or social network) 
are annotated with sensitive information pertaining to an individual's gender/race/religion/political leaning,
etc., which partitions the vertex set into different population subgroups. 
The fair dense subgraph problem then corresponds to locating a dense subgraph with adequate levels of representation of each subgroup in the extracted subset. Simply applying a pre-existing algorithm for DSD can fail in this regard.
Indeed, studies on real-world graphs \citep{anagnostopoulos2020spectral, anagnostopoulos2018fair} have revealed that existing methods return subgraphs which typically exhibit strong homophily, with little to no diversity in the composition of the vertex attributes. 
Motivating examples of fairness in DSD include selection of diverse teams from a social network of professional contacts \citep{marcolino2013multi,rangapuram2013towards}, and recommending balanced content to social media users that spans the various views of the individuals comprising the network, thus mitigating polarization \citep{musco2018minimizing}.

While DSD is a well-studied topic, extracting \emph{fair} dense subgraphs has received limited attention. 
Although the recent works of \citet{anagnostopoulos2020spectral, anagnostopoulos2018fair, miyauchi2023densest} have initiated progress in this area, they have some inherent limitations. Notably, their formulations are NP--hard in the worst-case, necessitating approximation algorithms.
These methods also struggle with rigid or hard-to-set fairness criteria. Another key issue is the inherent trade-off between subgraph density and fairness, which is difficult to analyze and has not been formally explored due to the NP-hardness of existing approaches (see Section \ref{sec:related} for details).

\begin{figure}[bt!]
\begin{center}
\centerline{\includegraphics[width=.45\linewidth]{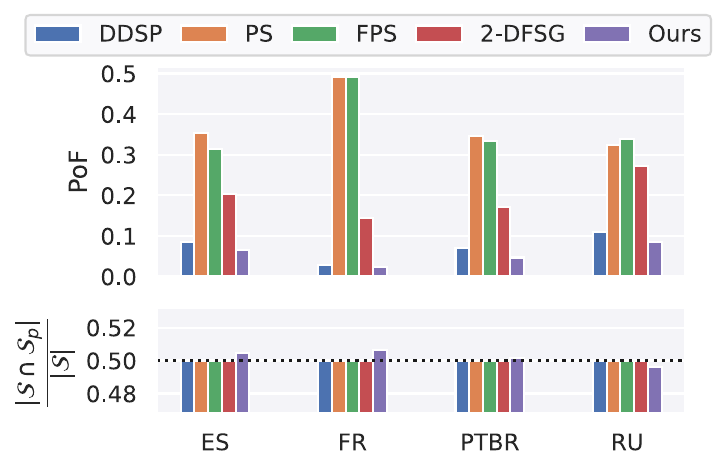}}
\caption{Comparing perfectly balanced fair subgraphs obtained via prior art and FADSG-I (ours, purple) on 4 different \textsc{Twitch} datasets. (Top): Price of fairness (the lower, the better). (Bottom): Fraction of protected vertices in induced subgraph (set to $50\%$ for perfect balance).}
\label{fig:compare:twitch}
\end{center}
\end{figure}

\noindent \textbf{Contributions:}
In this paper, we focus on the setting where the vertices of the graph are divided in two subgroups, protected and unprotected. Our objective is to overcome the limitations of prior work by \textbf{(i)} proposing new, tractable (polynomial-time solvable) formulations which are \textbf{(ii)} capable of generating a spectrum of fair dense subgraphs with varying levels of fairness. This approach enables users to analyze the trade-off between subgraph density and fair representation of the protected group,
and quantify the \emph{price of fairness}, i.e., the density loss required to meet a target fairness level. We point out that these features are absent in the formulations of \citet{anagnostopoulos2020spectral, anagnostopoulos2018fair, miyauchi2023densest}.
Our key technical idea 
is to promote fairness in DSD through a pair of judiciously chosen regularizers that
create tractable problems and establish regularization paths representing various fairness levels. 

To summarize: 

\begin{itemize}[topsep=0pt]
  \setlength\itemsep{-0.1em}
    \item We introduce two \emph{tractable} formulations for fair DSD that are capable of accommodating \emph{variable fairness levels}. This enables flexible selection across a spectrum of target fairness levels, enhancing the applicability of the formulations.
    \item We analyze the inherent trade-off between subgraph density and target fairness for a difficult example using the \emph{price of fairness} measure. Our results indicate that enhancing fairness can significantly reduce density, regardless of the algorithm used.
    \item Through extensive experiments on diverse datasets, we demonstrate \emph{superior performance} (see Figure~\ref{fig:compare:twitch}) and \emph{practical utility} of our formulations over existing approaches.
\end{itemize}

\noindent \textbf{Preliminaries and Notation:} 
Consider a simple, undirected graph $\setG = (\setV,\setE)$ on $n:=|\setV|$ vertices and $m:=|\setE|$ edges. Let $\vw:\setE \rightarrow \mathbb{R}^m_{++}$ denote a set of positive weights defined on the edges of $\setG$; i.e., each edge $ \{i,j\} \in \setE$ is assigned a positive weight $w_{\{i,j\}}$.
If each edge has unit weight, the graph $\setG$ is said to be \emph{unweighted}. Let $\vA$ denote the $n \times n$ symmetric adjacency matrix of $\setG$ with entries $a_{ij} = w_{\{i,j\}}, \forall \; \{i,j\} \in \setE$ and $\ve$ be the $n$-dimensional vector of all-ones. Given a vertex subset $\setS \subseteq \setV$, let $\setG_{\setS} = (\setS,\setE_{\setS})$ denote the subgraph induced by $\setS$. The (weighted) sum of edges in $\setG_{\setS}$ is denoted as $e(\setS):= \sum_{\{i,j\} \in \setE_{\setS}}w_{\{i,j\}}$. The subgraph density of $\setG_{\setS}$ is its average (induced) degree, which is expressed as $\rho(\setS): = 2e(\setS)/|\setS|$.  
Given a vertex subset $\setS \subseteq \setV$, let $\vx \in \{0,1\}^n$ denote its corresponding binary indicator vector. In terms of $\vx$, the subgraph density can be equivalently expressed as $\rho(\vx) := (\vx^\top\vA\vx)/(\ve^\top\vx)$.  In addition, each vertex $v \in \setV$ possesses a binary group label that designates it as belonging to the protected or unprotected group. Thus, the vertices of $\setG$ can be partitioned into a protected 
$\setS_p \subset \setV$, with $n_p := |\setS_p|$, 
and an unprotected subset $\bar{\setS}_p:= \setV \setminus \setS_p$. 
A real-valued set function $f:2^\setV\rightarrow \mathbb{R}$ defined on a ground set $\setV$ is \textit{supermodular} if and only if (iff) $f(\setA)+f(\setB)\leq f(\setA\cup \setB) + f(\setA\cap \setB)$ for all $\setA, \setB \subseteq \setV$. Equivalently, $f$ is supermodular iff $f(v|\setA) \leq f(v|\setB)$, for all $\setA \subseteq \setB \subseteq \setV\setminus \{v\}$, 
where $f(v|\setS):=f(\setS\cup \{v\})-f(\setS)$ denotes the marginal value of $v$ with respect to $\setS$. If $f(\emptyset) = 0$, $f$ is said to be normalized.
A function $f$ is submodular iff $-f$ is supermodular. A function is modular if it is both supermodular and submodular. Relevant to this paper,  the function $e(\setS)$ is monotone supermodular, whereas $|\setS|$ is modular.


\section{Related Work} \label{sec:related}

\noindent {\bf Densest subgraph and it variants:} 
Given a graph $\setG$, the classic \textsc{Densest Subgraph} (DSG) problem \citep{goldberg1984finding} aims to extract a dense vertex subset $\setS \subseteq \setV$ that maximizes the average (induced) degree. DSG can be solved exactly in polynomial time via maximum flows, but this is computationally demanding for large graphs. In practice, a scalable greedy algorithm that iteratively peels vertices is used, providing a $1/2$-factor approximation \citep{asahiro2000greedily, charikar2000greedy}. The algorithm was generalized in the recent work of \citet{boob2020flowless}, which proposed a multi-stage greedy algorithm \textsc{Greedy++} that exhibits fast convergence to a near-optimal solution of DSG without relying on flows.  The follow-up work of
\citet{chekuri2022densest} provided a performance analysis of \textsc{Greedy++}, and demonstrated that it can attain $(1-\epsilon)$-approximate solutions of DSG in $O(1/\epsilon^2)$ iterations. 
Building on this approach, a general peeling-based framework \textsc{Super-Greedy++} was introduced by \citet{chekuri2022densest} for obtaining $(1-\epsilon)$-approximate solutions for the broader problem class of computing densest supermodular subsets, which can be expressed as $\max_{\setS \subseteq \setV} \{ F(\setS)/|\setS|\}$, where $F(\setS)$ is a monotone, normalized, supermodular function. Note that DSG is a special case, with $F(\setS) = e(\setS)$. More generally, it was shown by \citet{harb2023convergence} that \textsc{Super-Greedy++} exhibits convergence to an optimal dense decomposition vector of $f$, which corresponds to a partition of of $\setV$ into a finite collection of subsets $\setS_1,\ldots,\setS_k$, such that each for each $i \in \{1,\ldots,k\}$, $\setS_i$ is the unique maximal subset that maximizes $\lambda_i=(F(\setS_i \cup \setX_{i-1}) - F(\setX_{i-1}))/|\setS_i|$ over $\setV \setminus \setX_{i-1}$, where $\setX_{i-1}:= \cup_{j<i} S_{j}$. The value of $\lambda_1$ corresponds to density of the densest supermodular subset, whereas subsequent values monotonically decrease, $\lambda_1 > \cdots > \lambda_k$. For DSG, we refer to such a decomposition as a dense decomposition on a graph $\setG$.

A separate line of research \citep{danisch2017large,harb2022faster,nguyen2024multiplicative} for developing fast approximation algorithms for DSG has blossomed around applying convex-optimization algorithms for solving a tight linear programming (LP) relaxation of DSG originally proposed by \citet{charikar2000greedy}. The work of \citet{danisch2017large} considered a quadratic programming problem derived from the  dual of the LP relaxation of \citet{charikar2000greedy}, and applied the Frank-Wolfe algorithm \citep{frank1956algorithm} for solving it. Recently, the authors of  \citet{harb2022faster} developed faster algorithms for DSG and dense decompositions on graphs by solving the dual LP relaxation via accelerated proximal gradient \citep{beck2009fast}. This was later improved upon by \citet{nguyen2024multiplicative} using tools from area convexity \citet{sherman2017area} and accelerated random coordinate descent \citet{ene2015random} to develop faster approximation algorithms for DSG and dense decompositions on graphs, respectively.

An important variant of DSG is finding the $k$-core decomposition of a graph \citet{seidman1983network}. A $k$-core is the maximal subgraph (by size) where every vertex has an induced degree of at least $k$. 
The largest value of $k$ for which such a subgraph exists is known as the degeneracy, and the corresponding subgraph is known as the max-core. 
Extracting the max-core is tantamount to finding the vertex subset which maximizes the \emph{minimum} induced degree. A slight modification of the greedy peeling algorithm for DSG exactly solves this problem. 

Recently, it has been shown \citep{veldt2021generalized} that the solution of DSG and the $k$-core both correspond to special cases of a more general framework known as the \textsc{Generalized Mean Densest Subgraph} (GMDSG) problem. This framework is parameterized by a single parameter $p$, based on computing generalized means of degree sequences of a subgraph, with DSG and $k$-core corresponding to the choices of $p=1$ and $p=-\infty$, respectively. Hence, by varying $p$, one can extract a family of dense subgraphs which obey different notions of density. For $p\geq 1$, GMDSG can be solved optimally in polynomial-time via maximum flows, or approximated via a generalized greedy peeling algorithm. Additionally, it was shown by \citet{chandrasekaran2024generalized} that for $p < 1$, GMDSG is NP--hard for $p$ in the range $p \in (-1/8,0) \cup (0,1/4)$, 
and that the solutions of DSG and max-core provide a $1/2$-approximation for GMDSG when $p < 1$. 

Other variants of DSG consider imposing size restrictions on the extracted subgraph. The \textsc{Densest}-$k$-\textsc{Subgraph} (D$k$S) problem \citep{feige2001dense} seeks to determine the edge densest subgraph of size $k$, which is NP--hard and also notoriously difficult to approximate \citep{bhaskara2012polynomial,manurangsi2017almost}. Notwithstanding this fact, practical algorithms which work well for this problem on real graphs include \citet{papailiopoulos2014finding,konar2021exploring}.
Another variant, Dam$k$S \citep{andersen2009finding}, maximizes average degree subject to the size of the subgraph being {\em at most} $k$, and is NP--hard. Meanwhile, the Dal$k$S problem \citep{andersen2009finding} maximizes average degree subject to the subgraph size being {\em at least} $k$. Dal$k$S is also NP--hard \citep{khuller2009finding}, with greedy and linear programming approximation algorithms known.

\noindent {\bf Fairness in DSD:} Subgraphs extracted via DSG or its variants  are not guaranteed to fulfill a pre-specified fairness criterion (regardless of the algorithmic approach employed), as the respective formulations do not explicitly take into account 
any protected attributes that the vertices may possess. 
This issue was recently brought up by \citet{anagnostopoulos2020spectral, anagnostopoulos2018fair, miyauchi2023densest}, which considered the task of extracting dense subgraphs that satisfy a group-fairness criterion.
In \citet{anagnostopoulos2020spectral, anagnostopoulos2018fair}, each vertex is a member of either a protected or an unprotected group, and the goal is to extract the densest subgraph with an equal number of vertices from both groups (\textit{perfect balance}). The follow-up work of \citet{miyauchi2023densest} proposed a pair of fairness promoting formulations which can handle multiple vertex groups. The first \cite[Problem 1]{miyauchi2023densest} corresponds to an extension of \citet{anagnostopoulos2020spectral, anagnostopoulos2018fair} that aims to find the densest subgraph subject to the constraint that the number of vertices from each group in the solution does not exceed a fixed fraction of the solution size. The second formulation \citep[Problem 2]{miyauchi2023densest} concerns finding the densest subgraph while ensuring that at least a pre-specified number of vertices from each group are present in the solution. 
A separate work \citep{oettershagen2024finding} considered a variation of the fair DSG problem where the edge relationships (as opposed to vertices) in a graph are modeled as belonging to different groups. The authors of \citet{oettershagen2024finding} proposed formulations for extracting fair dense subgraphs containing {\em exactly}, {\em at most}, and {\em at least} a certain number of edges from each edge type.

While serving as a reasonable means of promoting fairness in DSD, the aforementioned approaches have important limitations. {\bf (i):} Fairness is enforced using hard combinatorial constraints, which renders the problem formulations of \citet{anagnostopoulos2020spectral, anagnostopoulos2018fair, miyauchi2023densest} 
NP--hard in the worst case. Meanwhile, the decision versions of the problems considered in \citet{oettershagen2024finding} are demonstrated to be NP-complete. Since optimal solution cannot be guaranteed in polynomial-time for every problem instance, it is a non-trivial task to analyze the trade-off between density and fairness made by these formulations on real-world datasets. 
{\bf (ii):} The approximation algorithms developed in \citet{anagnostopoulos2020spectral, anagnostopoulos2018fair, miyauchi2023densest} do not tackle the problem directly, but adopt a two-pronged approach. In the first part, which corresponds to a relaxation step, the fairness constraints are decoupled from the problem formulation and a dense subgraph is extracted. This is followed by a ``rounding'' step that takes the dense subgraph as input and aims to make the solution feasible with respect to the fairness constraints. However, such a strategy, which is employed in \citet{anagnostopoulos2020spectral, anagnostopoulos2018fair} and \citet[Problem 1]{miyauchi2023densest}, can cause a large drop in practical performance. 
Although the spectral relaxation algorithms proposed
in \citet{anagnostopoulos2020spectral, anagnostopoulos2018fair}
provide sub-optimality guarantees, these apply under the assumption that the degree distribution of the underlying graph is near uniform; a condition that rarely holds in real-world graphs which typically exhibit a skewed-degree distribution \citep{newman2003structure}.
In practice a heuristic procedure adjusts the output of DSG by ad-hoc addition of vertices from the least represented group until perfect balance is reached.
Meanwhile, \citet{miyauchi2023densest} compute an approximation to Dal$k$S to obtain an initial solution for Problem 1 without imposing fairness, which is then refined via a post-processing step that adds or removes vertices until the target fairness constraints are met. 
Regarding the second formulation of \citet[Problem 2]{miyauchi2023densest}, a linear programming relaxation with quality guarantees is developed, albeit the algorithm exhibits high complexity. A low complexity greedy algorithm is proposed as an alternative, which offers quality guarantees under very specific conditions on the optimal solution of the problem, which are difficult to verify {\em a priori}. 
 On the other hand, \citet{oettershagen2024finding} provides a constant-factor approximation algorithm for their edge-fairness problem, but these guarantees apply to graphs which are everywhere sparse -- a restrictive assumption.
{\bf (iii):} The fairness criteria employed
in \citet{anagnostopoulos2020spectral, anagnostopoulos2018fair, miyauchi2023densest} 
can be restrictive, or difficult to properly tune in order to achieve a target representation level. The case of perfect balance is the sole consideration of \citet{anagnostopoulos2020spectral, anagnostopoulos2018fair}, whereas \citet[Problem 1]{miyauchi2023densest} only guarantee that the representation level of each subgroup is {\em no more} than a fixed fraction of the total vertices in 
the solution.
However, it is not possible to specify {\em a priori} which subgroup attains maximum representation or whether this upper bound is attained at all. This implies that even in the simplest setting of two subgroups (i.e., protected and unprotected), 
the method may not be effective in guaranteeing a target level of representation of the protected group. 
The second formulation \citet[Problem 2]{miyauchi2023densest} ensures only that \textit{at least} a certain number of vertices from the protected group are included in the solution, and cannot guarantee attaining a specific proportion of protected vertices, which is the focus of our work.
Meanwhile, \citet{oettershagen2024finding} quantifies fairness w.r.t. edges via edge-type constraints, which is not directly comparable with our work as we measure fairness w.r.t. vertices.

\section{Problem Statement}
\label{sec:problem}

Recall the classic \textsc{Densest Subgraph} (DSG) problem \citep{goldberg1984finding}.
Given an undirected graph $\setG$, DSG aims to extract a dense vertex subset $\setS \subseteq \setV$ that maximizes the average (induced) degree. The problem can be expressed as 
\begin{definition}[DSG]
\begin{equation}\label{eq:DSG}
\vx^* = \argmax_{\vx \in \{0,1\}^n} ~ \rho(\vx),
\end{equation}
\end{definition}
where $\rho(\vx) := (\vx^\top\vA\vx)/(\ve^\top\vx)$ denotes the subgraph density of a candidate subgraph $\vx\in \{0,1\}^n$ in terms of average induced-degree.

In contrast to prior work on fairness in DSD where fairness requirements are explicitly enforced as constraints, we adopt a regularization based approach, where group-fairness
in the extracted subgraph is 
promoted by an appropriately chosen regularizer. In this section, we introduce our formulations. 

Given a graph $\setG$ with designated protected and unprotected subgroups of vertices, let 
$\ve_p \in \{0,1\}^n$ denote the binary indicator vector of the protected subset $\setS_p$. 
Define the regularizer $r_1(\vx):= (\ve_p^\top\vx)/\ve^\top\vx$ and consider the first \textsc{Fairness-Aware Densest Subgraph} (FADSG-I) problem. 

\begin{definition}[FADSG-I]
\begin{equation}\label{eq:FDS}
\vx^*(\lambda) = \argmax_{\vx \in \{0,1\}^n} ~ \biggl\{
    g(\vx,\lambda): = \rho(\vx) + \lambda \cdot r_1(\vx) 
\biggr\}.
\end{equation}
\end{definition}
Here, $\lambda \geq 0$ is the regularization parameter. The objective function $g(\vx,\lambda)$ consists of two terms - the first term $\rho(\vx)$ corresponds to subgraph density,
whereas for $\lambda > 0$, $r_1(\vx)$ promotes the inclusion of vertices from the protected subset $\setS_p$. 
This can be seen by equivalently expressing $r_1(\cdot)$ as the set function 
%
\begin{equation}
    r_1(\setS) = \frac{|\setS \cap \setS_p|}{|\setS|}.
\end{equation}
%
It is evident that $0 \leq r_1(\setS) \leq 1$, with the upper bound being attained for any subset 
$\setS \subseteq \setS_p$. Meanwhile, $r_1(\setS) = 0$ iff $\setS$ does not contain any protected vertices. Thus, larger values of $r_1(\cdot)$ correspond to vertex subsets containing a greater fraction of protected vertices. Note that $r_1(\setS) = 1/2$ corresponds to the case of \emph{perfect balance} 
\citep{anagnostopoulos2020spectral, anagnostopoulos2018fair}, with equal number of protected and unprotected vertices in $\setS$.
Moreover, the choice of $r_1(\setS)=n_p/n$ implies that the number of vertices in the solution is proportional to the number of vertices in the population, a property known as \emph{demographic parity} \citep{dwork2012fairness}.
By varying $\lambda$, we allow $r_1(\cdot)$ to span a 
range of values, which affords
flexibility in extracting dense subgraphs exhibiting varying fairness levels.

When $\lambda = 0$, FADSG-I reduces to DSG, the solution of which is not guaranteed to contain any protected vertices. 
As $\lambda$ is increased, greater emphasis is placed on maximizing $r_1(\vx)$, 
enhancing the inclusion of protected vertices.
For $\lambda \rightarrow \infty$, $r_1(\vx^*(\lambda)) \rightarrow 1$, and the solution of problem \eqref{eq:FDS}
corresponds to the {\em densest subset} of $\setS_p$.
Clearly, by varying $\lambda$ in the interval $[0,\infty)$, the solution of \eqref{eq:FDS} spans a spectrum which lies between these two extremes. 
By appropriately selecting $\lambda$, desired representation levels of the protected subset can be achieved.
Hence, in contrast to \citet{anagnostopoulos2020spectral, anagnostopoulos2018fair, miyauchi2023densest}, our group-fairness formulation \eqref{eq:FDS} affords much greater flexibility in trading-off subgraph density in exchange for 
protected subset representation.

It is worth pointing out that in certain scenarios, the solution of FADSG-I may not provide satisfactory levels of representation of the protected subset. Although the regularizer $r_1(\cdot)$ promotes the inclusion of protected vertices in the solution of \eqref{eq:FDS}, it does not explicitly take into account the fraction of such vertices that are represented. 
Even for (infinitely) large values of $\lambda$ in \eqref{eq:FDS}, the maximal level of representation attainable corresponds to the densest subset of $\setS_p$. However, this does not imply that the fraction of the vertices of $\setS_p$ that will be included in the solution of FADSG-I will be large (please see Table~\ref{tab:results_S_Sp}, Appendix~\ref{app:results_S_Sp} for experimental results indicating this). 
These considerations motivate a more focused formulation for incorporating protected vertices in the solution, which we designate as the FADSG-II problem.
\begin{definition}[FADSG-II]
\begin{equation}\label{eq:FDS:2}
\vx^*(\lambda) \in \argmax_{\vx \in \{0,1\}^n} ~ \biggl\{
    h(\vx,\lambda): = \rho(\vx) - \lambda \cdot r_2(\vx) 
\biggr\}.
\end{equation}
\end{definition}
In this case, $r_2(\vx):=(\lVert \vx-\ve_p \rVert_2^2)/\ve^\top\vx$ is a regularizer and $\lambda \geq 0$ is a regularization parameter. The above problem differs from  FADSG-I in the choice of the regularizer. Note that the numerator of $r_2(\cdot)$ can be viewed as the Hamming distance between the sets $\setS$ and $\setS_p$, since $\lVert \vx-\ve_p \rVert_2^2 = \sum_{i\in\setV}(\vx(i)-\ve_p(i))^2$ corresponds to the number of disagreements between $\setS$ (with indicator vector $\vx$) and $\setS_p$ (with indicator vector $\ve_p$). Hence, for a given value of $\lambda > 0$, the objective function of \eqref{eq:FDS:2} aims to maximize subgraph density while penalizing the distance between the extracted and the protected subgraphs. As $\lambda \rightarrow \infty$, $r_2(\vx^*(\lambda)) \rightarrow 0$, and the solution $\vx^*(\lambda)$ coincides with the {\em entire protected subset itself}. Building on this observation, consider the case where the target fairness requirement stipulates that $50\%$ of the vertices in the protected subset be included in $\setS$. We have the following result (see Appendix~\ref{app:proofs:1} for proof).
\begin{lemma}\label{lemma:1}
   $r_2(\setS) = 1 \Leftrightarrow \frac{|\setS \cap \setS_p|}{|\setS_p|} = \frac{1}{2}$.
\end{lemma}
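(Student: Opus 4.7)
The plan is to translate the vector expression defining $r_2$ into set-theoretic language, after which the equivalence becomes a short counting argument. The key observation is that since both $\vx$ and $\ve_p$ lie in $\{0,1\}^n$, the summand $(\vx(i)-\ve_p(i))^2$ equals $1$ precisely when $i$ belongs to exactly one of $\setS$ and $\setS_p$, and $0$ otherwise. Therefore
\begin{equation}
\lVert \vx - \ve_p \rVert_2^2 \;=\; |\setS \triangle \setS_p| \;=\; |\setS \setminus \setS_p| + |\setS_p \setminus \setS|,
\end{equation}
while $\ve^\top \vx = |\setS|$, so $r_2(\setS) = |\setS \triangle \setS_p|/|\setS|$.

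Next, I would introduce the counts $a := |\setS \cap \setS_p|$, $b := |\setS \setminus \setS_p|$, and $c := |\setS_p \setminus \setS|$. Then $|\setS| = a + b$, $|\setS_p| = a + c$, and the previous identity gives $r_2(\setS) = (b+c)/(a+b)$. The condition $r_2(\setS) = 1$ therefore collapses to the single equation $b + c = a + b$, i.e.\ $c = a$, which reads $|\setS_p \setminus \setS| = |\setS \cap \setS_p|$ in words.

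Finally, since $|\setS_p| = a + c$, the equation $c = a$ is equivalent to $|\setS_p| = 2a = 2|\setS \cap \setS_p|$, or $|\setS \cap \setS_p|/|\setS_p| = 1/2$. The converse direction just runs the chain in reverse. There is no real obstacle here beyond the initial translation from $\lVert \cdot \rVert_2^2$ to symmetric-difference counts; once the problem is in counting form, it is one line. The only implicit hypotheses worth flagging are $\setS \neq \emptyset$ (so that $r_2(\setS)$ is defined) and $\setS_p \neq \emptyset$ (so the fairness ratio is defined), both of which are natural in the context of extracting a nontrivial fair dense subgraph.
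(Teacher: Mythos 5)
Your proof is correct and essentially identical to the paper's: both rewrite $\lVert \vx-\ve_p\rVert_2^2$ as $|\setS|+|\setS_p|-2|\setS\cap\setS_p|$ (your symmetric-difference count $b+c$ is the same quantity) and observe that $r_2(\setS)=1$ collapses to $|\setS_p|=2|\setS\cap\setS_p|$. The only difference is cosmetic bookkeeping via $a,b,c$, plus your (reasonable) explicit flagging of the nonemptiness assumptions.
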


\noindent More generally, although problems (1) and (3) promote different notions of fairness, it is interesting to note that the two formulations are not completely unrelated. In particular, we have shown that the following relationship holds (see Appendix~\ref{app:proofs:2} for the proof). 

\begin{lemma}\label{lemma:2}
For the regularizers $r_1(\setS)$ and $r_2(\setS)$, the following inequalities hold: $r_1(\setS) + r_2(\setS) \geq 1$ and $2r_1(\setS)+r_2(\setS) \leq 1 + \frac{n_p}{2}$.
\end{lemma}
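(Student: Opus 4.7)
The plan is to reduce both inequalities to elementary algebra by expressing each regularizer in terms of the standard counts describing the overlap between $\setS$ and $\setS_p$. I would introduce the bookkeeping variables $a := |\setS \cap \setS_p|$, $b := |\setS \setminus \setS_p|$, and $c := |\setS_p \setminus \setS|$, so that $|\setS| = a+b$ and $|\setS_p| = a+c = n_p$. Since $\|\vx - \ve_p\|_2^2$ counts exactly the coordinates where $\vx$ and $\ve_p$ disagree, i.e., the symmetric difference $\setS \triangle \setS_p$, this rewriting gives the clean formulas
\begin{equation*}
r_1(\setS) = \frac{a}{a+b}, \qquad r_2(\setS) = \frac{b+c}{a+b}.
\end{equation*}

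For the lower bound, I would simply add the two expressions: $r_1(\setS) + r_2(\setS) = (a+b+c)/(a+b) = 1 + c/|\setS|$, which is at least $1$ since $c \geq 0$, with equality iff $\setS_p \subseteq \setS$.

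For the upper bound, I would combine the fractions and recognize $a+c$ as $n_p$:
\begin{equation*}
2r_1(\setS) + r_2(\setS) = \frac{2a + b + c}{a+b} = 1 + \frac{a+c}{a+b} = 1 + \frac{n_p}{|\setS|}.
\end{equation*}
The claim $2r_1(\setS) + r_2(\setS) \leq 1 + n_p/2$ then follows from the mild size restriction $|\setS| \geq 2$, which is the natural regime for dense subgraph discovery since any subgraph containing at least one edge has at least two vertices.

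The only subtlety worth flagging is recognizing this implicit size constraint: a singleton $\setS$ would give $2r_1 + r_2 = 1 + n_p$, which violates the stated bound, so the inequality as written is really a statement about subsets of size at least two. Beyond that, no real obstacle remains, as both inequalities are immediate consequences of the identity $2r_1 + r_2 = 1 + n_p/|\setS|$ together with the trivial bound $c \geq 0$ used for the first inequality.
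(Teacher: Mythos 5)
Your proof is correct and follows essentially the same route as the paper's: both rest on the identity $2r_1(\setS)+r_2(\setS) = 1 + n_p/|\setS|$ (your symmetric-difference bookkeeping is just a restatement of the paper's expansion $r_2(\setS) = 1 + \frac{n_p}{|\setS|} - 2r_1(\setS)$), with the first inequality coming from $|\setS\cap\setS_p|\leq n_p$ and the second from $|\setS|\geq 2$. The paper invokes the same implicit size assumption you flag ("each solution has at least one edge"), so there is no substantive difference.
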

\noindent The implication of the above result is that given the value of one regularizer, we can attain bounds on the other (see Figure~\ref{app:fig:R}, Appendix~\ref{app:proofs:2} for a visualization).
Since providing explicit representation of the protected subset is absent in the fairness formulations of \citet{anagnostopoulos2020spectral, anagnostopoulos2018fair, miyauchi2023densest}, FADSG-II goes beyond what is attainable using the prior art. 

\noindent \textbf{Tractability:}
A limitation of the formulations of \citet{anagnostopoulos2020spectral, anagnostopoulos2018fair, miyauchi2023densest} is that they are NP--hard in the worst-case. Since optimality cannot be guaranteed, this implies that the 
solution obtained by
their approximation algorithms
may not be satisfactory in terms of the loss in density incurred for a target fairness level. In contrast, it turns out that problems \eqref{eq:FDS} and \eqref{eq:FDS:2} can always be solved exactly in polynomial-time, irrespective of $\lambda$
(see Appendix~\ref{app:tractability} for details).
This implies that by varying $\lambda$, we can trace the Pareto-optimal boundary between the two components comprising the objective function (i.e., density and fairness) of both problems \eqref{eq:FDS} and \eqref{eq:FDS:2}. This enables us to reveal the underlying trade-off between density and fairness on real-world datasets. As mentioned before, the intrinsic hardness of the formulations in \citet{anagnostopoulos2020spectral, anagnostopoulos2018fair, miyauchi2023densest} does not allow for such a feature. 

Our formulations are explicitly designed to study the density-fairness trade-off in a disciplined way for the foundational setting of two vertex groups. As we will see later, they can outperform existing two-group (protected, unprotected) approaches by a significant margin. The extension of our approaches to the case of multiple vertex groups is deferred to future work.

\section{Selecting the Regularization Parameter \texorpdfstring{$\lambda$}{lambda}} \label{sec:problem:lambda}

A key component of problems \eqref{eq:FDS} and \eqref{eq:FDS:2}
is the choice of the $\lambda$ parameter. Clearly, increasing the value of $\lambda$ promotes the inclusion of protected vertices in the subgraph extracted by 
FADSG-I and II,
albeit at the expense of subgraph density. This section examines how $\lambda$ can be selected in our formulations to guarantee that the solutions obtained correspond to the densest subgraph that attains a target fairness level.

The first step is to determine the start and end points of the regularization paths of problems \eqref{eq:FDS} and \eqref{eq:FDS:2}, which determine the intervals characterizing all possible solutions generated by each formulation. While $\lambda = 0$ corresponds to the start of the regularization path for both problems (note that this is tantamount to solving DSG in both cases), determining the end is a non-trivial task in general. This stems from the fact that both problems are combinatorial in nature. In the context of FADSG-I, we wish to determine $\lambda_{\max}$ such that for all $\lambda \geq \lambda_{\max}$, 
the solution $\vx^*(\lambda)$ is the densest subset of the protected subset $\setS_p$. Let $\setS_p^* \subseteq \setS_p$ denote the densest protected subset and $\rho(\setS_p^*)$ denote its density. While it is difficult to pin down the exact analytical form of $\lambda_{\max}$, the following result can be derived (see Appendix~\ref{app:reg1} for proof).
\begin{proposition}\label{prop:lambda1}
For $\lambda \geq \lambda_{\max}:= [\max_{\setS \supset \setS_p} \{(\rho(\setS)-\rho(\setS_p^*))/(1-\frac{n_p}{|\setS|})\}]$, the solution of \eqref{eq:FDS} is the densest protected subset $\setS_p^*$. 
\end{proposition}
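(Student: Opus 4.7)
The plan is to verify the optimality condition $g(\setS_p^*, \lambda) \geq g(\setS, \lambda)$ for every $\setS \subseteq \setV$ and to extract the smallest $\lambda$ for which it holds. First I would simplify the left-hand side: since $\setS_p^* \subseteq \setS_p$, one has $r_1(\setS_p^*) = 1$, so $g(\setS_p^*, \lambda) = \rho(\setS_p^*) + \lambda$. I would then split candidates into two families, $\setS \subseteq \setS_p$ and $\setS \not\subseteq \setS_p$. For the first family, $r_1(\setS) = 1$ again, so the inequality collapses to $\rho(\setS) \leq \rho(\setS_p^*)$, which holds unconditionally by definition of the densest protected subset; in particular, no constraint on $\lambda$ arises here.

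For $\setS \not\subseteq \setS_p$, I have $r_1(\setS) < 1$, and rearranging $g(\setS_p^*, \lambda) \geq g(\setS, \lambda)$ yields the scalar constraint
\begin{equation*}
\lambda \left(1 - \tfrac{|\setS \cap \setS_p|}{|\setS|}\right) \;\geq\; \rho(\setS) - \rho(\setS_p^*).
\end{equation*}
When the right-hand side is non-positive the constraint is vacuous; otherwise $\lambda$ must dominate the ratio $[\rho(\setS) - \rho(\setS_p^*)] / [1 - |\setS \cap \setS_p|/|\setS|]$. Taking the supremum of this ratio over all such $\setS$ is therefore a sufficient threshold on $\lambda$, and by monotonicity in $\lambda$ it is also necessary.

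The remaining and most delicate step is to argue that this supremum is realized by some $\setS \supset \setS_p$, which collapses $|\setS \cap \setS_p|$ to $n_p$ and produces exactly the denominator $1 - n_p/|\setS|$ displayed in the proposition. I would carry this out via an augmentation argument: given any $\setS$ with $\rho(\setS) > \rho(\setS_p^*)$ and $\setS_p \not\subseteq \setS$, pass to $\setS' := \setS \cup \setS_p \supset \setS_p$ and show the ratio does not shrink. Since $\setS' \setminus \setS_p = \setS \setminus \setS_p$, writing the ratio as $[\rho(\cdot) - \rho(\setS_p^*)] \cdot |\cdot| / |\setS \setminus \setS_p|$ shows the claim is equivalent to
\begin{equation*}
2\,\bigl[e(\setS') - e(\setS)\bigr] \;\geq\; \rho(\setS_p^*)\,(n_p - |\setS \cap \setS_p|).
\end{equation*}
The hard part will be establishing this edge-augmentation inequality: the edges introduced by the protected vertices in $\setS_p \setminus \setS$ when appended to $\setS$ must cover an amortized $\rho(\setS_p^*)/2$ per added vertex. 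My plan is to invoke the supermodularity of $e(\cdot)$ (stated in the preliminaries) together with the defining property $2 e(T) \leq \rho(\setS_p^*) |T|$ for every $T \subseteq \setS_p$. Once this reduction is in place, substituting $|\setS \cap \setS_p| = n_p$ for $\setS \supset \setS_p$ yields the displayed form of $\lambda_{\max}$, and combining with the first case shows that $\setS_p^*$ maximizes $g(\cdot,\lambda)$ whenever $\lambda \geq \lambda_{\max}$.
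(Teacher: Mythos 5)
Your setup is sound and is a genuinely different, more direct route than the paper's: the paper passes to the supermodular surrogate $F(\setS)=p_1(\setS,\lambda)-v^*_\lambda|\setS|$ and invokes the fact that for supermodular maximization it suffices to check subsets and supersets of the candidate optimum $\setS_p^*$, whereas you compare $g(\setS_p^*,\lambda)$ against every $\setS\subseteq\setV$ directly. Everything through the identity that the binding constraint for a set $\setS\not\subseteq\setS_p$ is $\lambda\geq(\rho(\setS)-\rho(\setS_p^*))\,|\setS|/|\setS\setminus\setS_p|$ is correct, and you have correctly isolated the one claim on which the whole statement hinges: that this supremum is realized on supersets of $\setS_p$.

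That final step is a genuine gap, and it cannot be filled: the edge-augmentation inequality $2[e(\setS\cup\setS_p)-e(\setS)]\geq\rho(\setS_p^*)(n_p-|\setS\cap\setS_p|)$ is false in general. Supermodularity of $e(\cdot)$ gives $e(\setS\cup\setS_p)-e(\setS)\geq e(\setS_p)-e(\setS\cap\setS_p)$, but the bound $2e(T)\leq\rho(\setS_p^*)|T|$ for $T\subseteq\setS_p$ only \emph{upper}-bounds $e(\setS_p)$, which is the wrong direction — you need the added protected vertices to \emph{contribute} edges, and nothing forces that. Concretely, let $\setS_p$ consist of a $k$-clique $\setS_p^*$ plus $n_p-k$ protected vertices isolated from everything, and let the unprotected vertices form a large disjoint $M$-clique $\setC$. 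For $\setS=\setS_p^*\cup\setC$, the augmentation $\setS\cup\setS_p$ adds $n_p-k$ vertices and zero edges, so your left-hand side is $0$ while the right-hand side is $(k-1)(n_p-k)>0$, and the ratio strictly \emph{shrinks}. Worse, in this instance the true threshold (the supremum over all $\setS$, attained at $\setC$, equal to $M-k$) strictly exceeds the displayed $\max_{\setS\supset\setS_p}$ (which equals $M-k-(k-1)(n_p-k)/M$), so for $\lambda$ between the two values the maximizer of \eqref{eq:FDS} is $\setC$, not $\setS_p^*$. Your derivation thus shows the threshold must be taken over all $\setS$ with $\setS\not\subseteq\setS_p$ (with denominator $1-|\setS\cap\setS_p|/|\setS|$), not merely over $\setS\supset\setS_p$; the reduction you were attempting is also the step the paper's own argument elides, since its case analysis splits the supersets of $\setS_p^*$ into only the families $\setB\subseteq\setS_p$ and $\setB\supset\setS_p$, which is not exhaustive (e.g.\ $\setB=\setS_p^*\cup\setC$ above is neither).
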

\noindent The above result shows that there exists a finite value of $\lambda$ that corresponds to the end of the regularization path of FADSG-I. However, it is difficult to further simplify it. 
Meanwhile, for FADSG-II, the analogous definition of $\lambda_{\max}$ corresponds to the value such that for all $\lambda \geq \lambda_{\max}$, 
the solution $\vx^*(\lambda)$ is the protected subset $\setS_p$. 
For the end of its regularization path, a finite $\lambda$ can be shown to exist, similarly to FADSG-I (see Appendix~\ref{app:reg2}).
%
%
In practice, it is computationally easier to find a suitable $\lambda_{\max}$ (which we know exists owing to Propositions \ref{prop:lambda1} and \ref{prop:lambda2} in Appendix~\ref{app:reg_path}) via trial and error (see Appendix~\ref{app:exp_details}, Table \ref{tab:lambda_max} for values used). This enables us to bound the interval of the regularization path for a given dataset, which in turn facilitates the selection of the regularization parameter that corresponds to a target fairness level, as explained next. 

\subsection{Selection of Target Fairness Level} 
\label{sec:problem:alpha}

\textbf{FADSG-I:} For $\lambda = 0$, let $\vx^*(0)$ denote the solution of problem \eqref{eq:FDS}, i.e., DSG. We also denote $l_1:=r_1(\vx^*(0)) \geq 0$ as the fraction of protected vertices present in the solution of DSG. This corresponds to the level of fairness (as measured by $r_1(\cdot)$) that is attainable by the solution of the non-fair DSG problem, and is not guaranteed to be non-zero in general. 
Meanwhile, we know that $\vx^*(\lambda_{\max})$ corresponds to the densest protected subset for which $u_1:=r_1(\vx^*(\lambda_{\max}))=1$. Let $\alpha \in [l_1,u_1]$ denote the desired fairness level; i.e., we wish to select $\lambda$ such that the solution of \eqref{eq:FDS}, $\vx^*(\lambda)$, is the densest subgraph that satisfies $r_1(\vx^*(\lambda)) = \alpha$. 
To this end, we will exploit the fact that the function $r_1(\vx^*(\lambda))$ is {\em non-decreasing in $\lambda$}. Consequently, if we define the function
\begin{equation}
    \psi_1(\lambda):=r_1(\vx^*(\lambda))-\alpha,
\end{equation}
we can employ bisection on $\lambda$ to solve the equation $\psi_1(\lambda)= 0$, in order to meet the target fairness level $\alpha \in [l_1,u_1]$. The lower and upper limits of the bisection interval correspond to $0$ and $\lambda_{\max}$ respectively, for which $\psi_1(0) = l_1-\alpha \leq 0$ and $\psi_1({\lambda_{\max}})=1-\alpha \geq 0$. Moreover, note that each choice of $\lambda \in [0,\lambda_{\max}]$ corresponds to solving an instance of \eqref{eq:FDS}, which, as pointed out before, is tractable.

\noindent \textbf{FADSG-II:} A similar technique can be applied to select $\lambda$ in problem \eqref{eq:FDS:2} as well. Let $\vx^*(0)$ denote the solution of problem \eqref{eq:FDS:2}, which again corresponds to DSG, and define $u_2:=r_2(\vx^*(0)) \geq 0$. In this case, $\vx^*(\lambda_{\max})$ corresponds to the entire protected subset, for which $l_2:=r_2(\vx^*(\lambda_{\max}))=0$. Since $r_2(\vx^*(\lambda))$ is {\em non-increasing} in $\lambda$, given a desired value of $r_2(\cdot)$ defined as $\delta \in [l_2,u_2]$, we can solve the equation
\begin{equation}
\psi_2(\lambda):=r_2(\vx^*(\lambda))-\delta = 0
\end{equation}
via bisection on $\lambda$. The initial bisection interval is $[0,\lambda_{\max}]$, with $\psi_2(0) = u_2 - \delta \geq 0$ and $\psi_2(\lambda_{\max}) = -\delta \leq 0$. For each $\lambda \in [0,\lambda_{\max}]$, an instance of problem \eqref{eq:FDS:2} has to be solved, which is also tractable. 

Unlike FADSG-I, in this case it is more difficult to set $\delta$ to a pre-determined value as it does not admit a straight-forward interpretation of corresponding to a particular target fairness level.
An important exception is the case where $\delta = 1$, which is equivalent to the requirement that $50\%$ of the protected subset is represented in the solution (see Lemma \ref{lemma:1}). Thus, the bisection-search strategy can be applied to determine the value of $\lambda$ that will extract the densest subgraph that contains $50\%$ of the protected vertices.

A subtle point is that for the bisection framework to provably terminate, the functions $\psi_1(\lambda)$ and $\psi_2(\lambda)$ should be continuous for every target fairness level, and for every graph. Although we cannot show this at present, we observed in our experiments that depending on the nature of the underlying Pareto-optimal frontier, some target fairness levels may be more difficult to attain than others. However, in general, the overall procedure works very well on real datasets.

\section{Quantifying the Price of Fairness} \label{sec:problem:PoF} 
FADSG-I and II
trade off subgraph density in order to promote the inclusion of protected vertices. Consequently, the density of the fair subgraphs obtained via \eqref{eq:FDS} and \eqref{eq:FDS:2} is no larger than the density of their non-fair counterpart DSG; i.e., a loss in subgraph density is the price paid for ensuring fairness. 
The loss incurred in attaining a target fairness level $\phi$, either $\alpha$ for FADSG-I or $\delta$ for FADSG-II, can be quantified using the \emph{price of fairness}.

For a given graph $\setG$, the \emph{price of fairness} (\textsc{PoF}) for a fixed value of $\phi$ is
\begin{equation}\label{eq:pof}
    \textsc{PoF}(\setG,\phi) 
    := \frac{\rho_{\setG}^* - \rho_{\setG}^*(\phi)}{\rho_{\setG}^*}
    := 1 - \frac{\rho_{\setG}^*(\phi)}{\rho_{\setG}^*},
\end{equation}
\noindent where $\rho_{\setG}^*(\phi)$ is the density of the densest subgraph in $\setG$ attaining a target fairness level $\phi$, and $\rho_{\setG}^*$ is the density of the solution of DSG.
Simply stated, $\textsc{PoF}(\setG,\phi)$ is the ratio of the loss in subgraph density suffered in meeting the fairness requirement $\phi$ and the maximum density $\rho_{\setG}^*$ in the absence of fairness considerations.
Let $\phi_0$ denote the fairness level of the solution of DSG (obtained with explicitly imposing fairness), then $\rho_{\setG}^*(\phi_0) = \rho_{\setG}^*$, and thus $\textsc{PoF}(\setG,\phi_0) = 0$. As $\phi$ diverges from $\phi_0$, the resulting density $\rho_{\setG}^*(\phi)$ decreases, and consequently $\textsc{PoF}(\setG,\phi)$ approaches $1$.
For a desired value of $\phi$, the closer $\textsc{PoF}(\setG,\phi)$ is to $1$, the greater is the price paid (in terms of density) for extracting the desired fair dense subgraph.

Similar definitions have been 
considered for analyzing fairness 
in 
influence maximization \citep{tsang2019group} and graph covering \citep{rahmattalabi2019exploring}. To the best of our knowledge, however, we are the first to study such a notion in the context of detecting fair dense subgraphs. Analyzing the price of fairness is inherently complex due to multiple factors like the type of graph considered, the 
protected subgraph, and the desired level of fairness. 

\begin{wrapfigure}{r}{0.25\textwidth}
  \begin{center}
  \includegraphics[width=1\linewidth]{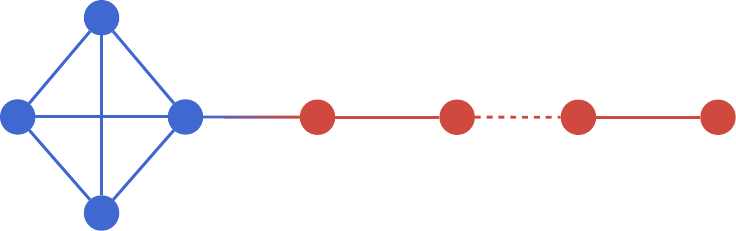}
  \end{center}
  \caption{The ``lollipop'' graph for $n=16$. The unprotected set (blue) forms a clique on 4 vertices whereas the protected set (red) forms a path graph on 12 vertices.}
\end{wrapfigure}

\textbf{The ``Lollipop'' Graph:} We base our study on a difficult example which exemplifies the underlying tension between density and fairness. Consider the ``lollipop'' graph (denoted as $\setL_n$). In this construction, the unprotected vertices constitute a clique of size $\sqrt{n}$ 
($\setC_{\sqrt{n}}$), while the protected vertices form a path graph on $n - \sqrt{n}$ vertices 
($\setP_{n-\sqrt{n}})$, with one end connected to $\setC_{\sqrt{n}}$ (so that the overall graph is connected). We point out that the ``lollipop'' graph mimics the properties of real-world graphs in the following manner: (i) the graph $\setL_n$ has $O(n)$ edges, and is hence sparse, (ii) the majority of vertices ($n-\sqrt{n}$ to be precise) have low degree ($\leq 2$), while the remaining vertices have high degree (at least $\sqrt{n}-1$). Intuitively, we expect the solution of DSG for the ``lollipop'' graph to be the clique $\setC_{\sqrt{n}}$, as it corresponds to a region of high density. 
%
Since this solution
does not contain any protected vertices, from a fairness perspective, it is the worst possible solution. 
In addition,
as the protected subset $\setP_{n-\sqrt{n}}$ is minimally connected, and thus has low density, we expect that improving the representation of protected vertices will cost a large decrease in density, 
reflected by a large
price of fairness. 

\textbf{FADSG-I:}
Let us analyse how the \textsc{PoF} is affected by altering the balance between the majority and minority vertices in the desired solution. Consider the case where the desired proportion of unprotected and protected vertices in the extracted subgraph is $1:\gamma$, with $\gamma \in \{1,2,\cdots,n-\sqrt{n}\}$,
and the desired balance is $\alpha_{\gamma} := \gamma/(1+\gamma)$.
Note that the case of $\gamma = 1$ corresponds to the perfect balance scenario considered in \citet{anagnostopoulos2020spectral, anagnostopoulos2018fair}. 
Then, the \textsc{PoF} can be expressed as follows 
(see Appendix~\ref{app:prop:pof3} for proof).
\begin{proposition}\label{prop:lol3}
    \begin{equation}
        \textsc{PoF}(\setL_n,\alpha_{\gamma}) 
        \begin{cases}
        = \alpha_{\gamma}\biggl[1-\frac{2}{\sqrt{n}-1}\biggr],\ \alpha_\gamma \in \biggl\{\frac{1}{2},\cdots,1-\frac{1}{\sqrt{n}}\biggr\}\\
        
        >\ 
        1 - \biggl[\frac{1}{(\sqrt{n})} + 
        \frac{2}{\sqrt{n}-1} 
        \biggr],
        \ \alpha_{\gamma} \in \biggl\{1-\frac{1}{\sqrt{n}+1},\cdots,1-\frac{1}{n-\sqrt{n}+1}\biggr\}    
        
        %
        \end{cases}
    \end{equation}  
\end{proposition}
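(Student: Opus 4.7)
The plan is to compute (or bound) the density $\rho_{\setL_n}^*(\alpha_\gamma)$ of the densest subgraph of $\setL_n$ whose protected-vertex fraction equals $\alpha_\gamma = \gamma/(1+\gamma)$, and then plug into the definition \eqref{eq:pof}. First I would establish that the unconstrained denominator is $\rho_{\setL_n}^* = \sqrt{n}-1$: the clique $\setC_{\sqrt{n}}$ has density $\sqrt{n}-1$, and a short edge-accounting shows that adding any contiguous prefix of $\setP_{n-\sqrt{n}}$ at the bridge strictly decreases the average degree because each extra path vertex contributes at most one new edge. So the optimum of DSG on $\setL_n$ is the clique itself, and the non-fair density is $\sqrt{n}-1$.

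Next, I would characterize the densest subgraph obeying the ratio $1{:}\gamma$. A feasible solution must consist of $k$ unprotected and $\gamma k$ protected vertices, for some positive integer $k$. Because edges within the unprotected side are maximized by taking a subclique of $\setC_{\sqrt{n}}$ of size $k$ (contributing $\binom{k}{2}$ edges) and edges within the protected side are maximized by taking a contiguous subpath of length $\gamma k$ (contributing $\gamma k - 1$ edges), and because the unique cross edge is only recovered when the chosen subclique contains the bridge endpoint and the chosen subpath starts at its neighbor, the maximum edge count is $\binom{k}{2}+\gamma k$. Hence
\begin{equation}
    \rho_{\setL_n}^{*}(\alpha_\gamma)=\max_k\,\frac{k-1+2\gamma}{1+\gamma},\qquad 1\le k\le \min\!\bigl(\sqrt{n},\lfloor (n-\sqrt{n})/\gamma\rfloor\bigr),
\end{equation}
since the numerator grows in $k$.

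Now I would split on which constraint binds. In Case 1 ($\gamma\in\{1,\dots,\sqrt{n}-1\}$, i.e.\ $\alpha_\gamma\in\{1/2,\dots,1-1/\sqrt{n}\}$) one has $(n-\sqrt{n})/\gamma\ge \sqrt{n}$, so $k=\sqrt{n}$ is feasible and optimal, giving $\rho_{\setL_n}^*(\alpha_\gamma)=(\sqrt{n}-1+2\gamma)/(1+\gamma)$. Plugging into \eqref{eq:pof} and simplifying (the $\sqrt{n}-1$ in the numerator cancels cleanly with $\rho_{\setL_n}^*=\sqrt{n}-1$) yields
\begin{equation}
    \textsc{PoF}(\setL_n,\alpha_\gamma)=\frac{\gamma(\sqrt{n}-3)}{(1+\gamma)(\sqrt{n}-1)}=\alpha_\gamma\Bigl[1-\frac{2}{\sqrt{n}-1}\Bigr],
\end{equation}
which is the first branch. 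In Case 2 ($\gamma\in\{\sqrt{n},\dots,n-\sqrt{n}\}$) the binding constraint is $k\le \lfloor(n-\sqrt{n})/\gamma\rfloor\le \sqrt{n}-1$. Writing $\rho_{\setL_n}^*(\alpha_\gamma)=2+(k-3)/(1+\gamma)$ and using $k\le \sqrt{n}-1$, $1+\gamma\ge \sqrt{n}+1$, a direct cross-multiplication shows that $\rho_{\setL_n}^*(\alpha_\gamma)<3-1/\sqrt{n}$ strictly. Dividing by $\sqrt{n}-1$ and observing that $(3-1/\sqrt{n})/(\sqrt{n}-1)=(3\sqrt{n}-1)/[\sqrt{n}(\sqrt{n}-1)]=1/\sqrt{n}+2/(\sqrt{n}-1)$ produces the stated lower bound on \textsc{PoF}.

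The step I expect to require most care is the structural optimality claim that, for a prescribed ratio, the densest subgraph must be a subclique together with a contiguous subpath sharing the bridge; I would prove this by a short exchange argument (swapping any interior/non-contiguous chosen path vertex with a contiguous one weakly increases the edge count, and discarding the bridge or its neighbor loses the only cross edge while preserving the vertex count). Everything else is bookkeeping: the DSG value on $\setL_n$, the feasibility split at $\gamma=\sqrt{n}$, and the simplifications above.
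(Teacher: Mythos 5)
Your proposal is correct and follows essentially the same route as the paper's proof: identify the optimal fair subgraph as a subclique of $\setC_{\sqrt{n}}$ joined to a prefix of the path at the bridge, observe that the density $(k-1+2\gamma)/(1+\gamma)$ increases in the number $k$ of unprotected vertices, split on whether $k=\sqrt{n}$ is feasible (i.e., $\gamma \le \sqrt{n}-1$), and in the second regime bound $k$ by $\sqrt{n}-1$ to obtain the $1-O(1/\sqrt{n})$ lower bound. Your treatment is in fact marginally more careful than the paper's on two points the paper glosses over --- you sketch an exchange argument for the structural optimality of the subclique-plus-contiguous-subpath form, and you avoid the implicit divisibility assumption $r=(n-\sqrt{n})/\gamma$ by working with $k\le\lfloor(n-\sqrt{n})/\gamma\rfloor\le\sqrt{n}-1$ directly --- but the substance of the argument is the same.
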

Since the target fairness level $\alpha_{\gamma}$ increases with $\gamma$, the above result reveals that the \textsc{PoF} increases with $\gamma$, 
as expected. Perhaps surprisingly, even when $\alpha_{\gamma}$ is large but (strictly) lesser than $1$, it is still possible that the \textsc{PoF} is at least $1-O(1/\sqrt{n})$, which can again be made arbitrarily close to $1$.
For smaller values of $\alpha_{\gamma}$, we have that $\lim_{n \rightarrow \infty}\textsc{PoF}(\setL_n,\alpha_{\gamma}) = \alpha_{\gamma}$. In this case, the \textsc{PoF} is at most the desired fairness level $\alpha_{\gamma}$.

\textbf{FADSG-II:}
Regarding FADSG-II, we can show that for the lollipop graph, there exists a one-to-one correspondence between the target fairness parameters $\delta$ of FADSG-II and $\alpha$ of FADSG-I such that solutions of the two problems coincide (refer to Appendix~\ref{app:proof:mapping} for proof). Hence the analysis remains the same as that of FADSG-I.

\section{Algorithmic Approach}
\label{sec:algorithms}
\vspace{-1.5pt}
Although FADSG-I and II can be optimally solved using maximum flows, we refrain from using such an approach. The main reason is that for a target fairness level, we have to solve multiple instances of each problem within a binary search framework to determine the appropriate regularization parameter $\lambda$. 
Since solving each maximum flow problem incurs $\Omega(n^2)$ complexity (even for sparse graphs), the overall procedure for computing an exact solution can be computationally prohibitive. 
Thus, we adopt an approximation approach that seeks to quickly compute high quality sub-optimal solutions for a given instance of FADSG-I and II (i.e., for a fixed $\lambda$).
The key observation is that both of these problems correspond to special cases of 
a general problem,
the \textsc{Densest Supermodular Subset} (DSS) problem \citep{chekuri2022densest}. Given a supermodular function $F\, :\, 2^\setV\rightarrow \mathbb{R}$, the problem seeks 
a subset that maximizes the ratio $F(\setS)/|\setS|$. 
For this problem class, several algorithmic strategies are available. One option is to employ \textsc{Super-Greedy++} \citep{chekuri2022densest}, a multi-stage greedy algorithm which generalizes the prior-art of \citet{asahiro2000greedily,boob2020flowless,veldt2021generalized}.
If $F(\cdot)$ is normalized, non-negative, and monotone, for a given 
$\epsilon \in (0,1)$, the algorithm outputs an $(1-\epsilon)$-approximation  of the optimal objective value of DSS in $O\left(1/ \epsilon^2\right)$ iterations. The peeling process dominates the computational complexity of each iteration and it can be implemented to run in $O(m+n\log n)$, by handling the vertices of $\setS$ using a Fibonacci heap \citep{fredman1987fibonacci} with key values being $\ell_v + F(v|\setS-v)$, similar to \citet{miyauchi2023densest}. Hence, \textsc{Super-Greedy++} can run in $O((m+n\log n)/\epsilon^2)$. 
Its main appeal
is that it can be efficiently implemented, and exhibits strong empirical performance \citep{boob2020flowless,nguyen2024multiplicative}. Additionally, it is straightforward to implement 
it
using the objective functions of FADSG-I and II (for more details, see [Algorithm~\ref{alg:sg++}, Appendix~\ref{app:sg++}]).

An alternative is to consider the iterative convex-optimization algorithms developed in \citet{danisch2017large,harb2022faster,nguyen2024multiplicative}. These methods are centered around solving the dual LP relaxation of DSG \citep{charikar2000greedy}. However, modifying and implementing these algorithms (which are tailored for DSG) for our problems poses a greater challenge relative to \textsc{Super-Greedy++}. 
FADSG-I and II have to be reformulated as a constrained convex optimization problem via an extension of the dual linear programming (LP) relaxation of DSG proposed in \citet{charikar2000greedy}. Moreover, since the algorithms of \citet{danisch2017large,harb2022faster,nguyen2024multiplicative} are tailored to exploit the structure of the dual LP relaxation based on DSG, it is not clear to what extent they would need to be modified in order to be applicable for our problems, and whether the requisite changes would still result in low-complexity updates. Another practical drawback of these continuous methods (even when applied to plain DSG) is that a subroutine known as fractional peeling \citep{harb2022faster} has to be applied on each of the generated iterates if one desires to extract an estimate of the densest supermodular subset (which corresponds to the densest subgraph for DSG) at each iteration. However, performing a single round of fractional peeling incurs a complexity of $O(m + n\log n)$, which has the same complexity order as a single iteration of \textsc{Super-Greedy++}. Consequently, determining the point at which the estimate of the DSS has converged from the continuous iterates can prove to be computationally expensive, unless the fractional peeling subroutine is sparingly used. Even then, this can come at the cost of executing the convex optimization algorithms for a larger number of iterations than required for the sequence of DSS estimates to converge. We point out that this is a non-issue in \textsc{Super-Greedy++}, as its combinatorial nature makes it easy to keep track of the best estimate of the DSS at each iteration. Owing to these reasons, we elect to choose \textsc{Super-Greedy++} as the algorithmic primitive for FADSG-I and II. 


Note that the objective function of FADSG-I can be expressed in the form of the DSS problem, with a non-negative, monotone supermodular numerator. Hence, the \textsc{Super-Greedy++} algorithm can be applied directly to problem \eqref{eq:FDS} to obtain a high-quality approximate solution. For FADSG-II, its objective function can be expressed as well as the ratio $F(\setS)/|\setS|$, where the numerator is supermodular, making problem \eqref{eq:FDS:2} a special case of DSS. However, since the numerator may not always be non-negative for every $\lambda > 0$, \textsc{Super-Greedy++} does not guarantee an $(1-\epsilon)$-approximation in all cases (see \citet{huang2024densest} for a counter-example). Nonetheless, 
\textsc{Super-Greedy++} can still be applied on problem \eqref{eq:FDS:2} and our experiments indicate that it can serve as an effective heuristic.
For details regarding implementation, refer to Appendix~\ref{app:sg++}.

It is worth pointing out that although we  employ approximation algorithms for our problem formulations, there is a difference compared to that of the prior art \cite{anagnostopoulos2018fair,anagnostopoulos2020spectral,miyauchi2023densest}. More specifically, we employ approximation as a means of improving scalability, as opposed to dealing with intractability.

\section{Experimental Evaluation}
\label{sec:experiments}

\textbf{Experimental Setup:} We consider various real-world datasets, with Table~\ref{tab:datasets} summarizing their statistics (for more information, see Appendix~\ref{app:datasets}). The implementation of our approach is publicly available on Github\footnote{\url{https://github.com/ekariotakis/fadsg_tmlr2025}}.

\begin{table}[t!]
\caption{Summary of dataset statistics: number of vertices ($n$), number of edges ($m$), and size of protected subset ($n_p$). 
The datasets contain multiple graphs and the statistics represent mean values with standard deviations.
The full list of datasets can be found in Appendix~\ref{app:datasets}.}
\begin{center}
\begin{small}
\begin{sc}
\begin{tabular}{lcccr}
\toprule
Dataset       & $n$    & $m$     & $n_p$ \\
\midrule
Amazon (10)   & $3699\pm 2764$ & $22859\pm 18052$  & $1927\pm 2537$ \\
LastFM (19)   & $7624$ & $27806$ & $424\pm 454$ \\
Twitch (6)    & $5686\pm 2393$ & $71519\pm 45827$  & $2523\pm 1787$ \\
\bottomrule
\end{tabular}
\end{sc}
\end{small}
\end{center}
\label{tab:datasets}
\end{table}

The following baselines were selected as performance benchmarks:
\begin{itemize}[topsep=-10pt]
  \setlength\itemsep{-0.1em}
    \item \textit{2-DFSG} \citep{anagnostopoulos2020spectral, anagnostopoulos2018fair}. The algorithm starts from the solution of DSG and subsequently introduces vertices from the underrepresented group in an {\em arbitrary} fashion, until perfect balance is reached.
    \item \textit{PS, FPS} \citep{anagnostopoulos2020spectral, anagnostopoulos2018fair}. Paired Sweep (PS) and Fair Paired Sweep (FPS) are spectral algorithms. They categorize the elements based on their group and sort them in a manner reminiscent of spectral clustering.
    \item \textit{DDSP} \citep{miyauchi2023densest}. The algorithm initially computes a constant-factor approximate solution to Dal$k$S with $k= \lceil 1/\alpha \rceil$, with $\alpha$ being the maximum desired fraction of monochromatic vertices. Then, the algorithm makes the solution feasible by adding an \textit{arbitrary} vertex from the least represented group, similar to $2$-DFSG.
\end{itemize}

\subsection{Evaluation of FADSG-I} \label{sec:experiments:1}
\textbf{Comparison with prior art:}
In order to compare FADSG-I with the baselines,
we set the target fairness level $\alpha=0.5$. Therefore, our formulation aims to extract a dense subgraph with perfect balance. We point out that we exclude datasets for which the protected vertices in the DSG solution are more than the unprotected, as they already meet and exceed the target $\alpha=0.5$.
Figures~\ref{fig:compare:twitch} and~\ref{fig:compare:amazon} illustrate a comparative analysis with prior art. 
The top sections showcase the price of fairness (\textsc{PoF}) for each algorithm across various datasets, while the bottom sections depict the fraction of protected vertices in the solutions.
As the results indicate, our proposed formulation consistently yields the lowest \textsc{PoF} values across a broad spectrum of scenarios.\footnote{Note that, in all experiments, in order to compute $\rho^*_\setG$,
we employ the exact max flow algorithm \citep{goldberg1984finding}.}

It can be observed from Figures~\ref{fig:compare:twitch} and~\ref{fig:compare:amazon} that for certain datasets, FADSG-I does not find a perfectly balanced subgraph (see Figure~\ref{fig:compare:amazon}, \textsc{Amazon hpc} \& \textsc{op}). To obtain a better understanding, we plot the regularization path of $r_1(\setS)$ for these datasets in Figure~\ref{fig:compare:lambdas}.
A noteworthy observation is the emergence of  ``fairness plateaus'' where the fairness level remains constant across intervals of $\lambda$. Since FADSG-I is tractable, these figures correspond to the {\em actual} Pareto frontier which characterizes the trade-off between fairness and density.  
This phenomenon suggests that the graph's inherent structure leads to certain fairness levels that arise naturally. For instances where the data's underlying structure does not inherently support a perfect division (i.e., \textsc{Amazon hpc} \& \textsc{op}), FADSG-I does not enforce it. 
Figure~\ref{fig:compare:lastfm} depicts the outcomes in datasets where there exists an extreme imbalance in the composition of the protected and unprotected subgroups, i.e., we have $n_p \ll n$ (see Table~\ref{tab:datasets:lastfm}, Appendix~\ref{app:datasets}). 
In such cases, DDSP consistently fails to produce balanced/fair solutions. In fact, we observe an extreme gap to the desired fairness level of $0.5$, as well as a \textsc{PoF} that is always worse than that of FADSG-I. 
The other prior methods always achieve a perfect balance, but with a significantly higher \textsc{PoF} compared to our algorithm.
This is an important feature of FADSG-I, as it can provide a near optimal solution, when the nature of the graph allows it, while 
almost always
paying the least in terms of \textsc{PoF}.\\

\begin{figure}[bt!]
\begin{center}
\centerline{\includegraphics[width=.47\linewidth]{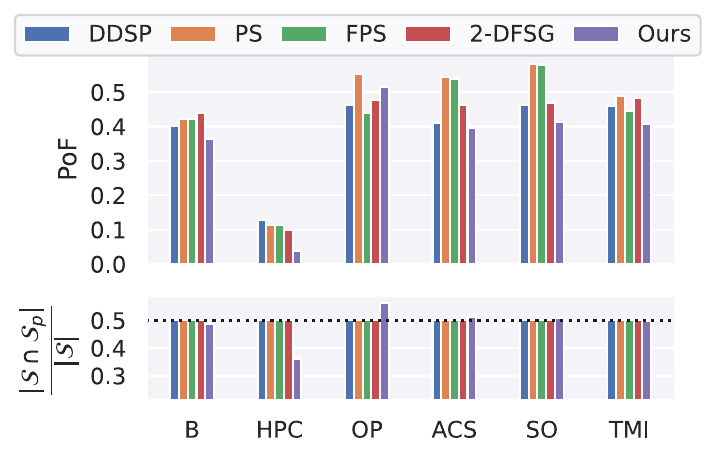}}
\caption{Comparing induced perfectly balanced fair subsets of prior art and FADSG-I (Ours), on different \textsc{Amazon} datasets. Top: \textsc{PoF}. Bottom: Fraction of protected vertices in induced subsets, $r_1(\setS)$.}
\label{fig:compare:amazon}
\end{center}
\end{figure}

\begin{figure}[bt!]
\begin{center}
\subfigure{\label{fig:compare:lambdas:amazon:hpc}\includegraphics[height=2.9cm]{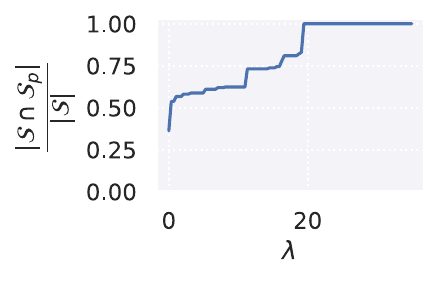}}
\subfigure{\label{fig:compare:lambdas:amazon:op}\includegraphics[height=2.9cm]{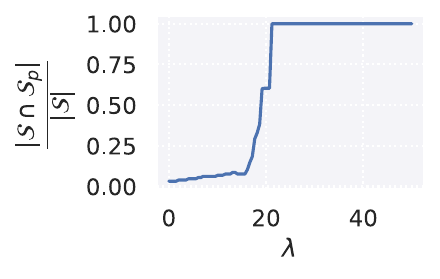}}
\subfigure{\label{fig:compare:lambdas:twitchRU}\includegraphics[height=2.9cm]{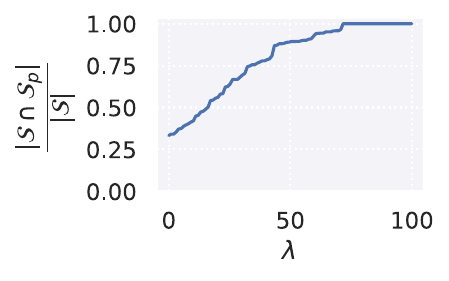}}
\caption{The fraction of protected vertices in the induced subset, $r_1(\setS)$, as function of $\lambda$, for 
 FADSG-I. Left: \textsc{Amazon hpc} - Middle: \textsc{Amazon op} - Right: \textsc{Twitch ptbr}.}
\label{fig:compare:lambdas}
\end{center}
\end{figure}

\begin{figure*}[bt!]
\begin{center}
\centerline{\includegraphics[width=.9\linewidth]{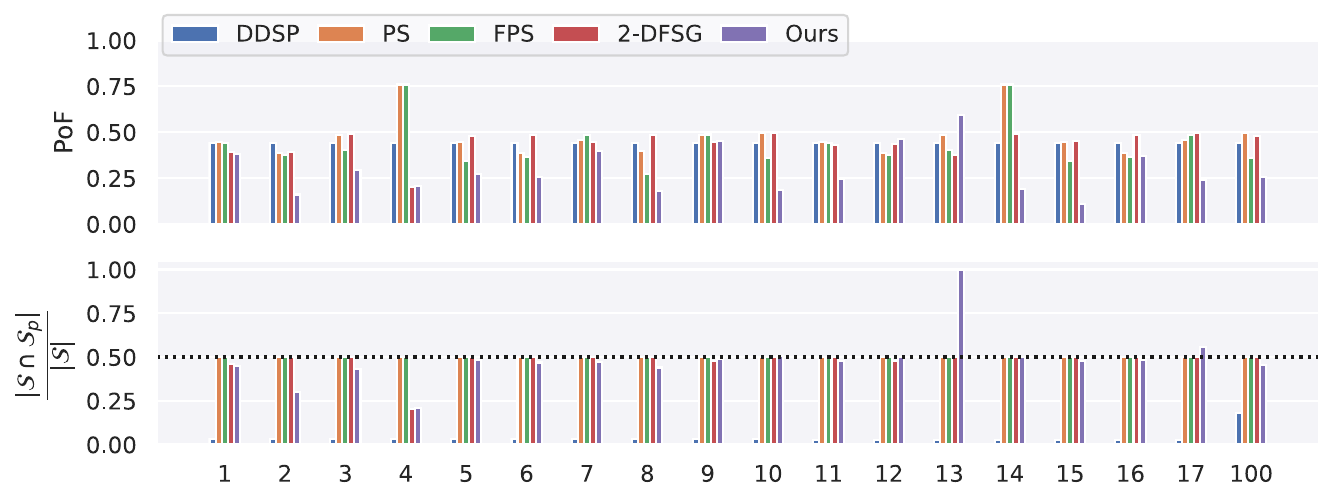}}
\caption{Comparing induced perfectly balanced fair subsets of prior art and FADSG-I (Ours), on different \textsc{LastFM} datasets.
Top: \textsc{PoF}. Bottom: Fraction of protected vertices in induced subsets, $r_1(\setS)$.}
\label{fig:compare:lastfm}
\end{center}
\end{figure*}

\noindent \textbf{Various target fairness levels:} Next, we investigate the ability of FADSG-I to extract subgraphs corresponding to various fairness levels, contrary to
PS, FPS, or $2$-DFSG. Although DDSP allows adjusting group representation levels, it can't guarantee {\em a priori} which group (i.e., protected or unprotected) will meet them. 
Returning to Figure~\ref{fig:compare:lambdas}, it can be observed that the fairness levels are non-decreasing with $\lambda$. Notably, this value can only surpass or equal that of the classic DSG solution 
($\lambda=0$), aligning with our objective of enhancing fairness. The achievable fairness levels are contingent upon the underlying nature of the dataset, with datasets like \textsc{Twitch ptbr} exhibiting smoother curves (Figure~\ref{fig:compare:lambdas} - Right), resulting in naturally varying degrees of fairness. In contrast, certain datasets, such as \textsc{Amazon hpc} \& \textsc{op} (Figure~\ref{fig:compare:lambdas}), display plateaus that correspond to specific fairness levels induced in the final solution. 
Figure~\ref{fig:alphas:twitch:frac} demonstrates the flexibility of FADSG-I in extracting fair dense subgraphs that attain varying target fairness levels $\alpha$, which correspond to the fraction of protected vertices in the induced subgraph. Starting from the fairness level attained by the DSG solution, we can introduce more fairness to our solution by increasing the desired value of $\alpha$. 
\begin{figure}[bt!]
\begin{center}
{\includegraphics[width=.53\linewidth]{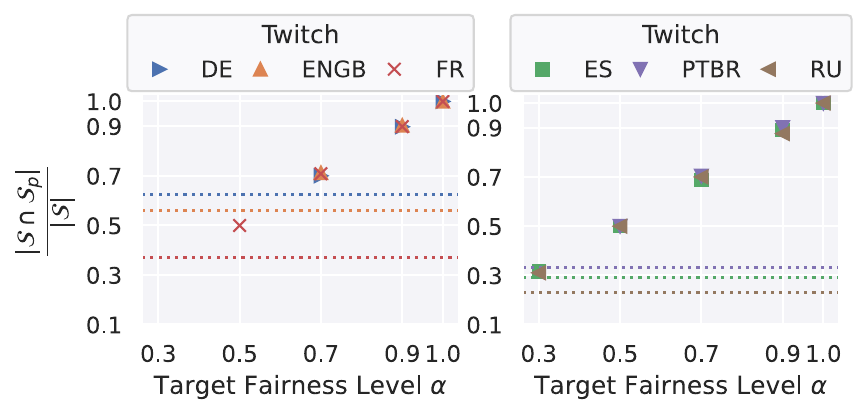}}
\caption{Fraction of protected vertices in each induced subgraph, $r_1(\setS)$, of FADSG-I, for various values of target fairness level $\alpha$, for the \textsc{Twitch} datasets. 
(Dotted lines: fairness value in the solution of classic DSG that we want to enhance.)
}
\label{fig:alphas:twitch:frac}
\end{center}
\end{figure}


\subsection{Evaluation of FADSG-II}
\begin{figure}[bt!]
\begin{center}
\centerline{\includegraphics[width=.45\linewidth]{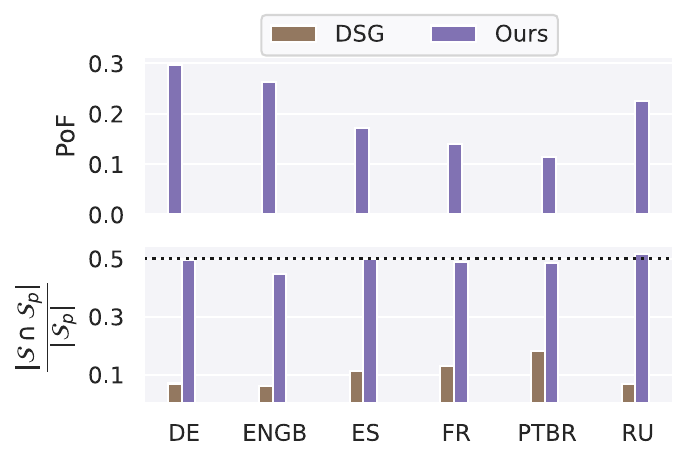}}
\caption{\textsc{Twitch} datasets. Top: \textsc{PoF} of FADSG-II (Ours) for $\delta=1$. Bottom: Comparing Ours ($\delta=1$) with DSG. 
Fraction of the protected subset that belongs in the solutions.
}
\label{fig:compare:2:twitch}
\end{center}
\end{figure}

Direct comparison with prior art is not feasible here, since FADSG-II utilizes a different fairness criterion, as detailed in Section \ref{sec:problem}.
In Figure~\ref{fig:compare:2:twitch}, 
we compare this formulation with DSG. 
The top section depicts the \textsc{PoF} of the solution obtained from FADSG-II, and the bottom one illustrates the fraction of all of the protected vertices in the solutions of DSG and FADSG-II. 
For the \textsc{Twitch} datasets, DSG's solution includes only about $1/5$ of the protected vertices.
This is clearly due to the fact that DSG does not take into account the protected attributes of the vertices. On the other hand, FADSG-II for $\delta=1$,  yields solutions with nearly half of the protected vertices, 
while achieving a relatively low \textsc{PoF}.
Similar to FADSG-I, there are instances where the solution does not precisely align with the desired fairness level. This can be attributed to the structure of each graph, which may not inherently exhibit the specific 
fraction 
we are aiming for 
(see Appendix~\ref{app:results} for more details).

\section{Conclusions}
\label{sec:conclusion}
We addressed the fairness-aware densest subgraph problem by introducing two tractable formulations, each incorporating a different regularizer in the objective function,
to provide two different notions of fairness. 
The inclusion of fairness introduces the inherent fairness-density trade-off.
In order to quantify this trade-off we proposed a measure of the price of fairness, and we analyzed it based on an illuminating example -- which revealed the underlying tension between density and fairness.
To assess the effectiveness of the proposed solutions, we employed a peeling-based approximation algorithm, and compared their performance against the prior art using real-world datasets.
Our findings revealed that FADSG-I generally incurs a lower price of fairness (\textsc{PoF}) in achieving perfect balanced subgraphs, sometimes even less than half of that of prior art. FADSG-I can also induce a perfectly balanced subgraph in some extreme scenarios where existing approaches fall short, and it is capable of providing various fairness levels, other than perfect balance. Furthermore, FADSG-II promotes an alternative notion of inclusion of protected vertices, often achieving this with a relatively low \textsc{PoF}. Together, these formulations offer flexible solutions for balancing fairness and density in subgraph discovery. Extending our approach to handle multiple vertex groups using regularizers which are designed to ensure that no protected group is inadequately represented constitutes a direction of future work.

\section*{Broader Impact}
The authors posit this work on the assumption that practitioners will appropriately select the protected subset and publicly disclose it along with the results of their analysis.

\section*{Acknowledgments}
Supported in part by the National Science Foundation, USA under grant IIS-1908070 and the KU Leuven Special Research Fund BOF/STG-22-040.

\bibliography{references}
\bibliographystyle{tmlr}

\clearpage
\appendix

\section{Lemmas} \label{app:lemmas}
\subsection{Proof of Lemma~\ref{lemma:1}} \label{app:proofs:1}
\begin{proof}
 We express $r_2(.)$ in set notation as 
 \begin{equation}
     r_2(\setS) = \frac{|\setS|+|\setS_p| - 2|\setS\cap\setS_p|}{|\setS|}.
 \end{equation}
 The claim then follows by noting that $r_2(\setS) = 1 \Leftrightarrow |\setS_p| - 2|\setS\cap\setS_p| = 0$.
\end{proof}

\subsection{Proof and Intuition of Lemma~\ref{lemma:2}} \label{app:proofs:2}

\begin{minipage}[t]{0.58\textwidth}
\begin{proof}
Note that $r_2(\cdot)$ can be expressed in terms of $r_1(\cdot)$ as follows
\begin{equation}
 r_2(\setS) = 1+\frac{n_p}{|\setS|} - 2r_1(\setS).
\end{equation}
$\bullet$ For the first inequality:
Since $|\setS_p| \geq |\setS_p \cap \setS| = r_1(\setS)|\setS|$, it implies that $\frac{|\setS_p|}{|\setS|} \geq r_1(\setS)$. Hence, we obtain
\begin{equation}
    r_2(\setS) \geq 1-r_1(\setS).
\end{equation}
$\bullet$ For the second inequality:
We have that $\frac{n_p}{|\setS|} \leq \frac{n_p}{2}$, since each solution has at least one edge. Hence, we obtain
\begin{equation}
    2r_1(\setS)+r_2(\setS) \leq 1 + \frac{n_p}{2}.
\end{equation}
\end{proof}
\end{minipage}
\hfill
\begin{minipage}[t]{0.36\textwidth}
\begin{figure}[H] 
  \vspace{-8pt}
  \begin{center}
    \includegraphics[width=0.8\linewidth]{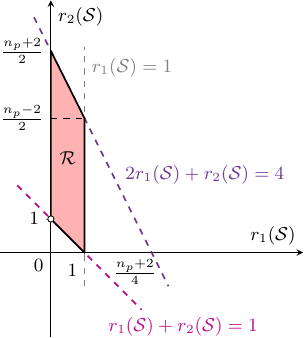}
  \end{center}
  \vspace{-8pt}
  \caption{The joint fairness region of the regularizers $r_1(\setS)$ and $r_2(\setS)$, for a graph with $n_p$ protected vertices.}
  \label{app:fig:R}
\end{figure}
\end{minipage}
\ \\ \ \\ \ \\
Furthermore, we define the joint fairness region characterized by  the above inequalities as the following polytope:
\begin{equation}
\begin{aligned}
    \!\!\! \mathcal{R}\! :=\! \left\{(r_1(\setS),r_2(\setS)): 0\leq r_1(\setS) \leq 1, 0\leq r_2(\setS)\leq 1+\frac{n_p}{2}, r_1(\setS)+r_2(\setS)\geq 1, 2r_1(\setS)+r_2(\setS) \leq 1+\frac{n_p}{2} \right\} \\
    \setminus \{ r_1(\setS)=0 \land r_2(\setS)=1 \}
\end{aligned}
\end{equation}
Note that the point $(0,1)$ is excluded from $\mathcal{R}$, since if $r_1(\setS)=0\Leftrightarrow|\setS\cap\setS_p|=0$ and $r_2(\setS)=1\Leftrightarrow |\setS\cap\setS_p|=\frac{1}{2}|\setS_p| \Rightarrow |\setS_p|=0$, which is a contradiction since we assume that there always exists at least one protected vertex in the graph. We can approximate this point when $n_p\ll n$. In this case, when $r_1(\setS)\rightarrow0$, it can still hold that half of the protected vertices belong in the solution, i.e., $r_2(\setS)=1$.

Let us consider the three remaining corner-points of $\mathcal{R}$. 
\begin{itemize}
    \item (Bottom right:) We can attain $(1,0)$ when $r_2(\setS)=0$, meaning that the solution is the protected set $\setS_p$, hence $r_1(\setS) = \frac{|\setS\cap\setS_p|}{|\setS|}=1$.
    \item (Top left:) We can attain $(0,\frac{n_p+2}{2})$ only when the solution contains just one edge that connects two \emph{unprotected} vertices. This holds because when $r_1(\setS)=0 \Leftrightarrow |\setS\cap\setS_p|=0\Rightarrow r_2(\setS) = \frac{n_p}{2}+1$, which is equal to the desired value only for $|\setS|=2$.
    \item (Top right:) We can attain $(1,\frac{n_p-2}{2})$ only when the solution contains just one edge that connects two \emph{protected} vertices. This holds because when $r_1(\setS)=1 \Leftrightarrow |\setS\cap\setS_p|=|\setS| \Rightarrow r_2(\setS) = \frac{n_p}{2}-1$, which is equal to the desired value only for $|\setS|=2 \Leftrightarrow |\setS\cap\setS_p|=2$.
\end{itemize}

\section{Tractability details} \label{app:tractability}
First, consider problem \eqref{eq:FDS}. Define the functions $p_{1}(\vx,\lambda):= \vx^\top\vA\vx + \lambda \cdot\ve_p^\top\vx$, and $q(\vx) := \ve^\top\vx$. Thus, the objective function can be expressed as $g(\vx,\lambda) = p_{1}(\vx,\lambda)/q(\vx)$.
Since $p_{1}(\cdot,\lambda)$ is the sum of a non-negative, monotone supermodular function and a non-negative modular function, the result is also non-negative, monotone supermodular (for every $\lambda \geq 0$).
This implies that FADSG-I corresponds to maximizing the ratio of a monotone, non-negative supermodular function and a modular function. The solution to such a problem can be determined in polynomial-time by solving a finite number of supermodular maximization (SFM) sub-problems within a binary-search framework \citep{fujishige2005submodular,bai2016algorithms}. Since SFM incurs polynomial-time complexity \citep{grotschel2012geometric}, the claim follows. Furthermore, the quadratic form of $p_{1}(\cdot,\lambda)$ can be exploited to show that each sub-problem is equivalent to solving a minimum-cut problem \citep{goldberg1984finding,kolmogorov2002energy}, which can be accomplished via maximum-flow, as we show in Appendix~\ref{app:mflow:FADSG-I}.
Each maximum-flow problem incurs complexity $O(nm)$ \citep{orlin2013max} and an upper-bound on the 
maximum 
number of binary-search steps required is $O(\log(n+\lambda)+\log(n))$ (see Appendix~\ref{app:bounds}).

Next, consider problem \eqref{eq:FDS:2}, and define the function $p_{2}(\vx,\lambda):= \vx^\top\vA\vx - \lambda \cdot(\ve-2\ve_p)^\top\vx - \lambda \cdot n_p$. Then, the objective function of FADSG-II can be expressed as the ratio $h(\vx,\lambda) = p_{2}(\vx,\lambda)/q(\vx)$. While $p_{2}(\cdot,\lambda)$ is a supermodular function (being the sum of a supermodular and a modular function), it is not guaranteed to be non-negative for every choice of $\lambda >0$. Additionally, it is not normalized either, since $p_{2}(\mathbf{0},\lambda) = -\lambda\cdot n_p$. Nevertheless, we can still show that \eqref{eq:FDS:2} can be solved via a sequence of maximum-flow problems using a small modification of  the binary-search strategy, as we explain in Appendix~\ref{app:mflow:FADSG-II}. Thus, FADSG-II can also be solved exactly in polynomial-time.

\subsection{Exact solutions via max-flow}
\subsubsection{Applying max-flow on FADSG-I} \label{app:mflow:FADSG-I}

\begin{wrapfigure}[16]{r}{0.3\textwidth}
  \begin{center}
    \includegraphics[width=1\linewidth]{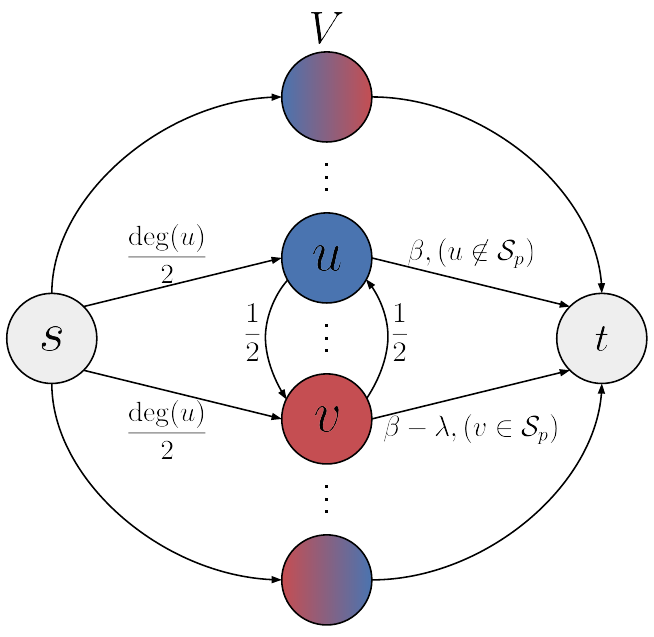}
  \end{center}
  \caption{An edge-weighted directed graph $(\mathcal{U}, \setA, w_\beta)$ constructed from $\setG$, $\beta$ and $\lambda$.}
  \label{app:fig:mflow_graph}
\end{wrapfigure}

Based on the interpretation of \citet[Algorithm 1]{lanciano2024survey} for Goldberg's max-flow based algorithm \citep{goldberg1984finding}, we only need to make a small modification in order to apply it to FADSG-I.

The algorithm maintains an estimate of the unknown optimal value by tracking upper and lower bounds, which it progressively tightens through binary search. To update them, the algorithm employs a max-flow (equivalently min-cut) computation.
Let $\beta\geq0$ be the midpoint of the bounds kept in the current iteration of the max-flow based algorithm. For $\setG = (\setV,\setE)$, $\setS_p$ the protected set and $\lambda$ the regularization parameter of FADSG-I (constant), the algorithm constructs the following edge-weighted directed graph $(\mathcal{U}, \setA, w^I_{\beta})$ (Figure~\ref{app:fig:mflow_graph}), with  $\mathcal{U} = \setV \cup \{s,t\},\ \setA = \setA_s\cup\setA_{\setE}\cup\setA^p_t\cup\setA^{\bar{p}}_t$, where $\setA_s = \{(s,v)|v\in\setV\}$, $\setA_{\setE} = \{(u,v),(v,u)|\{u,v\}\in \setE\}$, $\setA_t^p = \{(u,t)|v\in\setS_p\}$ and $\setA_t^{\bar{p}} = \{(u,t)|v\in\bar{\setS_p}\}$ and $w^I_{\beta}:\setA\rightarrow\mathbb{R}_+$ such that
\begin{equation}
    w^I_{\beta}(e) = \left\{
    \begin{array}{ll}
          \text{deg}(v) & e=(s,v)\in\setA_s \\
          1 & e\in\setA_\setE\\
          \beta-\lambda & e\in\setA_t^p \\
          \beta & e\in\setA_t^{\bar{p}} \\
    \end{array} 
    \right. .
\end{equation}

An $s-t$ cut of $(\mathcal{U}, \setA, w_{\beta})$ is a partition of $(\setX,\setY)$ of $\mathcal{U}$, such that $s\in\setX$ and $t\in \setY$. Then define the cost of an $s-t$ cut $(\setX,\setY)$, as ${\mathrm{cost}(\setX,\setY):=\sum_{(u,v)\in\setA : u\in\setX,v\in\setY}w_{\beta}(u,v)}$.

\begin{lemma}
    Let $(\setX,\setY)$ be an $s-t$ cut of the edge-weighted directed graph $(\mathcal{U},\setA,w^I_{\beta})$, and $\setS=\setX\setminus\{s\}$. Then it holds that $\mathrm{cost}(\setX,\setY) = 2m + \beta|\setS| - p_1(\setS,\lambda)$, where $p_1(\setS,\lambda) = 2 e(\setS) + \lambda|\setS\cap\setS_p|$. 
\end{lemma}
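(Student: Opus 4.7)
The plan is to compute $\mathrm{cost}(\setX,\setY)$ by partitioning the cut edges according to their source arc set ($\setA_s$, $\setA_\setE$, $\setA_t^p$, $\setA_t^{\bar p}$), summing the weights contributed by each, and then collapsing the resulting expression using a standard degree identity. Let $\bar\setS := \setV \setminus \setS$, so that $\setY = \bar\setS \cup \{t\}$.

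First I would enumerate the four contributions. Arcs in $\setA_s$ cross the cut precisely when their head lies in $\bar\setS$, contributing $\sum_{v \in \bar\setS}\deg(v)$. Arcs in $\setA_\setE$ cross precisely for edges $\{u,v\}\in\setE$ with $u\in\setS$ and $v\in\bar\setS$ (each such undirected edge yields exactly one directed crossing arc in $\setA_\setE$), contributing the cut size $e(\setS,\bar\setS)$. Arcs in $\setA_t^p$ cross whenever their tail lies in $\setS\cap\setS_p$, contributing $(\beta-\lambda)|\setS\cap\setS_p|$, and analogously arcs in $\setA_t^{\bar p}$ contribute $\beta(|\setS|-|\setS\cap\setS_p|)$. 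Adding these gives
\begin{equation*}
\mathrm{cost}(\setX,\setY) = \sum_{v\in\bar\setS}\deg(v) + e(\setS,\bar\setS) + (\beta-\lambda)|\setS\cap\setS_p| + \beta\bigl(|\setS|-|\setS\cap\setS_p|\bigr).
\end{equation*}

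The key simplification is the handshake-type identity $\sum_{v\in\setS}\deg(v) = 2e(\setS) + e(\setS,\bar\setS)$, which gives $\sum_{v\in\bar\setS}\deg(v) = 2m - 2e(\setS) - e(\setS,\bar\setS)$. Substituting this into the expression above cancels the $e(\setS,\bar\setS)$ terms exactly, and the $|\setS\cap\setS_p|$ contributions combine into $-\lambda|\setS\cap\setS_p|$. What remains is $2m - 2e(\setS) - \lambda|\setS\cap\setS_p| + \beta|\setS|$, which rewrites as $2m + \beta|\setS| - p_1(\setS,\lambda)$ by the definition $p_1(\setS,\lambda) = 2e(\setS) + \lambda|\setS\cap\setS_p|$.

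There is no real obstacle: the proof is purely accounting once the directionality of the $\setA_\setE$ arcs is handled carefully (so that each cut edge is counted once rather than twice) and the handshake identity is applied correctly. The only subtle point worth flagging is that the arcs out of $s$ and into $t$ have weights independent of the cut except through membership in $\setS$, which is exactly what makes the minimum $s$--$t$ cut encode the maximization of $p_1(\setS,\lambda) - \beta|\setS|$ and thereby drive the binary search over $\beta$.
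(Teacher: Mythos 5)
Your proposal is correct and follows essentially the same route as the paper's proof: partition the crossing arcs by their source set, sum the four contributions, and collapse $\sum_{v\in\setV\setminus\setS}\deg(v) + e(\setS,\setV\setminus\setS)$ to $2m - 2e(\setS)$ via the handshake identity. Your explicit note that only one of the two directed arcs per cut edge crosses from $\setX$ to $\setY$ is a correct and slightly more careful rendering of a step the paper leaves implicit.
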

\begin{proof}
    Any edge from $\setX$ to $\setY$ is contained in exactly one of the following sets: $\{(s,v)\in\setA_s | v\in\setV\setminus\setS\}$, $\{(u,v)\in\setA_{\setE} | u\in\setS, v\in\setV\setminus\setS\}$, $\{(v,t)\in\setA_t^p | v\in\setS\}$, $\{(v,t)\in\setA_t^{\bar{p}} | v\in\setS\}$. Therefore the cost of $(\setX,\setY)$ is
    \begin{equation}
        \begin{aligned}
            \mathrm{cost}(\setX,\setY) &= \sum_{(s,v)\in\setA_s : v\in\setV\setminus\setS}\!\!\!\!\!\!\!\! w^I_{\beta}(s,v) + \!\!\!\! \sum_{(u,v)\in\setA_{\setE} : u\in\setS,v\in\setV\setminus\setS}\!\!\!\!\!\!\!\! w^I_{\beta}(u,v) + \!\!\!\! \sum_{(u,t)\in\setA_t^p : v\in\setS}\!\!\!\!\!\! w^I_{\beta}(v,t) + \!\!\!\! \sum_{(u,t)\in\setA_t^{\bar{p}} : v\in\setS}\!\!\!\!\!\! w^I_{\beta}(v,t) \\
            &= \sum_{v\in\setV\setminus\setS} \mathrm{deg}(v) + |\{(u,v)\in\setE | u\in\setS,v\in\setV\setminus\setS\} + (\beta-\lambda)|\setS\cap\setS_p| + \beta(|\setS|-|\setS\cap\setS_p|) \\
            &= 2m - 2e(\setS) + \beta|\setS| - \lambda|\setS\cap\setS_p| = 2m + \beta|\setS| - 2e(\setS) - \lambda|\setS\cap\setS_p| \\
            &= 2m + \beta|\setS| -p_1(\setS,\lambda) .
        \end{aligned}
    \end{equation}
\end{proof}

Let $(\setX,\setY)$ be the minimum $s-t$ cut (or max-flow) of $(\mathcal{U},\setA,w^I_{\beta})$. If $\setX \neq \{s\}$, then $\setS = \setX\setminus\{s\}$ is a vertex subset that satisfies $\beta|\setS| - p_1(\setS,\lambda) \leq 0 \Leftrightarrow \frac{p_1(\setS,\lambda)}{|\setS|}\geq \beta$, and we can update the lower bound by $\beta$. If $\setX = \{s\}$, then there is no $\setS\subset\setV$ that satisfies $\frac{p_1(\setS,\lambda)}{|\setS|}> \beta$, and the upper bound can be replaced by $\beta$. Therefore, we can apply the same procedure as that of \citet[Algorithm 1]{lanciano2024survey}, with a small change at the initial upper bound $\beta_{\mathrm{ub}}^{(0)} \leftarrow m$.

\subsubsection{Applying max-flow on FADSG-II} \label{app:mflow:FADSG-II}
We follow the same procedure as we did in Appendix~\ref{app:mflow:FADSG-I}, but with a slightly different edge-weighted graph. For $\setG = (\setV,\setE)$, $\setS_p$ the protected set and $\lambda$ the regularization parameter of FADSG-II (constant), the algorithm constructs the following edge-weighted directed graph $(\mathcal{U}, \setA, w^{II}_{\beta})$, with  $\mathcal{U} = \setV \cup \{s,t\},\ \setA = \setA_s\cup\setA_{\setE}\cup\setA^p_t\cup\setA^{\bar{p}}_t$, where $\setA_s = \{(s,v)|v\in\setV\}$, $\setA_{\setE} = \{(u,v),(v,u)|\{u,v\}\in \setE\}$, $\setA_t^p = \{(u,t)|v\in\setS_p\}$ and $\setA_t^{\bar{p}} = \{(u,t)|v\in\bar{\setS_p}\}$ and $w^{II}_{\beta}:\setA\rightarrow\mathbb{R}_+$ such that
\begin{equation}
    w^{II}_{\beta}(e) = \left\{
    \begin{array}{ll}
          \text{deg}(v) & e=(s,v)\in\setA_s \\
          1 & e\in\setA_\setE\\
          \beta-\lambda & e\in\setA_t^p \\
          \beta+\lambda & e\in\setA_t^{\bar{p}} \\
    \end{array} 
    \right. .
\end{equation}
\begin{lemma}
    Let $(\setX,\setY)$ be an $s-t$ cut of the edge-weighted directed graph $(\mathcal{U},\setA,w^{II}_{\beta})$, and $\setS=\setX\setminus\{s\}$. Then it holds that $\mathrm{cost}(\setX,\setY) = (2m-\lambda \cdot n_p) + \beta|\setS| - p_2(\setS,\lambda)$, where $p_2(\setS,\lambda) = 2e(\setS) - \lambda( |\setS|+|\setS_p| - 2|\setS\cap\setS_p| )$. 
\end{lemma}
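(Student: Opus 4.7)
The plan is to mimic the proof of the analogous lemma for FADSG-I in Appendix~\ref{app:mflow:FADSG-I}, which computes the cost of an $s$--$t$ cut by partitioning the crossing edges according to the four disjoint edge sets $\setA_s$, $\setA_\setE$, $\setA_t^p$, $\setA_t^{\bar p}$ and summing their contributions. The only difference from the FADSG-I computation is the weight assignment on the terminal edges (here $\beta-\lambda$ on $\setA_t^p$ and $\beta+\lambda$ on $\setA_t^{\bar p}$, instead of $\beta-\lambda$ and $\beta$ respectively), so the algebraic manipulation is structurally identical but the final collection of $\lambda$-terms differs.

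Concretely, I would first identify the edges crossing from $\setX$ to $\setY$ within each of the four groups. In $\setA_s$, only edges $(s,v)$ with $v\in\setV\setminus\setS$ cross, contributing $\sum_{v\notin\setS}\deg(v)=2m-\sum_{v\in\setS}\deg(v)$. In $\setA_\setE$, for each undirected edge of $\setG$ with exactly one endpoint in $\setS$ precisely one of its two directed copies crosses, so the contribution is $\sum_{v\in\setS}\deg(v)-2e(\setS)$. For $\setA_t^p$, the crossing edges are $\{(v,t):v\in\setS\cap\setS_p\}$, contributing $(\beta-\lambda)|\setS\cap\setS_p|$. For $\setA_t^{\bar p}$, the crossing edges are $\{(v,t):v\in\setS\setminus\setS_p\}$, contributing $(\beta+\lambda)(|\setS|-|\setS\cap\setS_p|)$.

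Summing the four contributions, the $\sum_{v\in\setS}\deg(v)$ pieces cancel and the terminal-edge terms combine into $(\beta+\lambda)|\setS|-2\lambda|\setS\cap\setS_p|$, giving $\mathrm{cost}(\setX,\setY)=2m-2e(\setS)+\beta|\setS|+\lambda|\setS|-2\lambda|\setS\cap\setS_p|$. I would then rewrite this by adding and subtracting $\lambda\,n_p=\lambda|\setS_p|$, which matches the claim, since
\begin{equation*}
(2m-\lambda n_p)+\beta|\setS|-p_2(\setS,\lambda)=2m+\beta|\setS|-2e(\setS)+\lambda|\setS|-2\lambda|\setS\cap\setS_p|.
\end{equation*}

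The main obstacle is not conceptual but bookkeeping: the asymmetric signs on the two terminal-edge families $\setA_t^p$ and $\setA_t^{\bar p}$ make it easy to misplace a sign when combining protected and unprotected contributions, and one has to be careful to convert between $|\setS_p|$ and $n_p$ when rewriting the identity in the form prescribed by the definition of $p_2(\cdot,\lambda)$. A subtler point worth a remark, though not needed for the lemma itself, is that the terminal capacities on $\setA_t^p$ are genuinely non-negative only when $\beta\ge\lambda$; so when this identity is subsequently embedded in a max-flow/binary-search scheme for FADSG-II, the initial search interval must be chosen to respect this, whereas the algebraic equality stated in the lemma holds unconditionally.
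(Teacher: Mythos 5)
Your proposal is correct and follows essentially the same route as the paper's proof: decompose the crossing edges into the four groups $\setA_s$, $\setA_\setE$, $\setA_t^p$, $\setA_t^{\bar p}$, sum the contributions, and regroup the $\lambda$-terms using $n_p=|\setS_p|$ to recover $p_2(\setS,\lambda)$. Your side remark about the capacities $\beta-\lambda$ requiring $\beta\ge\lambda$ for non-negativity is a reasonable observation but, as you note, does not affect the algebraic identity.
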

\begin{proof}
    Any edge from $\setX$ to $\setY$ is contained in exactly one of the following sets: $\{(s,v)\in\setA_s | v\in\setV\setminus\setS\}$, $\{(u,v)\in\setA_{\setE} | u\in\setS, v\in\setV\setminus\setS\}$, $\{(v,t)\in\setA_t^p | v\in\setS\}$, $\{(v,t)\in\setA_t^{\bar{p}} | v\in\setS\}$. Therefore the cost of $(\setX,\setY)$ is
    \begin{equation}
        \begin{aligned}
            \mathrm{cost}(\setX,\setY) &= \sum_{(s,v)\in\setA_s : v\in\setV\setminus\setS}\!\!\!\!\!\!\!\! w^{II}_{\beta}(s,v) + \!\!\!\! \sum_{(u,v)\in\setA_{\setE} : u\in\setS,v\in\setV\setminus\setS}\!\!\!\!\!\!\!\! w^{II}_{\beta}(u,v) + \!\!\!\! \sum_{(u,t)\in\setA_t^p : v\in\setS}\!\!\!\!\!\! w^{II}_{\beta}(v,t) + \!\!\!\! \sum_{(u,t)\in\setA_t^{\bar{p}} : v\in\setS}\!\!\!\!\!\! w^{II}_{\beta}(v,t)  \\
            &=\! \sum_{v\in\setV\setminus\setS}\!\! \mathrm{deg}(v) + |\{(u,v)\in\setE | u\in\setS,v\in\setV\setminus\setS\} + (\beta-\lambda)|\setS\cap\setS_p| +(\beta+\lambda)(|\setS|-|\setS\cap\setS_p|) \\
            &= 2m + \beta|\setS| - 2e(\setS) +\lambda|\setS| - 2\lambda|\setS\cap\setS_p| = 2m + \beta|\setS| -(p_2(\setS,\lambda)+\lambda|\setS_p|) \\
            &= (2m-\lambda \cdot n_p) + \beta|\setS| - p_2(\setS,\lambda).
        \end{aligned}
    \end{equation}
\end{proof}

Similarly to FADSG-I, let $(\setX,\setY)$ be the minimum $s-t$ cut (or max-flow) of $(\mathcal{U},\setA,w^{II}_{\beta})$. If $\setX \neq \{s\}$, then $\setS = \setX\setminus\{s\}$ is a vertex subset that satisfies $\beta|\setS| - p_2(\setS,\lambda) \leq 0 \Leftrightarrow \frac{p_2(\setS,\lambda)}{|\setS|}\geq \beta$, and we can update the lower bound by $\beta$. If $\setX = \{s\}$, then there is no $\setS\subset\setV$ that satisfies $\frac{p_2(\setS,\lambda)}{|\setS|}> \beta$, and the upper bound can be replaced by $\beta$. Therefore, we can again apply the same procedure as that of \citet[Algorithm 1]{lanciano2024survey}, with a small change in the initial upper bound $\beta_{\mathrm{ub}}^{(0)} \leftarrow (m-\lambda\cdot n_p)/2$, since $\lambda \cdot n_p$ is a constant.


\subsection{Upper bound on the number of binary search iterations for FADSG-I and II} \label{app:bounds}
Applying binary search to solve problems \eqref{eq:FDS} and \eqref{eq:FDS:2} requires specifying an initial interval $[L,U]$, where $(L,U)$ denote lower and upper bounds on the optimal value of \eqref{eq:FDS} and \eqref{eq:FDS:2}, respectively, with $L < U$. 
The number of binary search iterations needed to solve problems \eqref{eq:FDS} and \eqref{eq:FDS:2} is
$
    \log\frac{U-L}{\Delta},
$
with $\Delta$ being the difference of any two distinct values of the objective function of FADSG-I and II. It is straight-forward to prove that $\Delta\geq O(1/n^2)$, similarly to \citet[Lemma 1]{fazzone2022discovering}.


\subsubsection{For FADSG-I}
For FADSG-I, we have the following bounds:
\begin{equation}
\begin{aligned}
\min_{\vx \in \{0,1\}^n} g(\lambda,\vx) &\geq
\min_{\vx \in \{0,1\}^n} \rho(\vx) + \lambda\min_{\vx \in \{0,1\}^n} r_1(\vx) \Rightarrow L_1 = 0. 
\end{aligned}
\end{equation}
and
\begin{equation}
\begin{aligned}
\max_{\vx \in \{0,1\}^n} g(\lambda,\vx) &\leq
\max_{\vx \in \{0,1\}^n} \rho(\vx) + \lambda\max_{\vx \in \{0,1\}^n} r_1(\vx) \Rightarrow U_1 = (n-1) + \lambda. 
\end{aligned}
\end{equation}

Thus, the number of iterations is
\begin{equation}
\begin{aligned}
    \log\frac{U_1-L_1}{\Delta} &\leq \log\frac{n-1+\lambda}{1/n^2} = \log(n-1+\lambda) + \log n^2 = O(\log(n+\lambda) + \log n).
\end{aligned}
\end{equation}

\subsubsection{For FADSG-II}
For FADSG-II, we have the following bounds:
\begin{equation}
\begin{aligned}
\min_{\vx \in \{0,1\}^n} h(\lambda,\vx) \geq
\min_{\vx \in \{0,1\}^n} \rho(\vx) - \lambda\cdot \max_{\vx \in \{0,1\}^n} r_2(\vx) 
\Rightarrow L_2 = - \lambda \cdot n. 
\end{aligned}
\end{equation}
and
\begin{equation}
\begin{aligned}
\max_{\vx \in \{0,1\}^n} h(\lambda,\vx) \leq
\max_{\vx \in \{0,1\}^n} \rho(\vx) - \lambda\cdot\min_{\vx \in \{0,1\}^n} r_2(\vx) 
\Rightarrow U_2 = n-1. 
\end{aligned}
\end{equation}

Thus, the number of iterations is
\begin{equation}
\begin{aligned}
    \log\frac{U_2-L_2}{\Delta} &\leq \log\frac{n-1+\lambda \cdot n}{1/n^2} = \log((1+\lambda)\cdot n - 1) + \log n^2 = O(\log(1+\lambda) + \log n).
\end{aligned}
\end{equation}

\section{Identifying the end of the regularization path} \label{app:reg_path}

In order to determine the choice of $\lambda$ that corresponds to the end of the regularization path of FADSG-I and II, we can consider analyzing the optimality conditions of problems \eqref{eq:FDS} and \eqref{eq:FDS:2}. Our reasoning behind this step is that for \emph{continuous, convex} regularization problems, studying the Karush-Kuhn-Tucker (KKT) conditions of the problem (which are necessary and sufficient for optimality) can facilitate the selection of the regularization parameter (see \citet{bach2012optimization}). 
However,  while our formulations are tractable, they are {\em combinatorial} in nature, for which there is no analogue of the KKT conditions in general. We note that for the special case where the objective function being maximized is \emph{supermodular}, the following characterization of the optimality conditions is known.\\

\noindent \textbf{Fact 1:} \citep{fujishige2005submodular} \emph{Consider the problem $\max_{\setA \subseteq \setV} F(\setA)$, where $F(\setA)$ is a supermodular function. Then, a set $\setA^*$ is optimal if and only if}  
\begin{equation} \label{eq:opt}
    F(\setA^*) \geq F(\setB), \forall \; \setB \subseteq \setA^* \; \textrm{and} \; \forall \; \setB \supseteq \setA^*.
\end{equation}
Hence, in order to certify that a candidate solution is optimal, it suffices to check only all its subsets and supersets, instead of examining all subsets of $\setV$. The above property of supermodular functions can be viewed as a statement to the effect that local optimality implies global optimality, akin to convex functions in the continuous domain. Going forward, this property will prove useful.  

\subsection{FADSG-I} \label{app:reg1}

Unfortunately, problem \eqref{eq:FDS} is not supermodular, and hence \eqref{eq:opt} does not directly apply. To circumvent this limitation, we make the key observation that \eqref{eq:FDS} \emph{is equivalent to solving} a supermodular maximization problem of the following form. 

\noindent \textbf{Fact 2:} \emph{Let $v^*_{\lambda}$ be the optimal value of a given instance of \eqref{eq:FDS}. Then a maximizer of \eqref{eq:FDS} is also a maximizer of the supermodular problem}
\begin{equation}\label{eq:sup1}
    \underset{\setS \subseteq \setV}{\max} ~ \biggl\{
    F(\setS):=p_1(\setS,\lambda) - v^*_{\lambda}\cdot q(\setS)
    \biggr\}.
\end{equation}
\emph{Furthermore, the optimal value of the above problem is $0$.}

\noindent Consequently, the optimality condition \eqref{eq:opt} can be applied for the above problem, which in turn will allow us to determine the smallest value of the regularization parameter $\lambda$ for which the solution of \eqref{eq:FDS} is the densest protected subgraph of $\setS_p$. Let $\setS_p^* \subseteq \setS_p$ denote the densest protected subset and $\rho(\setS_p^*)$ denote its density. 
When the solution of \eqref{eq:FDS} equals $\setS_p^*$, the optimal value is $v^*_{\lambda} = \rho(\setS_p^*) + \lambda$, since $r_1(\setS_p^*) =1 $. We are now ready to prove Proposition \ref{prop:lambda1} (restated below for convenience).\\

\noindent {\bf Proposition \ref{prop:lambda1}.} \emph{For $\lambda \geq \lambda_{\max}:= [\max_{\setS \supset \setS_p} \{(\rho(\setS)-\rho(\setS_p^*))/(1-\frac{n_p}{|\setS|})\}]$, the solution of \eqref{eq:FDS} is the densest protected subset $\setS_p^*$.}
\begin{proof}
Applying the optimality conditions of \eqref{eq:opt} to problem \eqref{eq:sup1}, we obtain the condition
\begin{equation}
    F(\setB) \leq F(\setS_p^*), \forall \; \setB \supseteq \setS_p^*.
\end{equation}
Since $F(\setS_p^*) = 0$, the condition simplifies to
\begin{equation}
\begin{aligned}
    & \;p_1(\setB,\lambda) - v^*_{\lambda}\cdot q(\setB) \leq 0, \forall \; \setB \supseteq \setS_p^*\\
    \Leftrightarrow &\;
    2e(\setB) + \lambda\cdot |\setB \cap \setS_p| - v^*_{\lambda}\cdot|\setB| \leq 0,\\
    \Leftrightarrow &\;
    \frac{2e(\setB)}{|\setB|} + \lambda \cdot \frac{|\setB \cap \setS_p|}{|\setB|} \leq \rho(\setS_p^*) + \lambda.
\end{aligned}
\end{equation}
Define $\rho(\setB):= 2e(\setB)/|\setB|$ to be the density of the subgraph induced by $\setB$. Then, we obtain
\begin{equation}\label{eq:cond1}
\begin{aligned}
\rho(\setB) - \rho(\setS_p^*) \leq \lambda \cdot
\biggl[1- \frac{|\setB \cap \setS_p|}{|\setB|} \biggr], \forall \; \setB \supseteq \setS_p^*.
\end{aligned}    
\end{equation}
We now distinguish between two categories of supersets $\setB$. Either we have {\bf (i):} $\setS_p \supseteq \setB \supseteq \setS_p^*$, or {\bf (ii):} $\setB \supset \setS_p \supseteq \setS_p^*$. If the first case is true, condition \eqref{eq:cond1} reduces to
\begin{equation}
\rho(\setB) \leq  \rho(\setS_p^*), \forall \; \setS_p \supseteq \setB \supseteq \setS_p^*,
\end{equation}
which is always guaranteed to hold (irrespective of $\lambda)$ since $\setS_p^*$ is the densest protected subset. This implies that we only need consider the second category of supersets. Then, \eqref{eq:cond1} becomes 
\begin{equation}
\lambda \geq \frac{\rho(\setB) - \rho(\setS_p^*)}{1- \frac{|\setB \cap \setS_p|}{|\setB|}}, 
\forall \; \setB \supset \setS_p.
\end{equation}
Since $|\setB \cap \setS_p| = |\setS_p| = n_p, \forall \; \setB \supset \setS_p$, we finally obtain
\begin{equation}
\lambda \geq \max_{\setB \supset \setS_p} 
\left\{
\frac{\rho(\setB) - \rho(\setS_p^*)}{1- \frac{n_p}{|\setB|}}    
\right\}.
\end{equation}
Note that the optimality conditions \eqref{eq:opt} also require us to verify that the above choice of $\lambda_{\max}$ satisfies
\begin{equation}
 F(\setB) \leq F(\setS_p), \forall \; \setB \subseteq \setS_p.   
\end{equation}
Note that for $\setB = \emptyset$, the above condition is guaranteed to hold irrespective of $\lambda$. For subsets $\setB$ satisfying $\emptyset \subset \setB \subset \setS_p$, we have $|\setB \cap \setS_p| = |\setB|$. Consequently, \eqref{eq:cond1} becomes
\begin{equation}
\rho(\setB) - \rho(\setS_p^*) \leq 0, \forall \;  \emptyset \subset \setB \subset \setS_p,
\end{equation}
which is again always satisfied irrespective of $\lambda$, since $\rho(\setS_p^*)$ is the densest protected subset. The claim then follows.
\end{proof}

\subsection{FADSG-II} \label{app:reg2}

We adopt the same line of reasoning as the previous subsection. Our starting point is the fact that given an instance of \eqref{eq:FDS:2} with optimal value $v^*_{\lambda}$, a maximizer of \eqref{eq:FDS:2} is also a maximizer of the following supermodular maximization problem
\begin{equation}
     \underset{\setS \subseteq \setV}{\max} ~ \biggl\{
    F(\setS):=p_2(\setS,\lambda) - v^*_{\lambda}\cdot q(\setS)
    \biggr\}
\end{equation}
with optimal value equal to 0. This allows us to use the optimality conditions of the above problem (which are characterized by \eqref{eq:opt}) to determine the smallest value of $\lambda$ required so that the solution of \eqref{eq:FDS:2} corresponds to the protected subset $\setS_p$. Note that in such a case we have $v^*_{\lambda} = \rho(\setS_p)$. We are now ready to prove Proposition 4.2.\\

\begin{proposition}\label{prop:lambda2}
Define the quantities
$$ \lambda_1:=\max_{\setB \supset \setS_p} 
\left\{
\frac{\rho(\setB) - \rho(\setS_p)}{1- \frac{n_p}{|\setB|}}    
\right\},
\lambda_2:=
\max_{\emptyset \subset \setB \subset \setS_p} 
\left\{
\frac{\rho(\setB) - \rho(\setS_p)}{\frac{n_p}{|\setB|}-1}    
\right\}
$$
Then, for $\lambda \geq \max\{\lambda_1,\lambda_2\}$, the solution of \eqref{eq:FDS:2} is the protected subset $\setS_p$.
\end{proposition}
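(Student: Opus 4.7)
The plan is to mirror the argument just used for Proposition~\ref{prop:lambda1}, exploiting the reformulation of FADSG-II as a supermodular maximization problem with optimal value $0$ attained at $\setS_p$. Concretely, when the maximizer of \eqref{eq:FDS:2} is $\setS_p$, we have $r_2(\setS_p) = 0$ (since $|\setS_p \cap \setS_p| = n_p$), so the optimal value is $v^*_{\lambda} = \rho(\setS_p)$. The goal is then to identify the range of $\lambda$ under which $\setS_p$ satisfies the local optimality condition from Fact~1 applied to $F(\setS) := p_2(\setS,\lambda) - \rho(\setS_p)\cdot|\setS|$.

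Since $F(\setS_p) = 0$, the condition $F(\setB) \leq F(\setS_p)$ for all $\setB \subseteq \setS_p$ and $\setB \supseteq \setS_p$ becomes, after substituting the definition of $p_2$ and dividing through by $|\setB|$,
\begin{equation}
\rho(\setB) - \rho(\setS_p) \leq \lambda\left(1 + \frac{n_p}{|\setB|} - \frac{2|\setB \cap \setS_p|}{|\setB|}\right).
\end{equation}
I would then split into the two natural cases. For $\setB \supseteq \setS_p$, one has $|\setB \cap \setS_p| = n_p$, so the right-hand side simplifies to $\lambda(1 - n_p/|\setB|)$, which is strictly positive whenever $\setB \supsetneq \setS_p$; rearranging yields the lower bound $\lambda \geq \lambda_1$. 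For $\setB \subseteq \setS_p$, one has $|\setB \cap \setS_p| = |\setB|$, so the right-hand side reduces to $\lambda(n_p/|\setB| - 1)$, which is strictly positive whenever $\emptyset \subsetneq \setB \subsetneq \setS_p$; this delivers the second bound $\lambda \geq \lambda_2$. The two edge cases $\setB = \setS_p$ (both sides vanish) and $\setB = \emptyset$ (the inequality becomes $-\lambda n_p \leq 0$, which holds for all $\lambda \geq 0$) need to be noted separately, but they impose no additional constraint. Combining the two cases gives the stated threshold $\lambda \geq \max\{\lambda_1,\lambda_2\}$.

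The only mildly delicate step is tracking the sign of the coefficient multiplying $\lambda$ when switching between subsets and supersets of $\setS_p$: in the superset case the factor $1 - n_p/|\setB|$ is nonnegative, whereas in the subset case it is the reversed factor $n_p/|\setB| - 1$ that is nonnegative, which is why the two maxima take symmetric but distinct forms and why the argument must exclude the zero-denominator boundary $\setB = \setS_p$. Beyond this bookkeeping, the proof is a direct transcription of the Proposition~\ref{prop:lambda1} template, with the supermodular reduction replaced by the analogous one for $p_2$. The main obstacle is thus purely a careful case analysis rather than any new technical ingredient.
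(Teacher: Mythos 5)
Your proposal is correct and follows essentially the same route as the paper's proof: the supermodular reformulation with optimal value $0$ at $\setS_p$, the local optimality condition of Fact~1 restricted to subsets and supersets, the substitution $r_2(\setB)=1-n_p/|\setB|$ for supersets and $r_2(\setB)=n_p/|\setB|-1$ for proper nonempty subsets, and the separate treatment of the $\setB=\emptyset$ and $\setB=\setS_p$ edge cases. No gaps.
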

\begin{proof}
Since $F(\cdot)$ is supermodular, the optimality conditions \eqref{eq:opt} assert that 
\begin{equation}
    F(\setB) \leq F(\setS_p), \forall \; \setB \supseteq \setS_p.
\end{equation}
Since $F(\setS_p) = 0$, the condition becomes
\begin{equation}
\begin{aligned}
    & \;p_2(\setB,\lambda) - v^*_{\lambda}\cdot q(\setB) \leq 0, \forall \; \setB \supseteq \setS_p\\
    \Leftrightarrow &\;
    \frac{2e(\setB)}{|\setB|} - \lambda \cdot 
    r_2(\setB) \leq \rho(\setS_p),\\
    \Leftrightarrow &\;
    \rho(\setB) - \rho(\setS_p) \leq \lambda r_2(\setB)
\end{aligned}
\end{equation}
For a superset $\setB \supset \setS_p$, we have $r_2(\setB) = 1 - (n_p/|\setB|)$. Thus we obtain the condition
\begin{equation}\label{eq:super}
\lambda \geq \lambda_1:=\max_{\setB \supset \setS_p} 
\left\{
\frac{\rho(\setB) - \rho(\setS_p)}{1- \frac{n_p}{|\setB|}}    
\right\}.   
\end{equation}
From \eqref{eq:opt}, we also have the condition 
\begin{equation}
  F(\setB) \leq F(\setS_p), \forall \; \setB \subseteq \setS_p,  
\end{equation}
which can be simplified as
\begin{equation}\label{eq:cond3}
2e(\setB) - \lambda\cdot\left[|\setB|+|\setS_p|-2|\setB \cap \setS_p|\right] \leq \rho(\setS_p)\cdot|\setB|, \forall \; \setB \subseteq \setS_p.
\end{equation}
A little calculation reveals that when $\setB = \emptyset$, the above condition holds irrespective of $\lambda \geq 0$. Hence, we restrict our attention to subsets of the form $\emptyset \subset \setB \subset \setS_p$.
In this case \eqref{eq:cond3} becomes
\begin{equation}
 \rho(\setB) - \rho(\setS_p) \leq \lambda r_2(\setB),    
 \forall \;
 \emptyset \subset \setB \subset \setS_p.
\end{equation}
Since $r_2(\setB) = (n_p/|\setB|) -1$, we finally obtain
\begin{equation}\label{eq:sub}
\lambda \geq \lambda_2:=
\max_{\emptyset \subset \setB \subset \setS_p} 
\left\{
\frac{\rho(\setB) - \rho(\setS_p)}{\frac{n_p}{|\setB|}-1}    
\right\}. 
\end{equation}
Combining the inequalities \eqref{eq:super} and \eqref{eq:sub}, we obtain
\begin{equation}
    \lambda \geq \max\{\lambda_1,\lambda_2\}.
\end{equation}
from which the claim follows.
\end{proof}

\section{Analyzing the Price of Fairness} \label{app:lollipop}

Recall that the lollipop graph on $n$ vertices (denoted as $\setL_n$) comprises an unprotected clique of size $\sqrt{n}$ (denoted as $\setC_{\sqrt{n}}$), while the protected vertices form a path graph on $n - \sqrt{n}$ vertices (denoted as $\setP_{n-\sqrt{n}})$, with one end connected to $\setC_{\sqrt{n}}$ (so that the overall graph is connected).

In order to analyze the price of fairness (\textsc{PoF}) on $\setL_n$, the following facts will be useful.
Let $\setC_k$ denote a clique on $k$ vertices with density $\rho(\setC_k) := k-1$. Additionally, let $\setP_k$ denote a path graph on $k$ vertices.  A little calculation reveals that the density of the path graph is $\rho(\setP_k) := 2\cdot(1-1/k)$. Given two distinct vertex subsets $\setX,\setY$ with no overlap, let $e(\setX;\setY)$ denote the number of edges with one endpoint in $\setX$ and the other in $\setY$. Then, the number of edges induced by the combined vertex subset $\setX \cup \setY$ can be expressed as
\begin{equation}
    e(\setX \cup \setY) = e(\setX) + e(\setY) + e(\setX;\setY).
\end{equation}

\begin{proposition}\label{prop:lol1}
For $n > 9$, the densest subgraph of $\setL_n$ is the unprotected clique $\setC_{\sqrt{n}}.$    
\end{proposition}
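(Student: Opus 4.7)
The plan is to decompose any candidate subset $\setS \subseteq \setV(\setL_n)$ into its clique part $\setC' := \setS \cap \setC_{\sqrt{n}}$ and its path part $\setP' := \setS \cap \setP_{n-\sqrt{n}}$, with sizes $c := |\setC'|$ and $p := |\setP'|$, and then to upper bound the density $\rho(\setS) = 2e(\setS)/(c+p)$. Using the edge decomposition $e(\setS) = e(\setC') + e(\setP') + e(\setC';\setP')$ recalled in the excerpt, I would invoke three elementary bounds: $e(\setC') \leq \binom{c}{2}$ since $\setC'$ is a subset of a clique, $e(\setP') \leq p - 1$ since any induced subgraph of a path is a forest, and $e(\setC';\setP') \leq 1$ since the ``lollipop'' glue consists of a single edge linking the two parts. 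This yields the uniform bound
\begin{equation}
    \rho(\setS) \;\leq\; \frac{c(c-1) + 2p}{c+p}.
\end{equation}

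The remainder of the argument is a short case analysis on $(c,p)$ with $c \in \{0,\ldots,\sqrt{n}\}$, $p \in \{0,\ldots,n-\sqrt{n}\}$, and $(c,p) \neq (0,0)$. First I would dispose of the two ``pure'' cases: if $p = 0$ the bound gives $\rho(\setS) \leq c - 1 \leq \sqrt{n} - 1 = \rho(\setC_{\sqrt{n}})$, with equality iff $c = \sqrt{n}$; if $c = 0$ the bound gives $\rho(\setS) \leq 2$, which is strictly less than $\sqrt{n} - 1$ whenever $\sqrt{n} > 3$, i.e.\ whenever $n > 9$.

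The main content is the mixed case $c,p \geq 1$. Here I would rearrange the inequality $\frac{c(c-1)+2p}{c+p} \leq \sqrt{n}-1$ into
\begin{equation}
    c(c - \sqrt{n}) \;+\; (3 - \sqrt{n})\, p \;\leq\; 0.
\end{equation}
Both summands are non-positive: the first because $c \leq \sqrt{n}$ by construction, and the second because $n > 9$ forces $3 - \sqrt{n} < 0$. Moreover, equality in both summands simultaneously requires $c \in \{0,\sqrt{n}\}$ and $p = 0$, which is inconsistent with $c,p \geq 1$, so the inequality is \emph{strict} in the mixed case. Combining the three cases shows that $\rho(\setS) \leq \sqrt{n} - 1$ with equality attained only by $\setS = \setC_{\sqrt{n}}$, which establishes the claim.

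The argument is essentially a direct calculation; the only real obstacle is just making sure the edge bookkeeping between $\setC'$, $\setP'$, and the single bridge edge is tight, and verifying that the threshold $n > 9$ is exactly what is needed to push the path density $2$ below the clique density $\sqrt{n}-1$ and to make the coefficient $3 - \sqrt{n}$ negative. Both fall out of the same inequality $\sqrt{n} > 3$, which is the only place where the hypothesis $n > 9$ is used.
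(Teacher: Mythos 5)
Your proof is correct and follows essentially the same route as the paper's: both decompose a candidate subset into its clique part and its path part, use $e(\setC';\setP')\leq 1$ together with the forest bound on the path part, and compare the resulting density bound against $\sqrt{n}-1$, with $n>9$ entering only through $\sqrt{n}>3$. The sole difference is the closing step --- the paper invokes the mediant inequality $(a+c)/(b+d)\leq\max\{a/b,\,c/d\}$ to reduce to $\max\{r-1,2\}$, whereas you rearrange the quadratic inequality directly into $c(c-\sqrt{n})+(3-\sqrt{n})p\leq 0$; your version is marginally more explicit about the degenerate cases $c=0$ or $p=0$ and about where equality can hold.
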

\begin{proof}


    Consider a candidate subgraph comprising $r \leq \sqrt{n}$ unprotected vertices and $s \leq n-\sqrt{n}$ protected vertices. The solution can be expressed as $\setS = \setC_r \cup \setP_s$, with density $\rho(\setS)$ given by
    \begin{equation}
    \begin{aligned}
        \rho(\setS) 
        &= \frac{2e(\setC_r \cup \setP_s)}{r+s} = \frac{2[e(\setC_r) + e(\setP_s) + e(\setC_r;\setP_s)]}{r+s} \leq \frac{2[e(\setC_r) + e(\setP_s) +1] }{r+s} \\
        &\leq \max\biggl \{ \frac{2e(\setC_r)}{r},\frac{2[e(\setP_s)+1]}{s} \biggr \} = \max\{\rho(\setC_r),2\} = \max\{r-1,2\}.
    \end{aligned}
    \end{equation}
    The first inequality stems from the fact that $e(\setC_r;\setP_s) \leq 1$, whereas the second inequality is due to the following fact: given positive numbers $a,b,c,d$, we have $(a+c)/(b+d) \leq \max\{a/b,c/d\}$.  
    For $r > 3$, the maximum is attained by the first term (irrespective of $s$) - note that a necessary condition is $n > 9$. Hence we obtain
    \begin{equation}
        \rho(\setS) \leq r-1 \leq \sqrt{n}-1,
    \end{equation}
    for any candidate subgraph $\setS$. 
    Both inequalities are satisfied with equality when $\setS = \setC_r$ with $r = \sqrt{n}$, which establishes the desired claim. 
\end{proof}

We now turn towards formalizing this intuition. 
From Proposition~\ref{prop:lol1}, it follows that $\rho^*_{\setL_n} = \sqrt{n}-1$, and $l_1 = 0$. Hence, the admissible range of $\alpha$ is the unit interval $[0,1]$ (which is the largest possible range). 
First, consider the case where the target fairness level $\alpha = 1$, which corresponds to extracting the subgraph $\setP_{n-\sqrt{n}}$ induced by the protected vertices.\footnote{Note that by construction, the densest protected subset coincides with the entire protected subset in this case.} 
This case corresponds to the opposite extreme of DSG (where no fairness is imposed). It turns out that the \textsc{PoF} can be expressed as follows.
\begin{proposition}\label{prop:lol2}
$\textsc{PoF}(\setL_n,1) > 1 - [2/(\sqrt{n}-1)]$.    
\end{proposition}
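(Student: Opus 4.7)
The plan is to compute $\rho^*_{\setL_n}(1)$ explicitly, substitute into the definition of \textsc{PoF} from \eqref{eq:pof}, and then obtain the stated strict inequality through a simple algebraic bound. Since the target fairness level $\alpha = 1$ forces $r_1(\setS) = 1$, the feasible solutions are precisely the subsets of the protected vertex set $\setS_p$. By construction of $\setL_n$, the subgraph induced by $\setS_p$ is the path graph $\setP_{n-\sqrt{n}}$, so computing $\rho^*_{\setL_n}(1)$ reduces to finding the densest vertex subset of a path.

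First, I would establish that among all nonempty subsets of $\setP_m$, the entire path attains the maximum subgraph density. The key observation is that for any $\setS \subseteq \setP_m$, the induced subgraph contains at most $|\setS|-1$ edges (with equality iff the selected vertices form a contiguous sub-path), so $\rho(\setS) \leq 2(|\setS|-1)/|\setS| = 2 - 2/|\setS|$. Since this upper bound is strictly increasing in $|\setS|$ and is achieved by contiguous sub-paths, taking $\setS = \setP_m$ (i.e., $|\setS|=m$) yields the densest subset, with density $2(m-1)/m$. Applying this with $m = n-\sqrt{n}$ gives
\begin{equation}
\rho^*_{\setL_n}(1) \;=\; \frac{2(n-\sqrt{n}-1)}{n-\sqrt{n}}.
\end{equation}

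Combining this with $\rho^*_{\setL_n} = \sqrt{n}-1$ (from Proposition~\ref{prop:lol1}) and plugging into \eqref{eq:pof} yields
\begin{equation}
\textsc{PoF}(\setL_n, 1) \;=\; 1 - \frac{2(n-\sqrt{n}-1)}{(n-\sqrt{n})(\sqrt{n}-1)}.
\end{equation}
To conclude, I would bound the subtracted term: since $n-\sqrt{n}-1 < n-\sqrt{n}$, we have
\begin{equation}
\frac{2(n-\sqrt{n}-1)}{(n-\sqrt{n})(\sqrt{n}-1)} \;<\; \frac{2}{\sqrt{n}-1},
\end{equation}
and therefore $\textsc{PoF}(\setL_n, 1) > 1 - 2/(\sqrt{n}-1)$, as claimed.

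There is no serious obstacle in this argument; the only step requiring a moment's care is the justification that the densest subset of a path is the path itself, which follows from the edge-count bound above. The remainder is arithmetic manipulation combined with the value of $\rho^*_{\setL_n}$ already established in Proposition~\ref{prop:lol1}.
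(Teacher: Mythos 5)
Your proof is correct and follows essentially the same route as the paper: both identify $\rho^*_{\setL_n}(1)$ with the density of the protected path $\setP_{n-\sqrt{n}}$, use $\rho^*_{\setL_n}=\sqrt{n}-1$ from Proposition~\ref{prop:lol1}, and conclude from $\rho(\setP_{n-\sqrt{n}})<2$. You merely make explicit two steps the paper leaves implicit, namely that the densest subset of a path is the whole path (via the forest edge-count bound) and the exact value $2(n-\sqrt{n}-1)/(n-\sqrt{n})$.
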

\begin{proof}


For $n >9$, have established that $\rho^*_{\setL_n} = \sqrt{n}-1$. Since $\rho(\setP_{n-\sqrt{n}}) < 2$, it follows from the definition that $\textsc{PoF}(\setL_n,1) > 1 - [2/(\sqrt{n}-1)].$
\end{proof}

The above result is pessimistic as it demonstrates that in the worst-case, the price paid in extracting the fair component can be made arbitrarily close to $1$, via a suitable choice of $n$. Additionally, it demonstrates that the lollipop graph is indeed a suitable candidate for analyzing the worst-case trade-off between density and fairness.
Next, we present the proof of Proposition \ref{prop:lol3}, in order to understand how the \textsc{PoF} is affected by altering the balance between the majority and minority vertices in the desired solution.
 
\subsection{Proof of Proposition \ref{prop:lol3}} \label{app:prop:pof3}

Recall that the desired proportion of unprotected and protected vertices in the extracted subgraph is $1:\gamma$, where $\gamma \in \{1,2,\cdots,n-\sqrt{n}\}$ is a positive integer. In this case, the target fairness level can be expressed as the ratio $\alpha_{\gamma} := \gamma/(1+\gamma)$. We aim to show that 
\begin{equation} \label{app:eq:pof_prop}
    \textsc{PoF}(\setG,\alpha_{\gamma}) 
    \begin{cases}
    = \alpha_{\gamma}\biggl[1-\frac{2}{\sqrt{n}-1}\biggr],\ \alpha_\gamma \in \biggl\{\frac{1}{2},\cdots,1-\frac{1}{\sqrt{n}}\biggr\}\\
    \begin{aligned}
    > 1 - \biggl[\frac{1}{(\sqrt{n})} + 
    \frac{2}{\sqrt{n}-1} 
    \biggr],\ \alpha_{\gamma} \in \biggl\{1-\frac{1}{\sqrt{n}+1},\cdots,1-\frac{1}{n-\sqrt{n}+1}\biggr\}    
    \end{aligned}
    \end{cases}
\end{equation}


\begin{proof}
Consider the range $\gamma \in \{1,\cdots,\sqrt{n}-1\}$, for which $\alpha_{\gamma}$ lies in the range $\{\frac{1}{2},\cdots,1-\frac{1}{\sqrt{n}}\}$. Given a value $\gamma$ lying in the above range, the optimal fair densest subgraph comprises $\sqrt{n}$ unprotected vertices of $\setC_{\sqrt{n}}$ and $\gamma\sqrt{n}$ protected vertices of the path graph $\setP_{n-\sqrt{n}}$. The optimal density is 
\begin{equation}
    \rho^*_{\setL_n}(\alpha_{\gamma}) 
    = \frac{2e(\setC_{\sqrt{n}} \cup \setP_{\gamma\sqrt{n}})}{(1+\gamma)\sqrt{n}} 
    = \frac{(\sqrt{n}-1+2\gamma)}{(1+\gamma)}   
\end{equation}
Hence, the \textsc{PoF} is
\begin{equation}
\textsc{PoF}(\setL_n,\alpha_{\gamma}) = 1-\frac{ \rho^*_{\setL_n}(\alpha_{\gamma})}{\rho^*_{\setL_n}} = \frac{\gamma}{1+\gamma}\biggl[1-\frac{2}{\sqrt{n}-1}\biggr].
\end{equation}
Next, we consider $\gamma$ in the range $\{\sqrt{n},\cdots,n-\sqrt{n}\}$ for which $\alpha_{\gamma}$ ranges from $\{1-\frac{1}{\sqrt{n}+1},\cdots,1-\frac{1}{n-\sqrt{n}+1}\}$. In this range, the optimal fair densest subgraph comprises the $n-\sqrt{n}$ vertices of the protected subgraph $\setP_{n-\sqrt{n}}$ and $r < \sqrt{n}$ unprotected vertices from $\setC_{\sqrt{n}}$ such that $r$ satisfies $r = (n - \sqrt{n})/\gamma$. The density of the solution is given by  
\begin{equation}
\begin{aligned}
\rho^*_{\setL_n}(\alpha_{\gamma}) 
    = \frac{2e(\setC_{r} \cup \setP_{n - \sqrt{n}})}{(1+\gamma)r} 
    = \frac{(r-1+2\gamma)}{(1+\gamma)} = \frac{(n-\sqrt{n}-\gamma+2\gamma^2)}{\gamma(1+\gamma)}
    = \frac{\sqrt{n}(\sqrt{n}-1)}{\gamma(1+\gamma)} + \frac{2\gamma-1}{(1+\gamma)}
\end{aligned}
\end{equation}
Hence, the \textsc{PoF} in this case can be expressed as 
\begin{equation}
\begin{aligned}
\textsc{PoF}(\setL_n,\alpha_{\gamma}) 
&= 1-\biggl[\frac{\sqrt{n}}{\gamma(1+\gamma)} + 
\frac{2\gamma-1}{(1+\gamma)(\sqrt{n}-1)} 
\biggr] > 1-\biggl[\frac{\sqrt{n}}{\gamma^2} + 
\frac{2}{\sqrt{n}-1}\biggr] \geq 1-\biggl[\frac{\sqrt{n}}{(\sqrt{n})^2} + 
\frac{2}{\sqrt{n}-1} 
\biggr] \\&= 1 - O(1/\sqrt{n})
\end{aligned}     
\end{equation}
\end{proof}


\subsection{Equivalence of \texorpdfstring{$\delta$}{delta} and \texorpdfstring{$\alpha$}{alpha}} \label{app:proof:mapping}
As we have already explained in Section~\ref{sec:problem:PoF}, the solution of DSG without any fairness considerations is the clique $\setC_{\sqrt{n}}$ that contains the unprotected vertices. For this solution, it holds that $r_2(\setS)$ attains its maximum value, which is $r_2(\setS) = \sqrt{n}$.
As we decrease the target fairness value $\delta$ of FADSG-II from this upper bound $\delta_u=\sqrt{n}$, the solution will gradually include more and more protected vertices from the protected subgraph $\setP_{n-\sqrt{n}}$, until we reach the point that we have the whole graph $\setG$, for which point it holds that $r_2(\setS) = \frac{1}{\sqrt{n}}$. Up to this point, it always holds that we have a certain fraction (say $1/x < 1$) of the protected vertices in the solution, i.e., $\frac{|\setS\cap\setS_p|}{|\setS_p|} = \frac{1}{x} < 1\Leftrightarrow |\setS\cap\setS_p| = \frac{n_p}{x} = \frac{n-\sqrt{n}}{x}$. This implies that
\begin{equation}
    r_1(\setS) = \frac{|\setS\cap\setS_p|}{|\setS|} = \frac{\frac{n-\sqrt{n}}{x}}{\sqrt{n}+\frac{n-\sqrt{n}}{x}} \Leftrightarrow r_1(\setS) = \frac{\sqrt{n}-1}{\sqrt{n}+(x-1)},
\end{equation}
Moreover, it follows that
\begin{equation}
    \begin{aligned}
        r_2(\setS) &= 1 + \frac{n_p}{|\setS|} - 2r_1(\setS) = 1 + \frac{n_p - 2|\setS\cap\setS_p|}{|\setS|} = 1 + \frac{n_p - 2\frac{1}{x}n_p}{|\setS|} = 1 + \frac{\frac{1}{x} (x- 2)n_p}{|\setS|}\\ 
        &= \frac{\sqrt{n} + \frac{x-1}{x}n_p}{\sqrt{n}+\frac{1}{x}n_p} = \frac{\sqrt{n} + \frac{x-1}{x}(n-\sqrt{n})}{\sqrt{n}+\frac{1}{x}(n-\sqrt{n})} = \frac{1 + \frac{x-1}{x}(\sqrt{n}-1)}{1+\frac{1}{x}(\sqrt{n}-1)} = \frac{x + (x-1)\sqrt{n} -x+1}{x+\sqrt{n}-1} \\
        \Leftrightarrow r_2(\setS) &= \frac{(x-1)\sqrt{n}+1}{\sqrt{n}+(x-1)} \\
        \Leftrightarrow x &= \frac{(1+r_2(\setS))(1-\sqrt{n})}{r_2(\setS)-\sqrt{n}},
    \end{aligned}
\end{equation}
which implies that
\begin{equation}
    r_1(\setS) = \frac{(n+r_2(\setS)) - (r_2(\setS)+1)\sqrt{n}}{n-1} .
\end{equation}

If we further decrease the target fairness value $\delta$, the solution will contain all of the protected vertices and it is going to start removing the unprotected ones. Hence, it will always hold that $\frac{|\setS\cap\setS_p|}{|\setS_p|}=1 \Leftrightarrow |\setS\cap\setS_p| = n_p$. This results in
\begin{equation}
    r_2(\setS) = 1 + \frac{n_p}{|\setS|} - 2\frac{|\setS\cap\setS_p|}{|\setS|} \Leftrightarrow |\setS| = \frac{n_p - 2|\setS\cap\setS_p|}{\delta-1} = \frac{n_p}{1-r_2(\setS)} 
\end{equation}
and
\begin{equation}
    r_1(\setS) = \frac{|\setS\cap\setS_p|}{|\setS|} = \frac{n_p}{\frac{n_p}{1-r_2(\setS)}} \Leftrightarrow r_1(\setS) = 1-r_2(\setS).
\end{equation}
Finally, we get the following mapping of $\delta$ to $\alpha$:
\begin{itemize}
    \item if $\delta > \frac{1}{\sqrt{n}}$, with $\frac{|\setS\cap\setS_p|}{n_p} = \frac{1}{x} < 1$, then\\
    $\alpha = \frac{\sqrt{n}-1}{\sqrt{n}+(x-1)} = \frac{(n+\delta)-(\delta+1)\sqrt{n}}{n-1}$.
    \hfill [corresponds to the 1st branch of Eq.~\eqref{app:eq:pof_prop}]
    \item if $\delta\leq\frac{1}{\sqrt{n}}$, with $\frac{|\setS\cap\setS_p|}{n_p} = 1 \Leftrightarrow |\setS\cap\setS_p| = n_p$ , then \\
    $\alpha=1-\delta$. \hfill [corresponds to the 2nd branch of Eq.~\eqref{app:eq:pof_prop}]
\end{itemize}

\section{\textsc{Super-Greedy++}} \label{app:sg++}
\textsc{Super-Greedy++} is a efficient multi-stage greedy algorithm designed to retrieve high-quality solutions for problems referred to as \textsc{Densest Supermodular Subset} (DSS), i.e., maximization of $F(\setS)/|\setS|$, for $F:2^\setV\rightarrow \mathbb{R}$. This algorithm can extract $(1-\epsilon)$-approximate solutions, for $\epsilon \in (0,1)$ under some conditions in $O(1/\epsilon^2)$ iterations, as the following result of \citet{chekuri2022densest} indicates. 
\begin{theorem} \label{theo:sp++} 
    Let $F\, :\, 2^\setV \rightarrow \mathbb{R}_{\geq0}$ be a normalized, non-negative, and monotone supermodular function over the ground set $\setV$, with $n=|\setV|$. Let $\epsilon \in (0,1)$. Let $\Delta = \max_{v\in \setV} F(v|\setV)$ and $OPT$ the maximum of $F(\setS)/|\setS|$ over all $\setS\subseteq \setV$. For $T\geq O\left( \frac{\Delta \cdot \ln n}{OPT\cdot \epsilon^2} \right)$, \textsc{Super-Greedy++} outputs a $(1-\epsilon)$-approximate solution to DSS. 
\end{theorem}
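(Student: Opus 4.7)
The plan is to analyze \textsc{Super-Greedy++} via a primal--dual framework, extending the \textsc{Greedy++} analysis for plain DSG to the supermodular setting. I would interpret the accumulated load vector at iteration $t$, namely $\{\ell_v^{(t)}\}_{v \in \setV}$, as a (scaled) dual certificate and the best peeled subset as the primal witness, then show that the dual value converges at the claimed rate.

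First I would fix the LP characterization of $\mathrm{OPT} := \max_{\setS \subseteq \setV} F(\setS)/|\setS|$ via the supermodular base polytope
\begin{equation*}
B(F) := \bigl\{ y \in \mathbb{R}^{\setV} : y(\setV) = F(\setV),\; y(\setS) \geq F(\setS)\ \forall\, \setS \subseteq \setV \bigr\},
\qquad \mathrm{OPT} \;=\; \min_{y \in B(F)}\, \max_{v \in \setV}\, y_v,
\end{equation*}
which is the supermodular analogue of the Charikar/Goldberg LP used in the DSG analyses. I would then observe that within a single outer iteration, labeling the peeling order so that $v_n$ is peeled first and $v_1$ last, the load deposited on $v_i$ is $y_{v_i}^{(t)} = F(v_i \mid \{v_1, \ldots, v_{i-1}\})$ --- exactly the extreme point of $B(F)$ produced by Edmonds' greedy algorithm along the permutation $(v_1, \ldots, v_n)$. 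Hence every per-iteration increment $y^{(t)}$ lies in $B(F)$, and by convexity so does the time-average $\bar{y}^{(T)} := \ell^{(T)}/T$.

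Next I would argue that the load-aware tie-breaking (peel the vertex minimizing $\ell_v + F(v \mid \setS \setminus v)$) implements an online descent on $\max_v y_v$ over $B(F)$: vertices with inflated $\ell_v$ are postponed, which biases future increments away from them. A multiplicative-weights / regret calculation, with per-step scale controlled by $\Delta = \max_v F(v \mid \setV)$, then yields $\max_v \bar{y}^{(T)}_v \leq (1 + \epsilon)\,\mathrm{OPT}$ once $T \geq C\,\Delta \ln n / (\mathrm{OPT}\cdot \epsilon^2)$. On the primal side, the densest prefix along the peeling order encountered in some round has density at least $\min_v \bar{y}^{(T)}_v$, which after rearrangement matches $(1 - \epsilon)\,\mathrm{OPT}$, delivering the claimed approximation.

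The main obstacle will be this regret/averaging step. In plain DSG the load added per vertex per round is just the current induced degree, and the bookkeeping is classical. In the general supermodular case one must exploit (i) monotonicity of the marginals $F(v \mid \setS)$ in $\setS$ to compare loads across iterations, and (ii) the uniform bound $\Delta$ to control the size of any single update; turning these into a quadratic/entropic potential whose descent certifies the $\Theta(\Delta \ln n / (\mathrm{OPT}\cdot \epsilon^2))$ rate, and then exhibiting a primal subset whose density matches the converged dual bound, is where the core technical content of \citet{chekuri2022densest} resides.
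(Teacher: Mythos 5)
This theorem is not proved in the paper at all: it is quoted verbatim from \citet{chekuri2022densest}, and Appendix~E merely states it with attribution before describing the implementation. So there is no in-paper proof to compare against; the relevant benchmark is the argument in the cited reference. Your sketch reconstructs the architecture of that argument essentially correctly: the dual characterization $\mathrm{OPT}=\min_{y\in B(F)}\max_v y_v$ over the supermodular base polytope, the observation that each peeling round deposits the Edmonds extreme point of $B(F)$ associated with the reverse peeling order (so the time-averaged load vector $\ell^{(T)}/T$ lies in $B(F)$ by convexity), and an MWU/regret argument whose width is controlled by $\Delta=\max_{v}F(v\mid\setV)$ --- indeed the largest attainable marginal, by supermodularity --- yielding the $O(\Delta\ln n/(\mathrm{OPT}\cdot\epsilon^2))$ iteration count. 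These are the load-bearing ingredients of the Chekuri--Quanrud--Torres proof, and you correctly flag that normalization is what makes the telescoping identity $y(\setS)\geq F(\setS)$ go through for the greedy vector.

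The one step that would fail as written is the primal extraction. The best prefix encountered is \emph{not} lower-bounded by $\min_v \bar y^{(T)}_v$: a vertex with zero marginal against the rest of the ground set (e.g., an isolated vertex in plain DSG) is peeled first in every round, accumulates zero load, and forces $\min_v\bar y^{(T)}_v=0$ even when $\mathrm{OPT}$ is large. The correct primal argument ties the best prefix density to $\max_v\ell_v^{(T)}/T$ --- the quantity the regret bound actually controls from above --- rather than to the minimum coordinate. Roughly, one tracks an optimal set $\setS^*$ through the peeling: by telescoping plus supermodularity, $\sum_{v\in\setS^*}F(v\mid\setS^*-v)\geq F(\setS^*)=\mathrm{OPT}\cdot|\setS^*|$, so when the first vertex of $\setS^*$ is peeled from a prefix $\setS_{i,j}\supseteq\setS^*$ its key cannot be small, and this is what links the accumulated loads to the density $F(\setS_{i,j})/|\setS_{i,j}|$ of a prefix the algorithm has already evaluated. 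Monotonicity is also needed here so that all per-round load increments are non-negative, which is what lets the averaged loads play the role of non-negative MWU gains. With the primal step repaired along these lines, your outline matches the cited proof.
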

Pseudo-code of the algorithm is provided in (Algorithm~\ref{alg:sg++}).
 In order to implement this algorithm for FADSG-I and II, one has to derive the expressions of $F(v|\setS-v)$, which are provided in Section~\ref{app:sg++:expressions}. For programming simplicity, our implementation at this point does not include heaps. Instead, we adopt a direct approach in which each time we recompute the values of $F(v|\setS-v)$ and extract the minimum, while updating vertex degrees after peeling. Therefore, the computational complexity per iteration of our implementation is $O(n^2)$, resulting in an overall complexity of $O(n^2/\epsilon^2)$.

\begin{algorithm}[ht!]
\begin{small}
    \caption{\textsc{Super-Greedy++}}
    \label{alg:sg++}
\begin{algorithmic}
    \STATE {\bfseries Input:} $F : 2^\setV \rightarrow \mathbb{R}_{\geq 0}$, $T\in\mathbb{Z}_+$ \\
    \STATE {\bfseries Output:} $\setS_{\text{densest}}$ \\
    $\setS_\text{densest} \leftarrow \setV$ \\

    for all $v\in \setV$, set $\ell_v^{(0)}=0$\\
    \FOR{$i:1\rightarrow T$}
    \STATE $\setS_{i,1} \leftarrow \setV$\\
        \FOR{$j:1\rightarrow n$}
            \STATE $v^* \in \argmin_{v^*\in \setS_{i,j}} \ell^{(i-1)}_v+F(v\, |\, \setS_{i,j}-v)$ \\
            \STATE $\ell^{(i)}_{v^*} \leftarrow \ell^{(i-1)}_{v^*} + F(v^*\, |\, \setS_{i,j}-v^*)$ \\
            \STATE $\setS_{i,j+1} \leftarrow \setS_{i,j}-v^*$\\
            \IF{$\frac{F(\setS_{i,j})}{|\setS_{i,j}|} > \frac{F(\setS_{\text{densest}})}{|\setS_{\text{densest}}|}$}
                \STATE $\setS_{\text{densest}} \leftarrow \setS_{i,j}$\\
            \ENDIF
        \ENDFOR
    \ENDFOR

\end{algorithmic}
\end{small}
\end{algorithm}

\subsection{Analytical Expressions of \texorpdfstring{$F(v\, |\,\setS-v)$}{F}.} \label{app:sg++:expressions}

As defined in Section \ref{sec:introduction},
$f(v|\setS):=f(\setS\cup \{v\})-f(\setS)$. Thus, $F(v\, |\,\setS-v) = F(\setS)-F(\setS-v)$, where $\{\setS-v\}$ is the set $\setS$ without the vertex $v$. Furthermore, with $\deg_\setS(v)$, we denote the degree of $v$ in the induced set $\setS$.

\noindent \textbf{FADSG-I:}
In FADSG-I, the objective function is $g(\vx,\lambda) = (\vx^\top\vA\vx + \lambda \cdot \ve_p^\top \vx)/(\ve^\top\vx)$.
For a given $\lambda$, we can set $F(\vx) = \vx^\top\vA\vx + \lambda\cdot\ve_p^\top\vx = p_1(\vx,\lambda)$ and express it in terms of $\setS\subseteq\setV$, to get $F(\setS) = 2e(\setS) + \lambda|\setS\cap\setS_p|$. We also have that:
\begin{equation}
    \begin{aligned}
    F(\setS-v) = \left\{
    \begin{array}{ll}
          2e(\setS) - 2\text{deg}_\setS(v)+\lambda(|\setS\cap\setS_p|-1), & \text{if } v\in \setS\cap\setS_p \\
          2e(\setS) - 2\text{deg}_\setS(v)+\lambda|\setS\cap\setS_p|, & \text{if } v\not\in \setS\cap\setS_p
    \end{array} 
    \right. .
    \end{aligned}
\end{equation}
Thus:
\begin{align}
    F(v\, |\,\setS-v) = F(\setS)-F(\setS-v) = \left\{
    \begin{array}{ll}
          2\text{deg}_{\setS}(v)+\lambda, & \text{if } v\in \setS\cap\setS_p \\
          2\text{deg}_{\setS}(v), & \text{if } v\not\in \setS\cap\setS_p
    \end{array} 
    \right. .
\end{align}

\noindent \textbf{FADSG-II:}
In FADSG-II, the objective function is $h(\vx,\lambda) = (\vx^\top\vA\vx - \lambda \lVert \vx-\ve_p \rVert_2^2)(\ve^\top\vx)$.
If we expand the norm of the numerator we get
\begin{equation}
    \lVert \vx-\ve_p \rVert_2^2 = \lVert \vx \rVert_2^2 + \lVert \ve_p \rVert_2^2 - 2\ve_p^\top\vx = |\setS|+|\setS_p|-2|\setS\cap\setS_p|.
\end{equation}
For a given $\lambda$, we can set $F(\vx) = \vx^\top\vA\vx - \lambda \lVert \vx-\ve_p \rVert_2^2$ and express it in terms of $\setS\subseteq\setV$, to get $F(\setS) = 2e(\setS) + \lambda(|\setS|+|\setS_p|-2|\setS\cap\setS_p|)$. We also have that:
\begin{equation}
\begin{aligned}
    F(S-v) &= \left\{
    \begin{array}{ll}
          2e(\setS) - 2\text{deg}_\setS(v)-\lambda(|\setS|-1 + |\setS_p| - 2(|\setS\cap\setS_p|-1)),\ &\text{if } v\in \setS\cap\setS_p\\
          2e(\setS) - 2\text{deg}_\setS(v)-\lambda(|\setS|-1 + |\setS_p| - 2|\setS\cap\setS_p|),\ &\text{if } v\not\in \setS\cap\setS_p \\
    \end{array} 
    \right. \\
    & = \left\{
    \begin{array}{ll}
          2e(\setS) - 2\text{deg}_\setS(v)-\lambda(|\setS| + |\setS_p| - 2|\setS\cap\setS_p|)-\lambda,\ &\text{if } v\in \setS\cap\setS_p\\
          2e(\setS) - 2\text{deg}_\setS(v)-\lambda(|\setS| + |\setS_p| - 2|\setS\cap\setS_p|) + \lambda,\ &\text{if } v\not\in \setS\cap\setS_p \\
    \end{array} 
    \right. .
\end{aligned}
\end{equation}
Thus:
\begin{equation}
\begin{aligned}
    F(v\, |\,\setS-v) = F(\setS)-F(\setS-v) = \left\{
    \begin{array}{ll}
          2\text{deg}_{\setS}(v) + \lambda, & \text{if } v\in \setS\cap\setS_p \\
          2\text{deg}_{\setS}(v) - \lambda, & \text{if } v\not\in \setS\cap\setS_p 
    \end{array} 
    \right. .
\end{aligned}
\end{equation}

\section{Dataset details.} \label{app:datasets}
A summary of the statistics of all the datasets used in our experiments can be found in Table~\ref{tab:datasets:full}.
\begin{table}[ht!]
\caption{Full summary of dataset statistics: number of vertices ($n$), number of edges ($m$), and size of protected subset ($n_p$).}
\begin{center}
\begin{small}
\begin{sc}
\setlength{\tabcolsep}{1mm}
\begin{tabular}{lccc}
\toprule
Dataset       & $n$    & $m$     & $n_p$ \\
\midrule
PolBooks      & $92$   & $374$   & $43$ \\
PolBlogs         & $1222$  & $16717$ & $586$ \\
Amazon (10)        & $3699\pm 2764$ & $22859\pm 18052$  & $1927\pm 2537$ \\
Deezer        & $28281$ & $92752$ & $12538$ \\
LastFM (19)       & $7624$ & $27806$ & $424\pm 454$ \\
GitHub        & $37700$ & $289003$ & $9739$ \\
Twitch (6)       & $5686\pm 2393$ & $71519\pm 45827$  & $2523\pm 1787$ \\
Twitter       & $18470$ & $48053$ & $11355$ \\
\bottomrule
\end{tabular}
\end{sc}
\end{small}
\end{center}
\label{tab:datasets:full}
\end{table}

\noindent \textbf{Political Books Dataset (\textsc{PolBooks})}:
The vertices of the network are books on US politics included in the Amazon catalog, with an edge connecting two books if they are frequently co-purchased by the same buyers\footnote{\url{https://websites.umich.edu/~mejn/netdata/}}. Each book is categorized based on its political stance, with possible labels including liberal, neutral, and conservative. In our experiments, we focused solely on the subgraph formed by liberal and conservative books, resulting in $92$ vertices ($43$ associated with a conservative and $49$ with a liberal worldview) connected by a total of $374$ edges, as curated in \citet{anagnostopoulos2020spectral, anagnostopoulos2018fair}.

\noindent \textbf{Political Blogs Dataset (\textsc{PolBlogs})} \citep{adamic2005political}:
The vertices of the network are weblogs on US politics, with edges representing hyperlinks. Each blog is categorized by its political stance, left or right, with the left ones composing the protected set.

\noindent \textbf{Amazon Products Metadata (\textsc{Amazon})} \citep{ni2019justifying}:
The vertices represent Amazon products, and the edges denote frequent co-purchasing product pairs. The attribute associated with each product is their category. In our experiments we used $10$ different such datasets (see Table~\ref{tab:datasets:amazon-twitch}), as assembled in \citet{anagnostopoulos2020spectral, anagnostopoulos2018fair}. We name them based on the initials of the categories.

\begin{table}[ht!]
\caption{Summary of Amazon Product Metadata and Twitch Users Social Networks datasets statistics: the number of vertices ($n$), the number of edges ($m$), and size of protected subset ($n_p$).}
\begin{center}
\begin{small}
\begin{sc}
\setlength{\tabcolsep}{1mm}
\begin{tabular}{lccc}
\toprule
Dataset       & $n$     & $m$     & $n_p$ \\
\midrule
Amazon        & $3699\pm 2764$ & $22859\pm 18052$  & $1927\pm 2537$ \\
\hline
Amazon b      & $230$   & $592$   & $83$   \\
Amazon ps     & $1809$  & $14099$ & $230$  \\
Amazon is     & $2009$  & $8341$  & $1495$ \\
Amazon hpc    & $3011$  & $10004$ & $1352$ \\
Amazon op     & $2281$  & $16542$ & $462$  \\
Amazon ah     & $10378$ & $53679$ & $8163$ \\
Amazon acs    & $5056$  & $33827$ & $987$  \\
Amazon g      & $6435$  & $56553$ & $5376$ \\
Amazon so     & $3216$  & $19331$ & $604$  \\
Amazon tmi    & $2565$  & $15619$ & $520$  \\
\bottomrule
Twitch        & $5686\pm 2393$ & $71519\pm 45827$  & $2523\pm 1787$ \\
\hline
Twitch de     & $9498$ & $153138$ & $5742$ \\
Twitch engb   & $7126$ & $35324$  & $3888$ \\
Twitch es     & $4648$ & $59382$  & $1360$ \\
Twitch fr     & $6549$ & $112666$ & $2415$ \\
Twitch ptbr   & $1912$ & $31299$  & $661$  \\
Twitch ru     & $4385$ & $37304$  & $1075$ \\
\bottomrule
\end{tabular}
\end{sc}
\end{small}
\end{center}
\label{tab:datasets:amazon-twitch}
\end{table}


\noindent \textbf{LastFM Asia Social Network (\textsc{LastFM})} \citep{rozemberczki2020characteristic}: 
The vertices represent users of LastFM from Asian countries, and the edges are mutual follower connections among them. The attribute associated with each user is their location. 

For the LastFM Asia Social Network dataset, we choose each time a different group to be the protected one, and we denote the resulting dataset as \textsc{LastFM-i}, with \textsc{i} being the index of the set that is considered protected. The number of vertices of each protected set can be found in Table~\ref{tab:datasets:lastfm}. We also consider the \textsc{LastFM-100} network, for which we combine the vertices of all locations with less than $100$ vertices and consider them as the protected set.

\begin{table*}[hbt!]
\caption{Number of protected vertices $n_p$ for each of the \textsc{LastFM-i} datasets.}
\begin{center}
\begin{small}
\begin{sc}
\setlength{\tabcolsep}{1.75mm}
\begin{tabular}{cccccccccccccccccccc}
\toprule
\textsc{i}    & $0$ & $1$ & $2$ & $3$ & $4$ & $5$ & $6$ & $7$ & $8$ & $9$ & $10$ & $11$ & $12$ & $13$ & $14$ & $15$ & $16$ & $17$ & $100$ \\
\midrule
$n_p$    & $1098$ & $54$ & $73$ & $515$ & $16$ & $391$ & $655$ & $82$ & $468$ & $58$ & $1303$ & $138$ & $57$ & $63$ & $570$ & $257$ & $254$ & $1572$ & $387$ \\
\bottomrule
\end{tabular}
\end{sc}
\end{small}
\end{center}
\label{tab:datasets:lastfm}
\end{table*}

\noindent \textbf{Deezer Europe Social Network (\textsc{Deezer})} \citep{rozemberczki2020characteristic}:
The vertices represent users of Deezer from European countries, and the edges are mutual follower connections among them. The attribute associated with each user is their gender.

\noindent \textbf{GitHub Developers (\textsc{GitHub})} \citep{rozemberczki2019multiscale}:
The vertices represent developers in GitHub who have starred at least $10$ repositories, and the edges are mutual follower relationships between them. The attribute associated with each user is whether they are web or a machine learning developers.

\noindent \textbf{Twitch Users Social Networks (\textsc{Twitch})} \citep{rozemberczki2019multiscale}:
The vertices represent users of Twitch that stream in a certain language, and the edges are mutual friendships between them. The attribute associated with each user is if the user is using mature language.

\noindent \textbf{Twitter Retweet Political Network (\textsc{Twitter})} \citep{rossi2015network, conover2011political}:
The vertices are Twitter users, and the edges represent whether the users have retweeted each other. 
The attribute associated with each user is their political orientation \citep{tsioutsiouliklis2021fairness}.

\section{Details of Experiments} \label{app:exp_details}

In order to find the appropriate $\lambda$ value for the desired target fairness level that we had for each experiment, we performed bisection on $\lambda$ values. The lower bound was always $0$ and the upper was $\lambda_{\max}$. The value of $\lambda_{\max}$ was computed experimentally for each dataset. We specifically chose the value of $\lambda_{\max}$ based on our observation that it led to the densest protected subgraph for FADSG-I and the entire protected subgraph for FADSG-II. Moreover, increasing $\lambda_{\max}$ beyond this point did not alter the desired outcomes. The specific values employed for each dataset and formulation are detailed in Table~\ref{tab:lambda_max}.
Furthermore, this bisection was performed with a tolerance on $\lambda$, $\varepsilon=10^{-9}$. Thus, the number of iterations for each bisection experiment was bounded by $\left\lceil \log_2\frac{\lambda_{\max}}{\varepsilon} \right\rceil$. 
Furthermore, in recent work of \citet{fazzone2022discovering}, a stopping criterion for the number of iterations in the \textsc{Super-Greedy++} algorithm is being introduced, ensuring a certain level of solution's quality.
However, we observed that this criterion tends to be conservative for our purposes. Through experimentation with various datasets, we determined that a value of $T=5$ iterations provided satisfactory results across all our datasets in our implementation of \textsc{Super-Greedy++}.
Table~\ref{tab:running_time} shows an example of the running time of each algorithm, on \textsc{Amazon tmi} dataset. 

All experiments were performed on a single machine, with Intel i$7$-$13700$K CPU @ $3.4$GHz and $128$GB of main memory running Python $3.12.0$. The code of the baselines were obtained via personal communication with the respective authors. 

\begin{table}[ht!]
\caption{Comparison of running time of each algorithm for \textsc{Amazon tmi} dataset. Using heaps will drastically reduce the complexity of our algorithm by $O(n)$.}
\begin{center}
\begin{small}
\begin{sc}
\setlength{\tabcolsep}{1mm}
\begin{tabular}{lccccc}
\toprule
       & DDSP & PS & FPS & $2$-DFSG & FADSG-I (Ours) \\
\midrule
Time (sec) & $0.135372$ & $0.371171$ & $0.167832$ & $10.079753$ & $918$\\ 
\bottomrule
\end{tabular}
\end{sc}
\end{small}
\end{center}
\label{tab:running_time}
\end{table}

\begin{table}[ht!]
\caption{$\lambda_{\max}$ values for each dataset and formulation.}
\begin{center}
\begin{small}
\begin{sc}
\setlength{\tabcolsep}{1mm}
\begin{tabular}{lcc|lcc}
\toprule
Dataset       & FADSG-I & FADSG-II & Dataset  & FADSG-I & FADSG-II \\
\midrule
Amazon b      & $20$    & $12$     & PolBooks & $1.2 $ & $5$ \\
Amazon ps     & $18$    & $50$     & PolBlogs & $100$ & $100$ \\
Amazon is     & $35$    & $150$    & Deezer & $40$ & $125$ \\
Amazon hpc    & $35$    & $70$     & LastFM \textit{(all)} & $40$ & $50$ \\
Amazon op     & $50$    & $50$     & Github & $100$ & $200$ \\
Amazon ah     & $40$    & $200$    & Twitch \textit{(all)} & $100$ & $200$ \\
Amazon acs    & $40$    & $50$     & Twitter & $150$ & $200$ \\
Amazon g      & $40$    & $140$    & & & \\
Amazon so     & $40$    & $50$     & & & \\
Amazon tmi    & $40$    & $50$     & & & \\
\bottomrule
\end{tabular}
\end{sc}
\end{small}
\end{center}
\label{tab:lambda_max}
\end{table}

\section{Additional Experimental Results} \label{app:results}
In this section, we offer further insights by presenting additional results obtained from various datasets, aiming to reinforce the validity of our findings.
\subsection{FADSG-I:} \label{app:results:1}
In Figure~\ref{fig:compare:rest} a comparison with prior art is illustrated, for the remaining datasets. Once more, we exclusively showcase datasets where the DSG solution exhibits a fairness level lower than the desired one of $0.5$.

\begin{figure*}[ht!]
\begin{center}
\includegraphics[width=.45\linewidth]{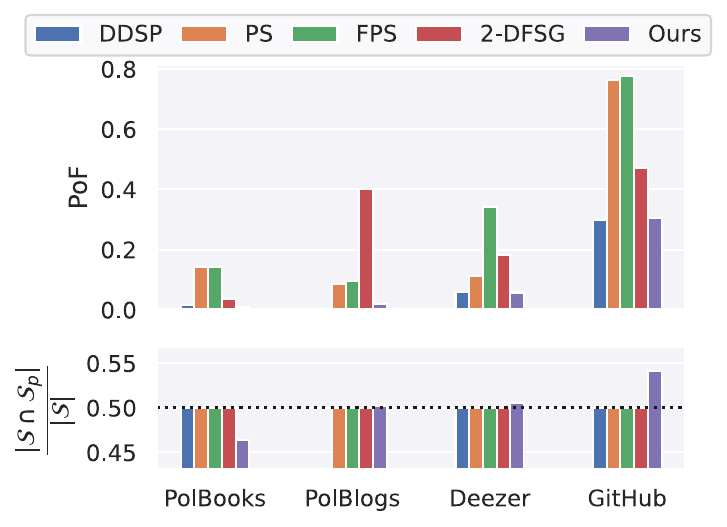}
\caption{Comparing induced perfectly balanced fair subsets of prior art and FADSG-I (Ours), on \textsc{PolBooks, Blogs, Deezer} and \textsc{Github} datasets. Top: \textsc{PoF}. Bottom: Fraction of protected vertices in induced subsets.}
\label{fig:compare:rest}
\end{center}
\end{figure*}

Figures~\ref{fig:compare:alphas:amazon},~\ref{fig:compare:blogs} and~\ref{fig:compare:alphas:twitter} illustrate the flexibility of our definition of fairness level $\alpha$. They reveal that certain desired fairness levels may be unattainable. As explained in Section~\ref{sec:experiments:1}, this limitation arises from the inherent structure of the graph, which does not naturally accommodate certain fairness levels. This can be clearly observed in Figure~\ref{fig:compare:blogs}, where the right subfigure exhibits a notable discontinuity just before the curve reaches the value of $0.6$, jumping to a value greater than $0.8$. 
This is translated in the left subfigure, to an attained fairness level is slightly greater than $0.8$, despite the desired level being $0.6$.

\begin{figure*}[ht!]
\begin{center}
{\label{fig:compare:alphas:amazon:1}\includegraphics[width=.42\linewidth]{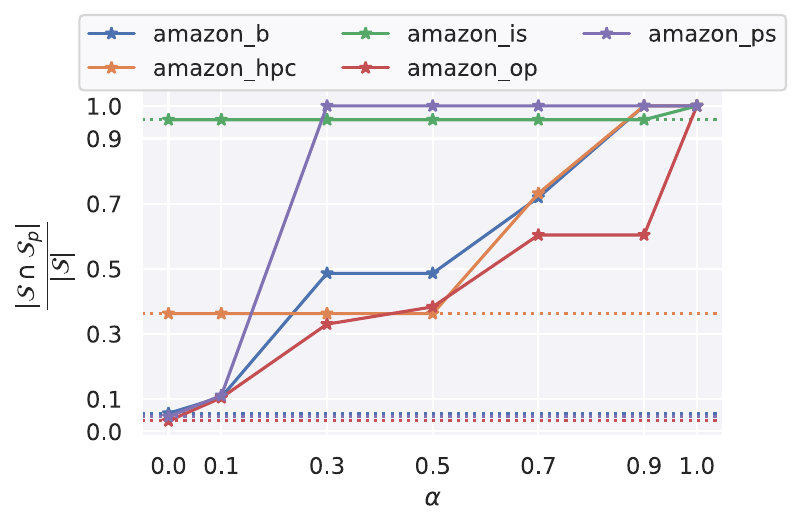}}
{\label{fig:compare:alphas:amazon:2}\includegraphics[width=.42\linewidth]{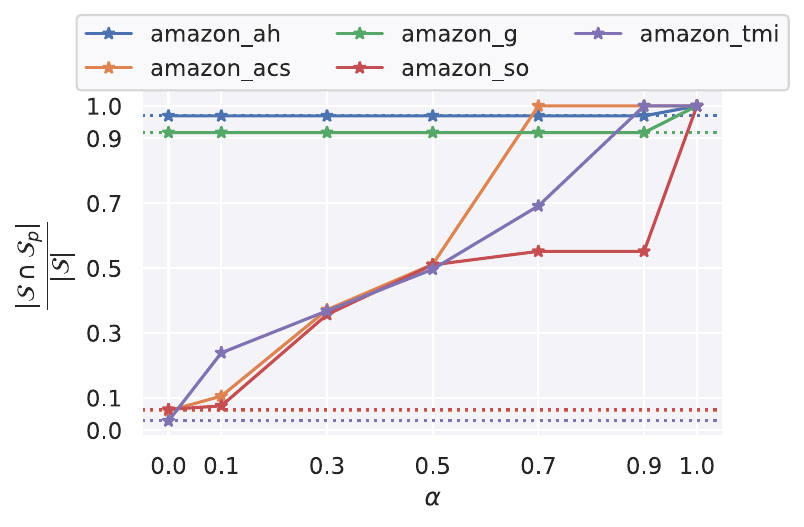}}
\caption{Fraction of protected vertices in each induced subgraph of FADSG-I, for various values of target fairness level $\alpha$, for the \textsc{Amazon} datasets. 
The dotted lines represent the fairness value in the solution of classic DSG, which we want to enhance.}
\label{fig:compare:alphas:amazon}
\end{center}
\end{figure*}

\begin{figure*}[ht!]
\begin{center}
\subfigure{\label{fig:compare:alphas:blogs:frac}\includegraphics[height=2.9cm]{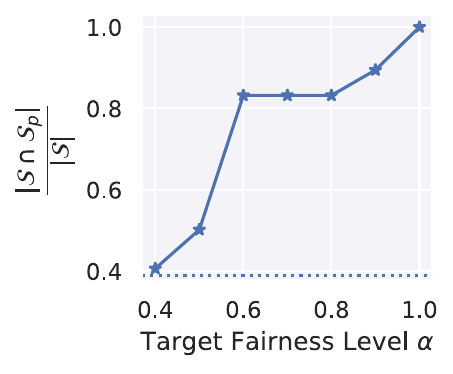}}
\subfigure{\label{fig:compare:alphas:blogs:PoF}\includegraphics[height=2.9cm]{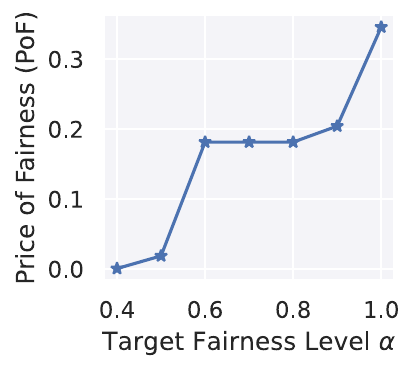}}
\subfigure{\label{fig:compare:lambdas:blogs:PoF}\includegraphics[height=2.9cm]{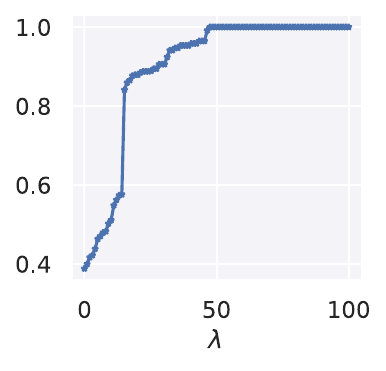}}
\caption{\textsc{PolBlogs} dataset - FADSG-I. Left: Fraction of protected vertices in each induced subgraph, for various values of target fairness level $\alpha$. 
Middle: \textsc{PoF} for various values of target fairness level $\alpha$. The dotted lines represent the fairness value in the solution of DSG.
Right: Fraction of protected vertices in each induced subgraph, for various values of the regularization parameter $\lambda$.}
\label{fig:compare:blogs}
\end{center}
\end{figure*}

\begin{figure*}[ht!]
\begin{center}
\subfigure{\label{fig:compare:alphas:twitter:frac}\includegraphics[height=2.9cm]{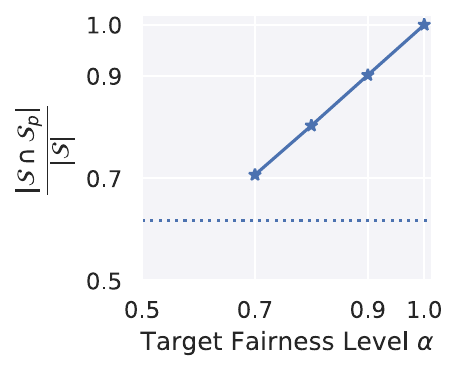}}
\subfigure{\label{fig:compare:alphas:twitter:PoF}\includegraphics[height=2.9cm]{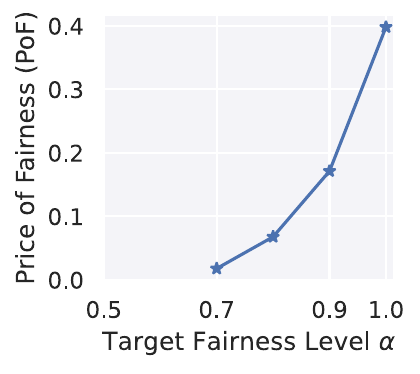}}
\caption{\textsc{Twitter} dataset - FADSG-I. Left: Fraction of protected vertices in each induced subgraph, for various values of target fairness level $\alpha$. The dotted lines represent the fairness value in the solution of DSG.
Right: \textsc{PoF} for various values of target fairness level $\alpha$.}
\label{fig:compare:alphas:twitter}
\end{center}
\end{figure*}

\begin{figure*}[ht!]
\begin{center}
\subfigure{\label{fig:compare:lambdas:lastfm:12}\includegraphics[height=2.9cm]{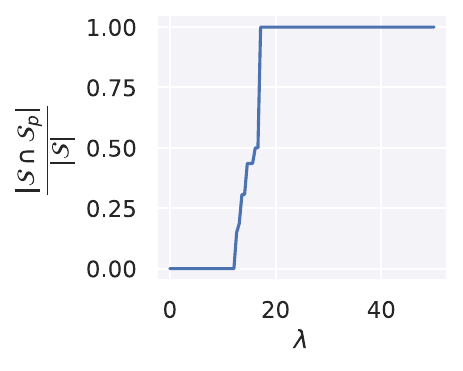}}
\subfigure{\label{fig:compare:lambdas:lastfm:op}\includegraphics[height=2.9cm]{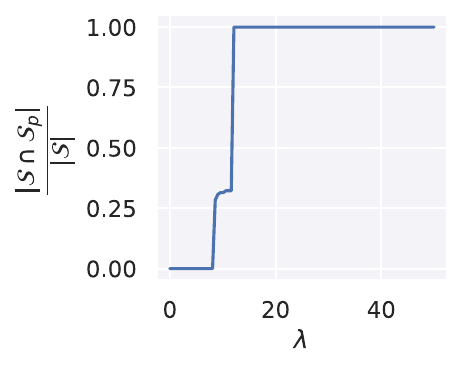}}
\caption{Fraction of protected vertices in the induced subset, as function of $\lambda$, for 
FADSG-I. (Left: \textsc{LastFM-12} - Right: \textsc{LastFM-13}).}
\label{fig:compare:lambdas:lastfm}
\end{center}
\end{figure*}

\clearpage
\subsection{FADSG-II:} \label{app:results:2}
In Figures~\ref{fig:compare:2:lastfm}, ~\ref{fig:compare:2:amazon} and~\ref{fig:compare:2:rest}, we compare FADSG-II with DSG. The top sections show what fraction of all of the protected nodes belongs in the solutions, and the bottom ones depict the \textsc{PoF} for FADSG-II. Again, for some datasets the solution is not exactly the desired one, which is something that can be attributed to the structure of each graph, as it can be seen in Figure~\ref{fig:compare:2:lambdas:amazon}.

\begin{figure}[ht!]
\begin{center}
\includegraphics[width=.6\linewidth]{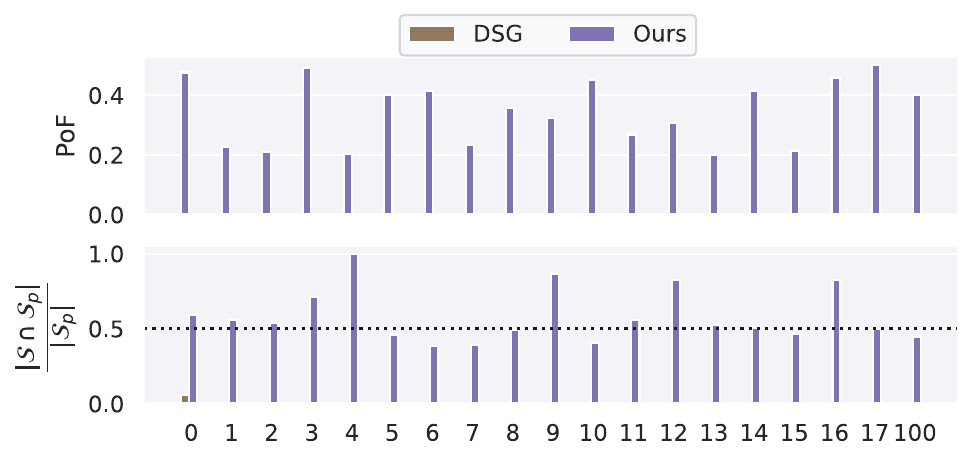}
\caption{\textsc{LastFM} datasets. Bottom: \textsc{PoF} of FADSG-II for $\delta=1$. Top: Comparing FADSG-II (Ours) for $\delta=1$ with DSG. 
Fraction of the protected subset that belongs in induced subsets.
}
\label{fig:compare:2:lastfm}
\end{center}
\end{figure}

\begin{figure*}[ht!]
\begin{center}
\includegraphics[width=.42\linewidth]{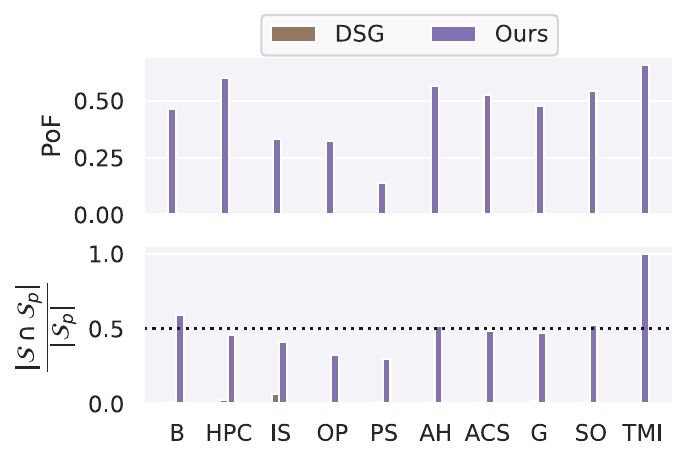}
\caption{\textsc{Amazon} datasets. Bottom: \textsc{PoF} of FADSG-II for $\delta=1$. Top: Comparing FADSG-II (Ours) for $\delta=1$ with DSG. 
Fraction of the protected subset that belongs in induced subsets.
}
\label{fig:compare:2:amazon}
\end{center}
\end{figure*}

\begin{figure*}[ht!]
\begin{center}
\includegraphics[width=.42\linewidth]{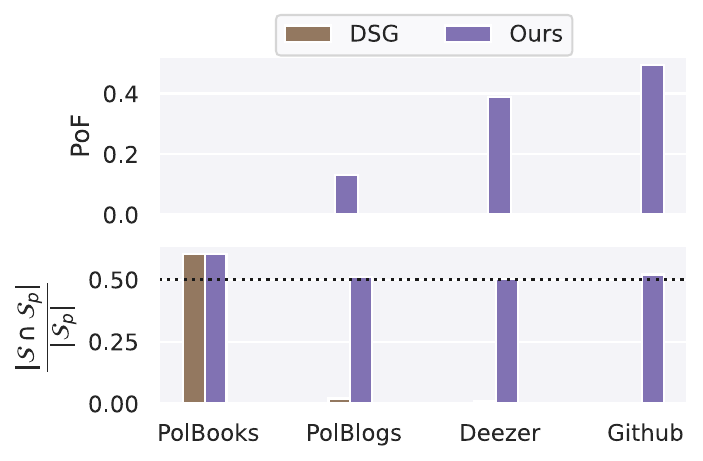}
\caption{Rest of the datasets. Bottom: \textsc{PoF} of FADSG-II for $\delta=1$. Top: Comparing FADSG-II (Ours) for $\delta=1$ with DSG. 
Fraction of the protected subset that belongs in induced subsets.
}
\label{fig:compare:2:rest}
\end{center}
\end{figure*}

\begin{figure*}[ht!]
\begin{center}
\subfigure{\label{fig:compare:lambdas:2:amazon_g}\includegraphics[height=2.9cm]{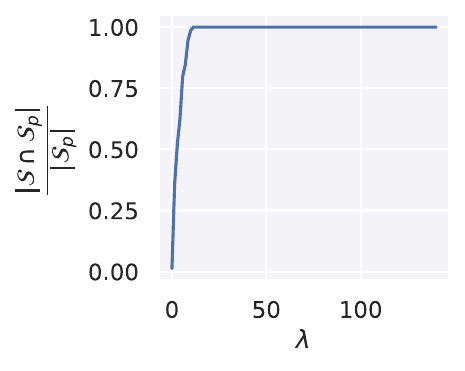}}
\subfigure{\label{fig:compare:lambdas:2:amazon:op}\includegraphics[height=2.9cm]{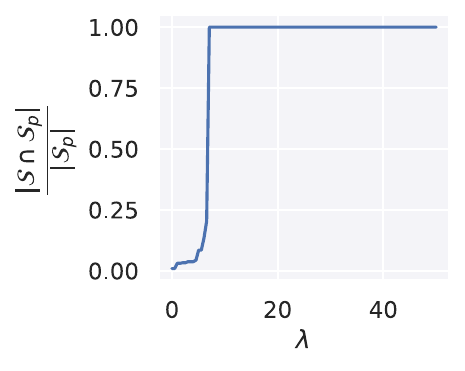}}
\caption{The fraction of all of the protected vertices
included in induced subsets as function of $\lambda$, for 
FADSG-II. (Left: \textsc{Amazon g} - Right: \textsc{Amazon op}).}
\label{fig:compare:2:lambdas:amazon}
\end{center}
\end{figure*}

\begin{figure*}[ht!]
\begin{center}
\includegraphics[height=2.9cm]{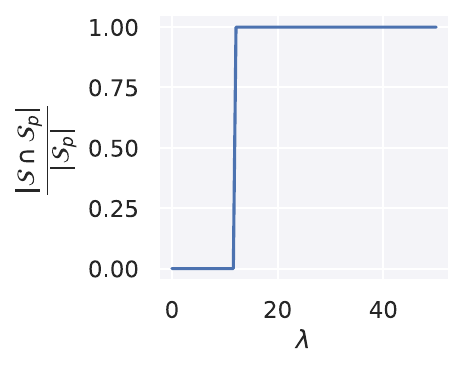}
\includegraphics[height=2.9cm]{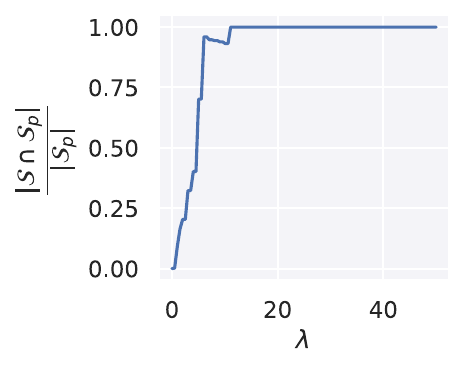}
\caption{Fraction of all of the protected vertices that belong in the induced subset, as function of $\lambda$, for 
FADSG-II. (Left: \textsc{LastFM-4} - Right: \textsc{LastFM-14}).}
\label{fig:compare:2:lambdas:lastfm}
\end{center}
\end{figure*}

\clearpage

\subsection{Resulting proportions of protected vertices in FADSG-I solution} \label{app:results_S_Sp}

\begin{table*}[h!]
\caption{Resulting proportions of protected vertices in FADSG-I solution $\setS$. $n_p$ the total number of protected vertices, $|\setS\cap \setS_p|$ the number of protected vertices in $\setS$, $\frac{|\setS\cap \setS_p|}{|\setS|}$ the proportion of protected vertices in $\setS$, and $\frac{|\setS\cap \setS_p|}{|\setS_p|}$ the proportion among all of the protected vertices that belong in $\setS$.}
\begin{center}
\begin{small}
\begin{sc}   
\setlength{\tabcolsep}{1.5mm}
\begin{tabular}{ccccc|ccccc}
\toprule
\multicolumn{5}{c|}{} & \multicolumn{5}{c}{LastFM Dataset} \\
Dataset & $n_p=|\setS_p|$ & $|\setS\cap \setS_p|$ & $\frac{|\setS\cap \setS_p|}{|\setS|}$ & $\frac{|\setS\cap \setS_p|}{|\setS_p|}$ & Protected Set & $n_p=|\setS_p|$ & $|\setS\cap \setS_p|$ & $\frac{|\setS\cap \setS_p|}{|\setS|}$ & $\frac{|\setS\cap \setS_p|}{|\setS_p|}$ \\
\midrule
PolBooks & 42 & 26 & 1.0 & 0.6046 & 0 & 1098 & 62 & 1.0 & 0.0565 \\
PolBlogs & 586 & 117 & 1.0 & 0.1997 & 1 & 54 & 10 & 1.0 & 0.1852 \\
Amazon b & 83 & 10 & 1.0 & 0.1205 & 2 & 73 & 26 & 1.0 & 0.3562 \\
Amazon ps & 230 & 17 & 1.0 & 0.0739 & 3 & 515 & 43 & 1.0 & 0.0835 \\
Amazon is & 1495 & 134 & 1.0 & 0.0896 & 4 & 16 & 16 & 1.0 & 1.0000 \\
Amazon hpc & 1352 & 25 & 1.0 & 0.0185 & 5 & 391 & 44 & 1.0 & 0.1125 \\
Amazon op & 462 & 8 & 1.0 & 0.0173 & 6 & 655 & 46 & 1.0 & 0.0702 \\
Amazon ah & 8163 & 62 & 1.0 & 0.0076 & 7 & 82 & 14 & 1.0 & 0.1707 \\
Amazon acs & 987 & 35 & 1.0 & 0.0355 & 8 & 468 & 35 & 1.0 & 0.0748 \\
Amazon g & 5376 & 65 & 1.0 & 0.0121 & 9 & 58 & 13 & 1.0 & 0.2241 \\
Amazon so & 604 & 16 & 1.0 & 0.0265 & 10 & 1303 & 69 & 1.0 & 0.0530 \\
Amazon tmi & 520 & 17 & 1.0 & 0.0327 & 11 & 138 & 40 & 1.0 & 0.2899 \\
Twitch de & 5742 & 382 & 1.0 & 0.0665 & 12 & 57 & 11 & 1.0 & 0.1930 \\
Twitch engb & 3888 & 34 & 1.0 & 0.0874 & 13 & 63 & 25 & 1.0 & 0.3968 \\
Twitch es & 1360 & 160 & 1.0 & 0.1176 & 14 & 570 & 59 & 1.0 & 0.1035 \\
Twitch fr & 2415 & 287 & 1.0 & 0.1188 & 15 & 257 & 52 & 1.0 & 0.2023 \\
Twitch ptbr & 661 & 107 & 1.0 & 0.1619 & 16 & 254 & 24 & 1.0 & 0.0945 \\
Twitch ru & 1075 & 85 & 1.0 & 0.0791 & 17 & 1572 & 79 & 1.0 & 0.0503 \\
Deezer & 12538 & 32 & 1.0 & 0.0026 & 100 & 387 & 51 & 1.0 & 0.1318 \\
Github & 9739 & 186 & 1.0 & 0.0191 & & & & & \\
Twitter & 11355 & 156 & 1.0 & 0.0137 & & & & & \\
\bottomrule
\end{tabular}
\end{sc}
\end{small}
\end{center}
\label{tab:results_S_Sp}
\end{table*}

In the experiments presented in Table~\ref{tab:results_S_Sp}, we set the target fairness level $\alpha = \frac{|\mathcal{S}\cap\mathcal{S}_p|}{|\mathcal{S}|}=1$ in FADSG-I to extract the densest protected subset, and examined what fraction of the entire protected set it constitutes; i.e., if $\setS$ is the densest protected subset obtained using FADSG-I, we evaluate the value of $\frac{|\setS\cap\setS_p|}{|\setS_p|}$.
As we can see from these tables, there are many cases where the solution comprises a small proportion of protected vertices (it can be even less than $5\%$ - see \textsc{Github}). Note that this stems from the structure of the protected set, since the solution that we get is always the densest subset of the protected set. Hence, FADSG-I may not be able to extract a dense subgraph that contains even $50\%$ of the protected set - motivating the need, in some cases, for FADSG-II.

\end{document}